\theoremstyle{definition}
\newtheorem{theorem}{Theorem}[section]
\newtheorem{proposition}{Proposition}[section]
\newtheorem{lemma}{Lemma}[section]
\numberwithin{equation}{section}
\newcommand{\mca}{{\mathcal{A}}}
\newcommand{\mcb}{{\mathcal{B}}}
\newcommand{\mcg}{{\mathcal{G}}}
\def\bs{\boldsymbol}
\def\mbb{\mathbb}
\def\mcal{\mathcal}
\newcommand{\F}{\mathbb{F}} 
\newcommand{\Prob}{\mathbb{P}} 
\newcommand{\E}{\mathbb{E}} 
\newcommand{\val}{v} 
\def\kb{{\bar{k}}}
\newcommand{\argmax}[1]{\underset{#1}
{\operatorname{arg}\,\operatorname{max}}\;}
\newcommand{\argmin}[1]{\underset{#1}
	{\operatorname{arg}\,\operatorname{min}}\;}
\title{A General Lotto game with asymmetric budget uncertainty}
\author{Keith Paarporn, Rahul Chandan, Mahnoosh Alizadeh, Jason R. Marden}
\date{\small Department of Electrical and Computer Engineering, University of California Santa Barbara \\ {\tt\small kpaarpor@uccs.edu}, {\tt\small rchandan@ucsb.edu}, {\tt\small alizadeh@ucsb.edu}, {\tt\small jrmarden@ece.ucsb.edu} }
\begin{document}

\maketitle

\abstract{The General Lotto game is a popular variant of the famous Colonel Blotto game, in which two opposing players allocate limited resources over many battlefields. In this paper, we consider incomplete and asymmetric information formulations regarding the resource budgets of the players. In particular, one of the player's resource budget is common knowledge while the other player's is private. We provide complete equilibrium characterizations in the scenario where the private resource budget is drawn from an arbitrary Bernoulli distribution. We then show that these characterizations can be used to analyze a multi-stage resource assignment problem where a commander must decide how to assign resources to sub-colonels that compete against opponents in separate General Lotto games. While optimal deterministic assignments have been characterized in the literature, we broaden the context by deriving optimal (Bernoulli) randomized assignments, which induce asymmetric information General Lotto games to be played. We demonstrate that randomizing can offer a four-fold improvement in the commander's performance over deterministic assignments. }

\maketitle

\section{Introduction}\label{sec1}

Informational asymmetries play a significant role in determining the outcome of a competition. For example, political parties rely on leveraging voter data for election campaigning. Firms utilize customers' data and preferences to provide product recommendations and targeted advertisements. The firms may have incomplete and asymmetric information about the research or marketing capabilities of competing firms. Obtaining informational advantages is also a high priority in defense and intelligence operations. Clearly, how a competitor should strategize under asymmetric information environments is a central question that arises across a variety of domains. 

In this paper, we consider incomplete and asymmetric information formulations of a  competitive resource allocation model known as the General Lotto game \cite{Hart_2008,Kovenock_2020}. The General Lotto game is a popular variant of the classic Colonel Blotto game, in which two opposing players strategically allocate limited resources over a number of valuable battlefields \cite{Borel,Gross_1950,Roberson_2006,Golman_2009,Kovenock_handbook_2012,Kovenock_2020}. The objective for both players is to accumulate as much value as possible by securing battlefields. In order to secure an individual battlefield, one must send a higher amount of resources to that battlefield than the other player. In our formulations, we consider a General Lotto game where one of the player's resource budget is drawn from a Bernoulli distribution, and the other player is uninformed about the true realization. The primary contribution of this paper is the full derivation of equilibrium payoffs and strategies to this incomplete and asymmetric information game. 

We then apply these equilibrium characterizations to a multi-stage resource assignment problem originally proposed by Kovenock and Roberson in \cite{Kovenock_2012}, with several subsequent studies based on this model \cite{Gupta_2014a,Gupta_2014b,Chandan_2020,Chandan_Concession}. Here, a high-level commander faces two opponents on two separate fronts of battlefields. The commander's task is to assign limited or costly resources to two independent sub-colonels, wherein they locally engage in a General Lotto game against its opponent. The objective of the commander is to maximize the cumulative payoffs that the sub-colonels achieve in equilibrium from their respective General Lotto games. The problem of how to assign resources among multiple fronts is highly relevant in a multitude of contexts, given that a single centralized decision-maker is often unable to make entire allocation decisions due to physical or computational constraints. Indeed, many large-scale organizations operate with  hierarchical decision-making structures, where an executive or director delegates authority or endows resources to local decision-makers. 

It is important to note that the commander's assignment strategy determines the informational environment of the local General Lotto games. In particular, a deterministic assignment will induce a complete information game, wherein the opponents know the sub-colonels' assigned resources.  This is the setting featured in Kovenock's original model \cite{Kovenock_2012}. On the other hand, a randomized assignment induces one-sided incomplete and asymmetric information  modeled by our formulations. By leveraging our equilibrium characterizations, we derive the optimal \emph{randomized} assignments, and demonstrate that they can offer a four-fold performance improvement over optimal deterministic assignments.


\subsection{Contributions}

A graphical summary of our main contributions is given in Figure \ref{fig:setup}. A more detailed summary is given below. 

\vspace{2mm}

\noindent{\bf Equilibrium characterizations of Bernoulli Lotto games.} To the best of our knowledge, General Lotto games with one-sided budget uncertainty have not been considered in the existing literature. Our primary contributions in this paper are complete equilibrium characterizations for \emph{Bernoulli General Lotto} games (Theorem \ref{thm:BNE_regions}) -- one player has private information about its budget endowment drawn from a Bernoulli distribution, while the other player's budget endowment is common knowledge. We emphasize the novelty of these solutions: primarily, games with \emph{symmetric} private information on resource budgets have been considered in literature \cite{Adamo_2009}.

\begin{figure}
	\centering
	\includegraphics[scale = .15]{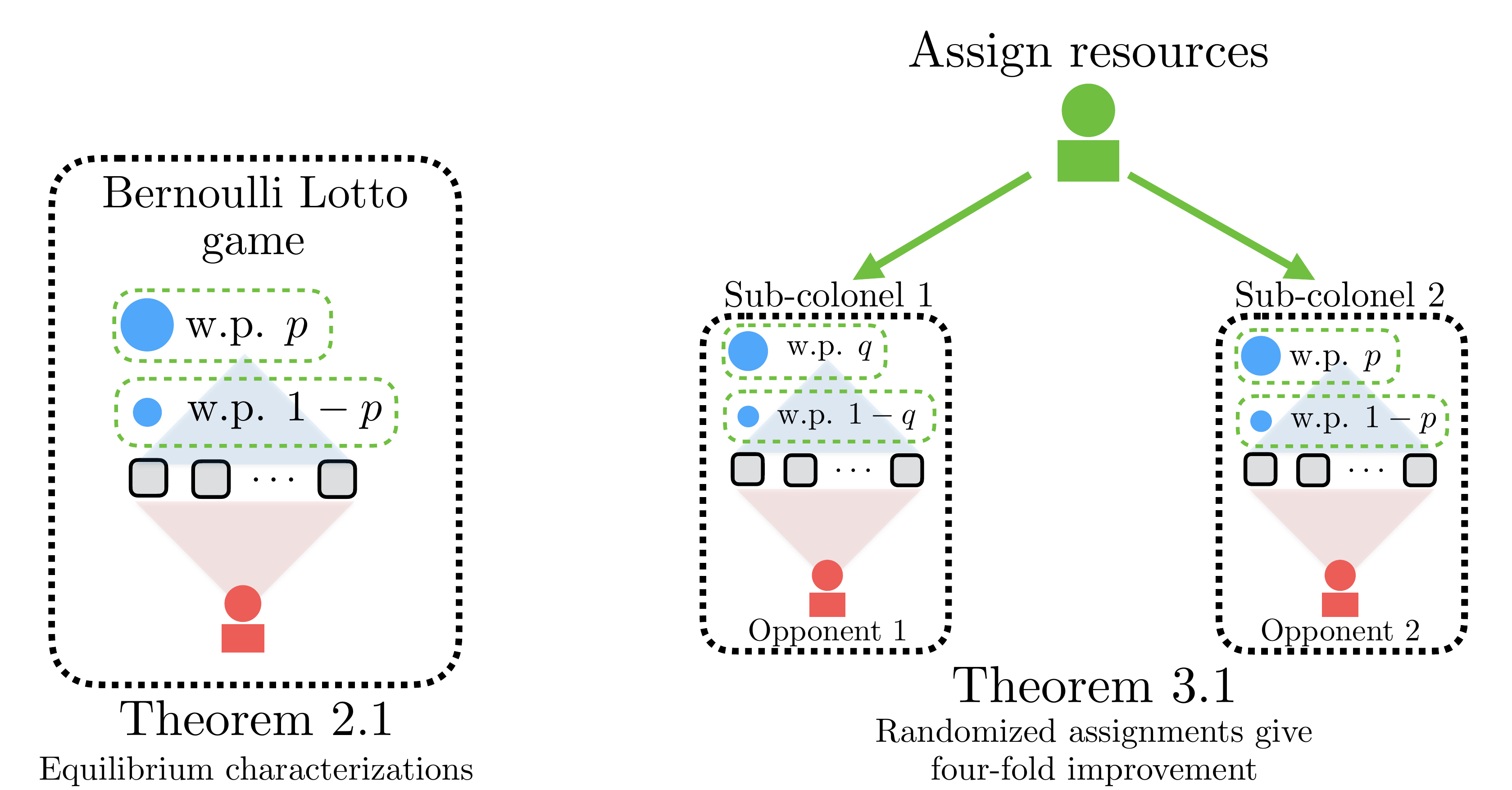}
	\caption{ (Left) In the Bernoulli Lotto game, one player's resource budget is drawn at random from a Bernoulli distribution and becomes private information. The other player's budget is common knowledge. A main contribution of this paper is the complete equilibrium characterizations under this asymmetric information environment (Theorem 2.1). (Right) The resource assignment problem. Here, a commander is responsible for assigning resource budgets to two sub-colonels, who subsequently engage in a Lotto game with its respective opponent. Randomized assignments induce asymmetric information environments between sub-colonels and opponents. The second main contribution of this paper is the complete characterization of optimal randomized assignments, and quantifying a four-fold improvement over deterministic assignments.}
	\label{fig:setup}
\end{figure}

We then compare the performance of the player with the randomized budget to a benchmark complete information scenario in which it uses its average budget deterministically (Section \ref{sec:sims}). Due to the randomized budget constraints placed in the Bernoulli Lotto game and the linearity of expectations, it is immediate that the player with randomized budget performs \emph{worse} compared to the corresponding benchmark scenario (Proposition \ref{prop:CI_better}). As such, we investigate the extent to which a randomized budget hurts performance in comparison to the benchmark scenario, through analytical and numerical methods (Proposition \ref{prop:zeta_p})\footnote{An alternate comparison in which the uninformed player receives a noisy signal about the private budget type may also be considered. Compared to this scenario, it must be the case that the player with private budget type performs better when no signal is sent at all. We leave a full investigation of such comparisons to future work.}. This observation does not hold when there are marginal costs associated with expending resources (instead of use-it or lose-it budgets), which we establish in our results on the resource assignment problem.

\vspace{2mm}

\noindent{\bf The application to multi-stage resource assignment problems.} We apply our equilibrium solutions to Bernoulli Lotto games to the resource assignment problem first proposed by Kovenock \cite{Kovenock_2012}. The assignment problem unfolds in the following stages.

\noindent{$\bullet$} \textbf{Stage 1}: A commander assigns limited or costly resources to two sub-colonels $\mca_1$ and $\mca_2$ that are to engage in competition with opponents $\mcb_1$ and $\mcb_2$, respectively.

\noindent{$\bullet$} \textbf{Stage 2}: The opponents $\mcb_1$ and $\mcb_2$ observe the commander's assignment strategy, and each decide how many costly resources to invest in.

\noindent{$\bullet$} \textbf{Stage 3}: Sub-colonel $\mca_i$ ($i = 1,2$) engages in a General Lotto game with opponent $\mcb_i$, using resource budgets determined from the decisions in stage 1 and 2. 

The objective of the commander is to maximize the resulting cumulative payoff awarded to the sub-colonels in their respective General Lotto games. The commander's assignment strategy in stage 1 can be classified as either \emph{deterministic} or \emph{randomized}. In a deterministic assignment, the sub-colonels' assigned budgets become common knowledge, and then two (complete information) General Lotto games are played according to their equilibrium outcomes in stage 3. Our equilibrium characterizations to Bernoulli Lotto games enable us to analyze classes of \emph{randomized} assignment strategies. That is, if the commander's assignment strategy is a Bernoulli distribution on resource assignments, the opponents remain uncertain about the sub-colonels' true endowments. This induces asymmetric information environments in the subsequent stage 3 General Lotto games.


This necessitates the question: how much can the commander improve its performance by randomizing its assignment compared to the optimal deterministic assignments? Our contributions here establish that the commander can obtain up to a four-fold performance improvement by using randomized assignment strategies in comparison to deterministic assignment strategies, in scenarios where utilizing resources is costly on both sides (Theorem \ref{thm:W_theorem}).


\subsection{Related works}

The primary literature on Colonel Blotto and General Lotto games focuses on deriving equilibria under the assumption of complete information. That is, all parameters -- player budgets and battlefield valuations -- are common knowledge. Deriving equilibria to Colonel Blotto games is known to be a fundamentally challenging task. Several works over the last 100 years have advanced the level of generality for equilibrium solutions \cite{Borel,Gross_1950,Roberson_2006,Kvasov_2007,Roberson_2012,Schwartz_2014,Macdonell_2015,Thomas_2018,Kovenock_2020}. However, analytical solutions to arbitrary settings of Colonel Blotto games are still an open problem. For this reason, researchers often study the General Lotto game, a relaxation of the Colonel Blotto game where players' budget expenditure only needs to hold in expectation \cite{Bell_1980,Myerson_1993,Hart_2008,Kovenock_2020,Vu_EC2021}. It is more analytically tractable while maintaining essential aspects of competitive resource allocation. The General Lotto game, along with several other variants, is often adopted to study more complex scenarios such as network security \cite{Fuchs_2012,Shahrivar_2014,Guan_2019,Shishika_2021} and the security of cyber-physical systems \cite{Gupta_2014a,Ferdowsi_2020}.

Few contributions in the Blotto and Lotto literature have shifted the focus away from complete information settings \cite{Adamo_2009,Ewerhart_2021,Paarporn_2022_LCSS}. This is in contrast to a significant body of work on incomplete information all-pay auctions \cite{Siegel_2014,Szech_2011,Einy_2017,amann1996asymmetric}. There is an intimate connection between the structures of equilibrium strategies in all-pay auctions and Colonel Blotto and General Lotto games \cite{Roberson_2006,Kovenock_2020,Vu_EC2021}. The primary difference is that in Blotto and Lotto games, players have a fixed, use-it-or-lose-it resource budget at their disposal. In an all-pay auction, competitors pay a per-unit cost for their investment decisions with no upper bound on how much they can bid.

The first main result of this manuscript contributes to a growing literature on Blotto and Lotto games with incomplete information. Similar to our budget uncertainty setting, the works \cite{Adamo_2009} and \cite{Kim_2017} study settings where players have incomplete information about each other's resource endowments. That is, the players' resource endowments are random variables and each player holds private information only about their own endowment. The model of \cite{Adamo_2009} consider the players to be equally uninformed (no information asymmetry) about the budget parameters, and hence symmetric Bayes-Nash equilibria are identified. The work of  \cite{Kim_2017} provides pure-strategy equilibria to an asymmetric information Lottery Blotto game, a variant that generally admits pure  equilibria \cite{Friedman_1958,Robson_2005,Kovenock_2019_Lottery,Vu_2020thesis}. Uncertainty about battlefield valuations has also been featured recently in the literature. In \cite{Kovenock_2011,Ewerhart_2021}, each player's valuations of the battlefields are independently drawn from a common joint distribution. Here, players are equally uninformed about the other players' valuations, and hence symmetric Bayes-Nash equilibria are identified. Asymmetric information about the battlefields' valuations is investigated in \cite{Paarporn_2019,Paarporn_2022_LCSS}.


The second main result of this manuscript pertains to a well-studied multi-stage resource assignment problem in the Blotto and Lotto literature \cite{Kovenock_2012,Gupta_2014a,Gupta_2014b,Chandan_2020}. Kovenock and Roberson \cite{Kovenock_2012} first formulated the framework of the assignment problem in the context of Colonel Blotto games \cite{Kovenock_2012}, where they derived optimal \emph{deterministic} assignment strategies against two opponents. Our work broadens this framework to randomized assignment strategies. Indeed, the focus of \cite{Kovenock_2012} was to determine when opponents could benefit from unilateral transfers of resources.  Several follow-up studies based on this framework have since appeared. The work in \cite{Gupta_2014a,Gupta_2014b} considered a setting where the two opponents can decide to add battlefields in addition to transferring resources amongst each other. Counter-intuitively, their model showed that the team of opponents can achieve better overall performance if the transfers are made public. Alternatively, the work of \cite{Chandan_2020,Chandan_Concession} studied settings where one of the opponents pre-commits resources before play, which effectively can alter the optimal assignment strategy such that the opponent benefits in subsequent interactions. In these works, deterministic assignments are considered and hence randomized assignments do not play a role. Going beyond the one-vs-two framework that is common in the Blotto literature, \cite{cortes2018generalising} considers generalized multi-player conflicts on arbitrary networks, where winning probabilities are determined by a contest success function \cite{skaperdas1996contest,Robson_2005}.

\section{Equilibrium characterizations}

We first introduce the classic General Lotto game with complete information. We then formulate our model of one-sided incomplete and asymmetric information on resource budgets, which we term ``Bernoulli General Lotto" games.

\subsection{General Lotto games}

A (complete information) General Lotto game consists of two players, $\mca$ and $\mcb$. Each player is tasked with allocating their resource budgets $A, B > 0$ across a set of $n$ battlefields. Each battlefield has an associated value $\val_j > 0$, $j\in[n]$.  An allocation for $\mca$ is any vector $\bs{x}_{\mca} \in \mbb{R}_{\geq 0}^n$, and similarly for $\mcb$. An admissible strategy for $\mca$ is a randomization $F_\mca$ over allocations such that the expended resources do not exceed the budget $A$ in expectation. Specifically, $F_\mca$ is an $n$-variate (cumulative) distribution that belongs to the family
\begin{equation}\label{eq:LC}
    \F(A) \triangleq \left\{ F : \E_{\bs{x}_{\mca}\sim F}\left[\sum_{j=1}^n x_{\mca,j}\right] \leq A \right\}.
\end{equation}
and similarly, $F_\mcb \in \F(B)$. Given a strategy profile $(F_\mca,F_\mcb)$, the utility of player $\mca$ is 
\begin{equation}\label{eq:Lotto_payoff}
    u_\mca(F_\mca,F_\mcb) \triangleq \E_{\substack{\bs{x}_{\mca}\sim F_\mca \\ \bs{x}_{\mcb}\sim F_\mcb}}\left[\sum_{j=1}^n v_j \cdot  \mathds{1}\{x_{\mca,j} > x_{\mcb,j} \}  \right]
\end{equation}
where $\mathds{1}\{\cdot\}$ is 1 if the statement in the bracket is true, and 0 otherwise\footnote{An arbitrary tie-breaking rule may be selected, without changing our results. This is generally true in General Lotto games \cite{Kovenock_2020}. For simplicity, we will assume ties are awarded to player $\mcb$.}. It follows that the utility of player $\mcb$ is
\begin{equation}
    u_\mcb(F_\mca,F_\mcb) \triangleq \|\bs{v}\|_1 - u_\mca(F_\mca,F_\mcb)
\end{equation}
An instance of the complete information General Lotto game is denoted by $\text{GL}(A,B;\bs{v})$. The equilibrium characterization of General Lotto games is well-established in the literature \cite{Hart_2008,Kovenock_2020}. The equilibrium payoff to $\mca$ in $\text{GL}(A,B;\bs{v})$ is given as follows:
\begin{equation}\label{eq:Lotto_CI}
	\pi^{\text{CI}}_\mca({A},B;\bs{v}) \triangleq \|\bs{v}\|_1\cdot
    	\begin{cases}
    	\frac{{A}}{2B}, &\text{if }  {A} < B \\
    	1 - \frac{B}{2{A}}, &\text{if } {A} \geq B \\
	\end{cases}
\end{equation}
and the equilibrium payoff of player $\mcb$ is simply $\|\bs{v}\| - \pi_\mcb^\text{CI}({A},B;\bs{v})$.

\subsection{Bernoulli General Lotto games}

Before play, the budget of player $\mca$ is drawn according to a Bernoulli distribution. With probability $p \in [0,1]$, player $\mca$ is endowed with a high budget $A^h \geq 0$, and with probability $1-p$, it is endowed with a low budget $A^\ell \geq 0$, where $A^\ell \leq A^h$. We denote this Bernoulli distribution as $\Prob_\mca = (A^h,A^\ell,p)$. The realized budget is private information to player $\mca$. An admissible action is thus a pair of strategies $\vec{F}_\mca = \{F^h,F^\ell\} \in \F(A^h) \times \F(A^\ell)$, where $F^h$ or $F^\ell$ is implemented conditional on which budget is realized. Player $\mcb$ does not observe the true realization, and thus selects a single strategy $F_\mcb \in \F(B)$ to implement regardless of which of $\mca$'s budget is realized. Without loss of generality moving forward, we assume the total sum of battlefield values is normalized to $\|\bs{v}\|_1 = 1$. Given a strategy profile $(\vec{F}_\mca,F_\mcb)$, the ex-ante expected utility to $\mca$ and $\mcb$ are:
\begin{equation}\label{eq:Bayesian_payoff}
    \begin{aligned}
        U_\mca(\vec{F}_\mca,F_\mcb) &\triangleq p\cdot u_\mca(F_\mca^h,F_B) + (1-p)\cdot u_\mca(F_\mca^\ell,F_B) \\
        U_\mcb(\vec{F}_\mca,F_\mcb) &\triangleq 1 - U_\mca(\vec{F}_\mca,F_\mcb)
    \end{aligned}
\end{equation}
We will refer to the simultaneous-move game with the above expected utilities as a \emph{Bernoulli General Lotto game}, and denote an instance with $\text{BL}(\Prob_\mca,B)$. In the case that the support of $\Prob_\mca$ is a singleton, it becomes a game of complete information.  When this holds, we will simply refer to this setting as a \emph{General Lotto} game, denoted $\text{GL}(A,B)$. We drop the vector of battlefield values $\bs{v}$ from the set of parameters for compactness. A strategy profile $(\vec{F}_\mca^*,F_\mcb^*)$ is an \emph{equilibrium} of $\text{BL}(\Prob,B)$ if
\begin{equation}
	\begin{aligned}
		U_\mca(\vec{F}_\mca^*,F_\mcb^*) \geq U_\mca(\vec{F}_\mca,F_\mcb^*) \quad \text{and} \quad
		U_\mcb(\vec{F}_\mca^*,F_\mcb^*) \geq U_\mcb(\vec{F}_\mca,F_\mcb)
	\end{aligned}
\end{equation}
for any $\vec{F}_\mca \in \F(A^h) \times \F(A^\ell)$ and $F_\mcb \in \F(B)$.

\subsection{Equilibrium payoffs in Bernoulli General Lotto games}\label{sec:BL}


We present the unique equilibrium payoffs for every instance of Bernoulli General Lotto games. Throughout, we will denote $\bar{A} = pA^h + (1-p)A^\ell$ as the expected resource endowment of player $\mca$ associated with $\Prob_\mca$. Player $\mcb$ has a resource budget of $B \geq 0$.   The complete characterization of the equilibrium payoffs is given in the result below. 


\begin{theorem}\label{thm:BNE_regions}
	Consider any instance of the Bernoulli General Lotto game, $\text{BL}(\Prob_\mca,B)$. The equilibrium payoff to player $\mca$ is
	\begin{equation}\label{eq:piA}
	    \begin{aligned}
    	&\pi_\mca(\Prob_\mca,B) \triangleq \\ 
    	&
    	\begin{cases}
    		\frac{\bar{A}}{2B}, &(A^h,A^\ell) \in \mathcal{R}_1 \\
    		1-\frac{B}{2\bar{A}}, &(A^h,A^\ell) \in \mathcal{R}_2 \\
    		p + (1-p)\left(1 - \frac{B}{2 A^\ell}\right), &(A^h,A^\ell) \in \mathcal{R}_3 \\
    		p + (1-p)\frac{A^\ell}{2B}, &(A^h,A^\ell) \in \mathcal{R}_4 \\
    		p + (1-p)\frac{A^\ell}{A^h} + \frac{\sqrt{\bar{A}(\bar{A} - pA^h)}}{A^h} - \frac{B\left(\sqrt{(1-p)A^\ell} + \sqrt{\bar{A}}\right)^2}{2 (A^h)^2}, &(A^h,A^\ell) \in \mathcal{R}_5 \\
    	\end{cases}
    	\end{aligned}
	\end{equation}
	where $\mathcal{R}_k$, $k=1,\ldots,5$, are disjoint subsets of $\mcal{R} \triangleq \{(A^h,A^\ell)\in\mathbb{R}_{\geq 0}^2 : A^h \geq A^\ell\}$, given by
	\begin{equation}\label{eq:regions}
		\begin{aligned}
			\mathcal{R}_1 &\triangleq \left\{ (A^h,A^\ell)\in\mcal{R} : \bar{A} \leq B \right\} \setminus \mcal{R}_5 \\
			\mathcal{R}_2 &\triangleq \left\{ (A^h,A^\ell)\in\mcal{R} : \bar{A} \geq B \text{ and } A^\ell \geq \frac{1-p}{2-p}A^h \right\} \\
			\mathcal{R}_3 &\triangleq \left\{ (A^h,A^\ell)\in\mcal{R} : A^h \geq \left(2 + \frac{p}{1-p}\right) B \text{ and } 1 \leq A^\ell \leq \frac{1-p}{2-p}A^h \right\} \\
			\mathcal{R}_4 &\triangleq \left\{ (A^h,A^\ell)\in\mcal{R} : A^h \geq \left(2 + \frac{p}{1-p}\right) B \text{ and } \frac{pB^2}{(1-p)(A^h - 2B)} \leq A^\ell \leq B \right\} \\
			\mathcal{R}_5 &\triangleq \left\{ (A^h,A^\ell)\in\mcal{R} : A^\ell \leq B\cdot H(A^h/B)  \right\} \\
		\end{aligned}
	\end{equation}
	with the function $H(a)$ defined as
	\begin{equation}\label{eq:G}
		H(a) \triangleq
		\begin{cases}
			0, &\text{if } a \in [0,1) \\
			\frac{p(a-1)^2}{(1-p)(2-a)}, &\text{if } a \in [1,2-p] \\ \frac{1-p}{2-p}a, &\text{if } a \in \left(2-p,2+\frac{p}{1-p}\right] \\ \frac{p}{(1-p)\left( a-2\right)}, &\text{if } a > 2 + \frac{p}{1-p}
		\end{cases}
	\end{equation}
	The equilibrium payoff to player $\mcb$ is given by $\pi_\mcb(\Prob_\mca,B) = 1 - \pi_\mca(\Prob_\mca,B)$.
\end{theorem}
Observe that the equilibrium payoffs do not depend on the values of any individual battlefields, and the result simply generalizes to $\|\bs{v}\|_1 \cdot \pi_\mca$ when the total value of all battlefields is not normalized to 1. 


We note that the payoff $\pi_\mca$ in regions $\mcal{R}_1$ and $\mcal{R}_2$ coincides with the payoff of the corresponding benchmark complete information game $\text{GL}(\bar{A},B)$ \eqref{eq:Lotto_CI} in which player $\mca$ utilizes its expected endowment $\bar{A}$ deterministically. Indeed, these payoffs are well-known from the literature \cite{Hart_2008,Kovenock_2020}.

Figure \ref{fig:regions} depicts the five disjoint regions of the parameter space for which \eqref{eq:piA} is defined. We devote Appendix \ref{sec:APA} to the proof of Theorem \ref{thm:BNE_regions}, which also details the players' equilibrium strategies. To establish the result, we begin by first deriving a connection to two-player all-pay auctions with incomplete and asymmetric information, and leverage\footnote{The equilibrium solutions of complete information all-pay auctions have been leveraged to completely characterize the equilibria of complete information General Lotto games \cite{Roberson_2006,Kovenock_2020,Vu_EC2021}} their equilibrium strategies  \cite{Siegel_2014} to derive a system of non-linear equations associated with the expected budget constraints \eqref{eq:LC}. We detail all solutions to this system in Proposition \ref{prop:SB_equil}, where we find that they can exist \emph{only for a subset of all possible underlying parameters}; in particular the sub-region $\mcal{R}_5$. We thus utilize other analytical techniques to fully characterize equilibria for all remaining parameters. We identify four disjoint regions $\mcal{R}_1$ - $\mcal{R}_4$ that constitute the remaining set of parameters, where each admits distinct structures in the equilibrium strategies (Appendix \ref{sec:other_regions}). The uniqueness of equilibrium payoffs follows from the BL game being constant-sum in ex-ante utilities.


%

\begin{figure}[t]
	\centering
	\includegraphics[scale = .2]{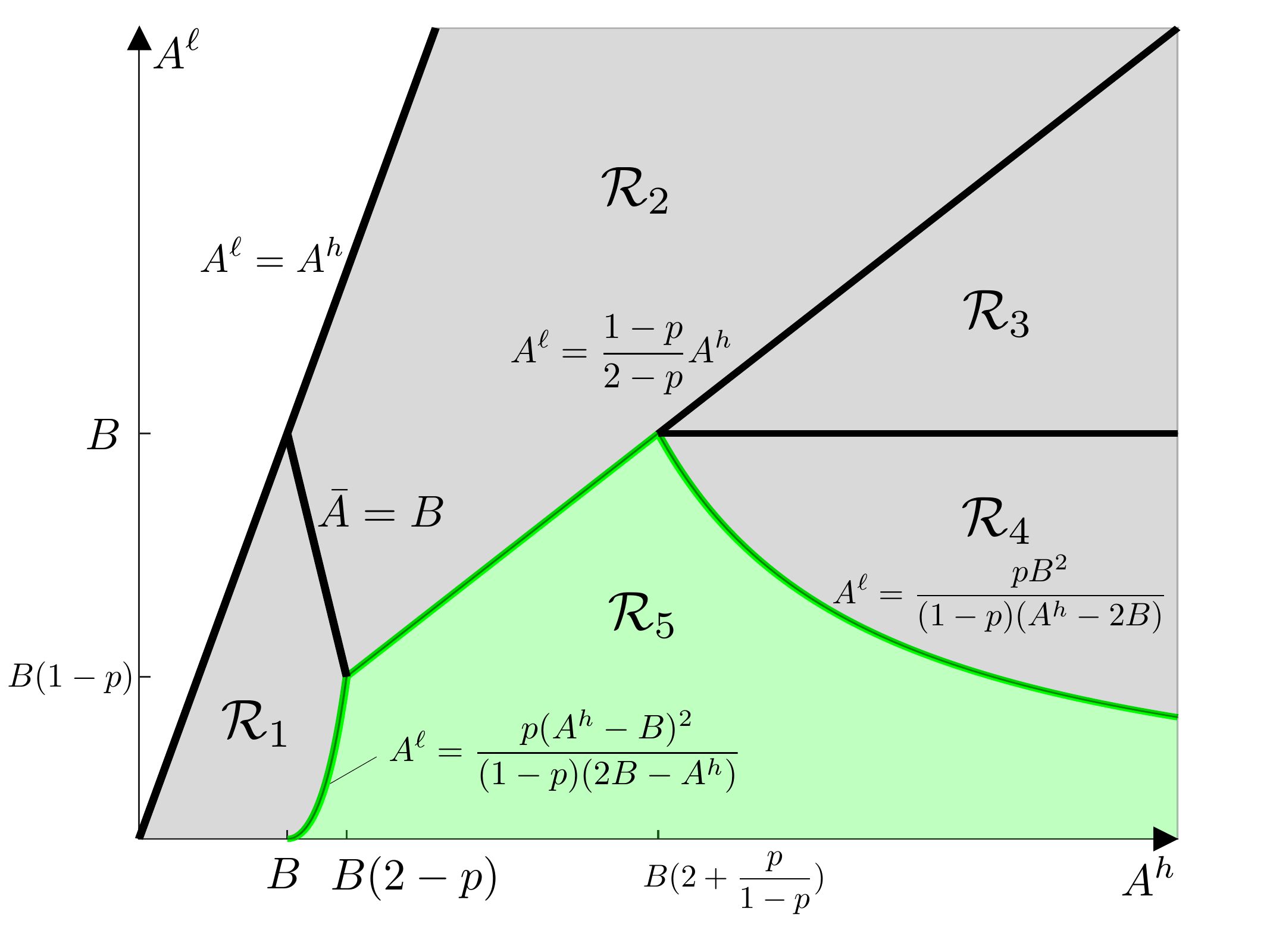}
	\caption{A diagram showing the five distinct parameter regions \eqref{eq:regions} that encompass the entire class of Bayesian Lotto games $\text{BL}(\Prob_\mca,B)$. Shown here is the space $A^h \geq A^\ell$ for a fixed $p$ and $B$.}
	\label{fig:regions}
\end{figure}

\subsection{Bernoulli Lotto games with per-unit costs}\label{sec:BL_costs}

In the preceding section, the equilibrium payoffs for Bernoulli Lotto games with fixed use-it or lose-it budgets were completely characterized. It is common in the literature to consider, instead of fixed budgets, a per-unit cost for players to expend resources. In the context of Bernoulli Lotto games, we consider each player having no budget limit on resources it can expend, but now impose cost terms in the utility functions. Specifically, with probability $p$, player $\mcal{A}$ has a cheap cost, $c^\text{ch} > 0$, and with probability $1-p$ has an expensive cost, $c^\text{ex}>0$, for which $c^\text{ch} \leq c^\text{ex}$. Player $\mcal{B}$ has a single cost type where its cost is $c > 0$. A feasible strategy for player $\mcal{A}$ is a pair $\vec{F}_\mcal{A} = \{F^\text{ch},F^\text{ex}\}$, where $F^\text{ch},F^\text{ex}$ are any distributions on $\mbb{R}_{\geq 0}^n$. Then, the ex-ante expected utility for player $\mca$ is now of the form
\begin{equation}
    \hat{U}_\mca(\vec{F}_\mca,F_\mcb) \triangleq U_\mca(\vec{F}_\mca,F_\mcb) - pc^\text{ch}\cdot\E_{F_A^\text{ch}}[\|\bs{x}_\mca\|_1] - (1-p)c^\text{ex}\cdot \E_{F_A^\text{ex}}[\|\bs{x}_\mca\|_1]
\end{equation}
Similarly, the ex-ante expected utility for player $\mcb$ is given by
\begin{equation}
    \hat{U}_\mcb(F_\mcb,\vec{F}_\mca) \triangleq U_\mcb(F_B,\vec{F}_\mca) - c\cdot\E_{F_\mcb}[\|\bs{x}_\mcb\|_1]
\end{equation}
where $U_\mca$ and $U_\mcb$ were defined in \eqref{eq:Bayesian_payoff}. The above utilities define a simultaneous-move game. We refer to an instance of this game as $\text{BLC}((c^\text{ch},c^\text{ex},p),c)$. 

Interestingly, it turns out that $\text{BLC}((c^\text{ch},c^\text{ex},p),c)$ coincides with $n$ simultaneous and independent two-player all-pay auctions with incomplete and asymmetric information, which have extensively been studied in the literature\footnote{The solution can be broken up into $n$ independent all-pay auctions, each corresponding to an individual battlefield. The associated cost for each player is divided by the valuation $v_j$.}. Indeed, \cite{Szech_2011} provides a more general solution in which both players can have up to two distinct cost types. Recognizing the structure between all-pay auctions and the BLC game, the characterization of of equilibrium payoffs is given below.

\begin{theorem}[\cite{Szech_2011,Siegel_2014}]
    Consider any instance of the Bernoulli Lotto game with costs, $\text{BLC}((c^\text{ch},c^\text{ex},p),c)$. The equilibrium payoff to player $\mca$ is
    \begin{equation}
        \begin{cases}
            0, &\text{if } p\frac{c^\text{ch}}{c} \geq 1 \\
            p(1-c^\text{ch}(\frac{p}{c} + \frac{1}{c^\text{ex}}(1 - \frac{c^\text{ch}}{c}p)), &\text{if } p\frac{c^\text{ch}}{c} < 1 \leq p\frac{c^\text{ch}}{c} + (1-p)\frac{c^\text{ex}}{c} \\
            1 - p\frac{c^\text{ch}}{c}, &\text{if } p\frac{c^\text{ch}}{c} + (1-p)\frac{c^\text{ex}}{c} < 1
        \end{cases}
    \end{equation}
    The equilibrium payoff to player $\mcb$ is
    \begin{equation}
        \begin{cases}
            1-\frac{c}{c^\text{ch}}, &\text{if } p\frac{c^\text{ch}}{c} \geq 1 \\
            1-p - \frac{c}{c^\text{ex}}(1-p\frac{c^\text{ch}}{c}), &\text{if } p\frac{c^\text{ch}}{c} < 1 \leq p\frac{c^\text{ch}}{c} + (1-p)\frac{c^\text{ex}}{c} \\
            0, &\text{if } p\frac{c^\text{ch}}{c} + (1-p)\frac{c^\text{ex}}{c} < 1
        \end{cases}
    \end{equation}
\end{theorem}

Even though the equilibrium solutions of $\text{BLC}((c^\text{ch},c^\text{ex},p),c)$ are immediate from the literature, we remark that there is \emph{not} a one-to-one correspondence of cost parameters $(c^\text{ch},c^\text{ex},c)$ to all possible tuples of expected budgets $(A^\text{ch},A^\text{ex},B) \in \mbb{R}_{\geq 0}^3$ expended in equilibrium. Recalling the diagram of Figure \ref{fig:regions}, the budget tuples that correspond to equilibrium solutions of all-pay auctions only arise in the green region, $\mcal{R}_5$. These results can also be derived directly using the methods outlined in Appendix \ref{sec:APA}. In the $\mcal{R}_5$ region, the high and low budget types are further apart compared to the $\mcal{R}_1$ and $\mcal{R}_2$ regions.

\section{Multi-stage resource assignment problem}\label{sec:commander_assignment}

In this section, we apply our equilibrium solutions of BL games to address a hierarchical resource assignment problem, which we term the ``commander assignment problem". 

\subsection{Commander assignment problem}

Here, we present a generalized formulation of the assignment problem originally proposed by Kovenock and Roberson \cite{Kovenock_2012}. A commander with a resource budget $\bar{A}_C > 0$ is responsible for assigning resources to two sub-colonels $\mca_1,\mca_2$ engaged in separate competitions against respective opponents $\mcb_1,\mcb_2$. The opponents have limited resource budgets $\bar{B}_1,\bar{B}_2 > 0$. The interaction unfolds in the following three-stage setup, which is also illustrated in Figure \ref{fig:CAP_stages}.

\noindent\textbf{Stage 1:} The commander chooses an assignment strategy $\Prob$, which is a distribution on the sub-colonels' resource assignments $(A_1,A_2) \in \mathbb{R}_{\geq 0}^2$. The strategy becomes common knowledge to the sub-colonels and the opponents. The commander pays the cost $c\cdot \E_\Prob[A_1 + A_2]$, where $c \geq 0$ is its per-unit cost for expenditure.

We will consider probability distributions $\Prob$ on assignments $(A_1,A_2) \in \mathbb{R}_{\geq 0}^2$ such that each marginal $\Prob_i$, $i=1,2$ has at most two values in its support\footnote{This class of randomized assignment strategies is considered due to all of our results for the underlying incomplete information General Lotto games applying to two-type budget uncertainty. It is of interest to extend these results to more than two randomized budget levels. We note there are analytical challenges associated with these extensions, which we detail in Appendix \ref{sec:APA}.}. Furthermore, we will consider assignment strategies that satisfy the commander's budget \emph{in expectation}:
\begin{equation}
	\mathbb{E}_\Prob[A_1+A_2] = \sum_{(A_1,A_2) \in \text{supp}(\Prob)} \Prob(A_1,A_2) \cdot (A_1 + A_2) \leq \bar{A}_C.
\end{equation}
Let us denote $\mcal{P}(\bar{A}_C)$ as the set of all such distributions $\Prob$ that satisfy the above condition.  As such, the marginal distribution $\Prob_i$ on assignments to sub-colonel $i$ is associated with a probability $p_i := \Prob_i(A_i^h)$ on a high endowment $A_i^h$, and the probability $1-p_i$ on a low endowment $A_i^\ell$, where $A_i^h \geq A_i^\ell$. We will often write the marginal as a tuple, $\Prob_i = (A_i^h,A_i^\ell,p_i)$, to explicitly represent this Bernoulli distribution. We say that $\Prob$ is a \emph{randomized} assignment if the support of at least one of its marginals is not a singleton, and that it is \emph{deterministic} if the supports of both marginals are singletons.

\vspace{1mm}

\begin{figure}
    \centering
    \includegraphics[scale=0.15]{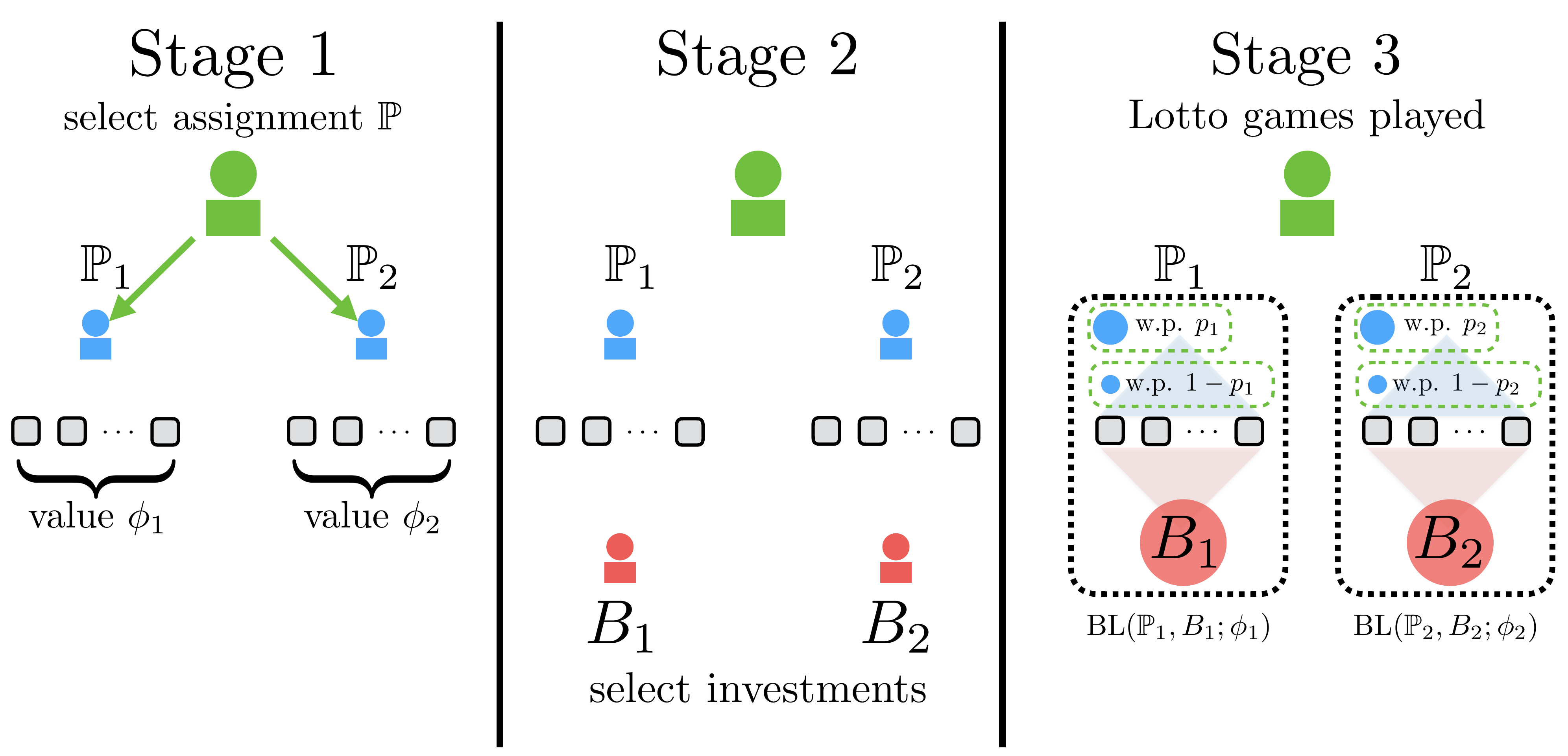}
    \caption{The three-stage commander assignment problem. In stage 1, the commander decides how to assign resources to two sub-colonels. The assignment strategy $\Prob$ is either randomized or deterministic. The commander pays a cost $c\cdot\E_\Prob[A_1+A_2]$ for the expected amount of assigned resources. Sub-colonel will use its assigned resources to compete over a set of battlefields of value $\phi_i$. In Stage 2, the two opponents observe the assignment strategy $\Prob$, but not the actual realizations (if randomized). Then, the opponents decide to invest in resource budgets $B_1,B_2$, for which they pay a cost $c_i B_i$. In Stage 3, two Bernoulli Lotto games are played between each sub-colonel and opponent using their endowed resources decided in Stages 1 and 2. The commander's payoff is the sum of sub-colonels' equilibrium payoffs in their respective games minus the cost to assign the resources from Stage 1 \eqref{eq:commander_U}. }
    \label{fig:CAP_stages}
\end{figure}

\noindent\textbf{Stage 2:} After observing the assignment strategy $\Prob$ from Stage 1, each opponent individually decides an amount of resources to invest in, ${B}_1 \leq \bar{B}_1$ and ${B}_2 \leq \bar{B}_2$. Each opponent $i$ will pay a cost $c_i {B}_i$, where $c_i > 0$ is the opponent's per-unit cost to invest in resources. It will use its invested resources to compete against sub-colonel $i$ on a set of battlefields with total value $\phi_i > 0$.

\vspace{1mm}

\noindent\textbf{Stage 3:} Two independent Bernoulli Lotto games are played simultaneously: $\mcg_1=\text{BL}(\Prob_1,{B}_1; \phi_1)$ between sub-colonel $\mcg_2=\mca_1$ and opponent $\mcb_1$, and $\text{BL}(\Prob_2,{B}_2; \phi_2)$ between sub-colonel $\mca_2$ and opponent $\mcb_2$. Here, $\phi_i > 0$ indicates the total value of the set of battlefields contested in $\mcg_i$. The final payoff that the commander obtains is given by the sub-colonels' cumulative equilibrium payoffs from $\mcg_1$ and $\mcg_2$ minus the costs of expenditure:
\begin{equation}\label{eq:commander_U}
    W(\Prob,\{{B}_i\}_{i=1,2}) \triangleq \phi_1\cdot\pi_\mca(\Prob_1,{B}_1) + \phi_2\cdot\pi_\mca(\Prob_2,{B}_2) - c\cdot\E_{\Prob}[A_1 + A_2]
\end{equation}
The final payoff that opponent $\mcb_i$ obtains is
\begin{equation}\label{eq:opponent_U}
    U_i(\Prob,{B}_i,c_i) \triangleq \phi_i\cdot\pi_\mcb(\Prob_i,{B}_i) - c_i {B}_i.
\end{equation}

We assume all parameters are common knowledge to all players. The above decision problem can be solved by backwards induction. Note that given decisions from the commander and opponents, the final payoffs in Stage 3 can be computed directly from Theorem \ref{thm:BNE_regions}. Thus, we first address the optimal investment decisions for the opponents in Stage 2, i.e. ones that maximize \eqref{eq:opponent_U} over $B_i \geq 0$. We will assume that when there are multiple maximizers of \eqref{eq:opponent_U}, player $\mcb_i$ chooses the smallest investment level among them. We then solve for the optimal assignment strategy for the commander in Stage 1, i.e. that maximizes \eqref{eq:commander_U} given that the opponents invest optimally. The full analysis of these arguments is provided in Appendices \ref{sec:deterministic_proofs} and \ref{sec:randomized_proofs}.

The above extensive-form game will be denoted as $\text{CAP}(\bar{A}_C,c,\{\bar{B}_i,c_i,\phi_i\}_{i=1,2})$, and we will denote $\text{CAP}_\text{d}(\bar{A}_C,c,\{\bar{B}_i,c_i,\phi_i\}_{i=1,2})$ as the extensive-form game where the commander is restricted to deterministic assignment strategies. We will consider two different settings on the parameters.
\begin{itemize}
	\item The \emph{fixed budget setting}. Here, there are zero costs associated with using resources, i.e. $c = c_i = 0$, and $\bar{A}_C$, $\bar{B}_i < \infty$. The commander assignment problem $\text{CAP}_\text{d}$ under this setting was first featured in \cite{Kovenock_2012}, where optimal deterministic assignments are provided.
	\item The \emph{per-unit cost setting}. Here, the commander and opponents have costs $c > 0$ and $c_i > 0$. There are no resource budget limits, i.e. $\bar{A}_C, \bar{B}_i = \infty$. We will then simply denote the space of feasible commander strategies as $\mcal{P}$. While the commander assignment problem under this setting has not been considered in the literature (to the best of our knowledge), formulations of two player Colonel Blotto games are commonly analyzed under similar linear cost models \cite{Kvasov_2007,Kovenock_handbook_2012,Kim_2017,Kovenock_2018}.
\end{itemize}
To the best of our knowledge, the commander assignment problem with randomized assignment strategies (in either setting) is novel to the literature. The primary goal in this section is to quantify the performance improvement (if any) that randomized assignments can yield over deterministic assignments for the commander.

\subsection{Results: the value of randomization}

Here, we present our results concerning the comparison of the commander's performance in $\text{CAP}$ and $\text{CAP}_\text{d}$. Denote $W^*$ as the optimal final payoff to the commander from $\text{CAP}$ (randomized assignments) and $W_\text{d}^*$ as its final optimal payoff from $\text{CAP}_\text{d}$ (deterministic assignments).

\begin{theorem}\label{thm:W_theorem}
    The following statements hold.
    \begin{itemize}
        \item Under the fixed budget setting, i.e. $c=c_i=0$ and $\bar{A}_C, \bar{B}_i < \infty$, it holds that $W^* = W_\text{d}^*$.
        \item Under the per-unit cost setting, i.e. $c,c_i > 0$ and $\bar{A}_C,\bar{B}_i = \infty$, it holds that $0 < W_\text{d}^* \leq W^* \leq 4\cdot W_\text{d}^*$. The equality $W^* = 4\cdot W_\text{d}^*$ holds if and only if $c > \max\{\frac{1}{2}\sqrt{\frac{c_j}{\phi_j}(c_1\phi_1+c_2\phi_2)},c_k(1+\sqrt{3}/2)\}$, where $j = \argmin{i=1,2} \frac{\phi_i}{2c_i}$ and $k = \argmax{i=1,2} c_i$.
    \end{itemize}
\end{theorem}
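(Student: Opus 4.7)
The plan is to use the equilibrium characterizations for $\text{GL}(A_i, B_i, c_i)$ and $\text{BL}(\P_i, B_i, c_i)$ established in Section \ref{sec:budget_uncertainty} to express both \eqref{eq:deterministic_assignment_problem} and \eqref{eq:randomized_assignment_problem} in closed form, then analyze each setting separately. A useful structural observation is that in the per-unit cost setting both problems fully decouple across sub-colonels because the only coupling is the budget $A$, which is infinite there; so the main work is per-competition.

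\textbf{Fixed budget setting.} The inequality $W^* \geq W_\text{d}^*$ is immediate since deterministic assignments are contained in $\mcal{P}(A)$. For the reverse direction, I would establish the strategy-collapse bound
\[
\pi_i(\P_i, B_i, 0 \mid \bs{v}_i) \;\leq\; \pi_i(\bar A_i, B_i, 0 \mid \bs{v}_i), \qquad \bar A_i := p_i A_i^{(1)} + (1-p_i) A_i^{(2)}.
\]
The key observation is that any BL strategy $F_i = (F_i^{(1)}, F_i^{(2)})$ can be collapsed to the GL strategy $\tilde F_i := p_i F_i^{(1)} + (1-p_i) F_i^{(2)}$, which is feasible under budget $\bar A_i$ (its expected total expenditure is at most $\bar A_i$) and which satisfies $U_i(\tilde F_i, G_i) = \Pi_i(F_i, G_i)$ for every opponent strategy $G_i$ by linearity of $U_i$ in its first argument. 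Consequently the sub-colonel's max-min value in the BL game is bounded by that in the GL game with budget $\bar A_i$. Combined with $c=0$ and $\bar A_1 + \bar A_2 \leq A$, the randomized objective is bounded by the deterministic objective evaluated at $(\bar A_1, \bar A_2)$, yielding $W^* \leq W_\text{d}^*$.

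\textbf{Per-unit cost setting: positivity and the factor-of-four bound.} Strict positivity $W_\text{d}^* > 0$ follows by exhibiting a small $A_i > 0$ at which $\pi_i(A_i, \infty, c_i \mid \bs{v}_i) - c A_i > 0$; this reduces to a slope comparison at $A_i = 0$ between the closed-form GL payoff and the linear cost. For $W^* \leq 4\, W_\text{d}^*$, I would solve each per-competition subproblem in closed form using the BL and GL payoff formulas from Section \ref{sec:budget_uncertainty}. Both the randomized and deterministic per-competition optima take the form $\phi_i \cdot g_\text{rand}(c, c_i)$ and $\phi_i \cdot g_\text{det}(c, c_i)$ respectively, and a regime-by-regime examination of the piecewise BL payoff shows that $g_\text{rand}/g_\text{det} \leq 4$ throughout. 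Summing over $i=1,2$ gives $W^* \leq 4\, W_\text{d}^*$.

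\textbf{Equality condition and main obstacle.} Attaining $W^* = 4\, W_\text{d}^*$ requires the per-competition ratio to equal $4$ in both competitions simultaneously. I expect this to force each optimal randomization $\P_i^*$ to be a two-point mixture of the form $(A_i^{(1)}, 0, p_i)$ lying in the regime of the BL payoff where the opponent is optimally deterred from spending against the low realization $A_i^{(2)}=0$. The two thresholds in the theorem then emerge as the boundaries of this regime: $c > c_k(1+\sqrt{3}/2)$ is the deterrence condition against the more costly opponent (indexed by $k = \argmax{i} c_i$), while $c > \tfrac{1}{2}\sqrt{\tfrac{c_j}{\phi_j}(c_1\phi_1+c_2\phi_2)}$ is the regime boundary for the sub-competition indexed by the smaller $\phi_i/(2 c_i)$. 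The main obstacle will be carefully matching the regimes of the piecewise BL payoff formula across the randomized and deterministic problems to pin down exactly when the per-competition ratio equals $4$ rather than some smaller constant, and verifying that the two boundary conditions combine into the stated ``if and only if'' criterion.
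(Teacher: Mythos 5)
Your fixed-budget argument is exactly the paper's: the collapse $\tilde F_i = p_i F_i^{(1)} + (1-p_i)F_i^{(2)}$ is Proposition \ref{prop:CI_better}, and combined with $\bar A_1 + \bar A_2 \le A$ and $c=0$ it gives $W^* \le W_\text{d}^*$. For the per-unit cost setting you depart from the paper in a useful way. The paper first solves the inner problems $\max_{A_1+A_2\le A}$ with zero commander cost (Lemmas \ref{lem:commander_det_nocost} and \ref{lem:commander_rand_nocost}), obtaining piecewise values $W_\text{d}^*(A)$ and $W^*(A)$ of the aggregate expenditure, and then maximizes $W^*(A)-cA$ over $A$ (Propositions \ref{prop:commander_det} and \ref{prop:commander_costs}); you instead exploit that with $A=\infty$ the objective separates into two independent one-dimensional problems per competition. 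Your route is cleaner: using Lemmas \ref{lem:GL_payoffs} and \ref{lem:opt_budget_randomization}, the deterministic per-competition optimum is $\max\{\frac{c_i\phi_i}{8c},\,\phi_i(1-\frac{c}{2c_i})\}$ and the randomized one is $\max\{\frac{c_i\phi_i}{2c},\,\phi_i(1-\frac{c}{2c_i})\}$; the interior critical values differ by exactly a factor of four, so positivity and $W^*\le 4W_\text{d}^*$ follow immediately, whereas the paper must track up to four critical points of a discontinuous, non-concave function of the aggregate $A$.

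The equality condition is where your plan meets a genuine problem, and not quite the one you anticipate. Carried out correctly, the per-competition comparison gives a ratio of exactly four if and only if $\frac{c_i\phi_i}{8c}\ge\phi_i(1-\frac{c}{2c_i})$, i.e. $c \ge c_i(1+\sqrt{3}/2)$, and since each deterministic per-competition optimum is strictly positive, summing yields $W^*=4W_\text{d}^*$ iff $c\ge c_k(1+\sqrt{3}/2)$ --- with no appearance of the threshold $s_1 = \frac{1}{2}\sqrt{\frac{c_j}{\phi_j}(c_1\phi_1+c_2\phi_2)}$. In the paper that threshold is the condition for the proportional-split critical point $A=\frac{c_1\phi_1+c_2\phi_2}{8c^2}$ to fall in the first interval $A<\min_i\frac{\phi_i}{2c_i}$ of Lemma \ref{lem:commander_det_nocost}; but the proportional-split value $\sqrt{A(c_1\phi_1+c_2\phi_2)/2}$ survives as one branch of the maximum in Case 2 of that lemma, so it is not evident that this threshold is binding. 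Concretely, take $c_1=c_2=1$, $\phi_1=1$, $\phi_2=100$, $c=3$: then $c_k(1+\sqrt{3}/2)\approx1.87<3<s_1\approx5.02$, yet the decoupled computation gives $W_\text{d}^*=\frac{1}{24}+\frac{25}{6}=\frac{101}{24}$ and $W^*=\frac{101}{6}$, a ratio of exactly four, in tension with the stated ``only if.'' You must either locate a flaw in the decoupling (I do not see one) or conclude that the $s_1$ term is not actually necessary; either way, your closing assertion that ``the two boundary conditions combine into the stated criterion'' cannot be waved through and is the step your write-up must confront explicitly.
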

First, let us note that the characterization of $W^*$ -- the optimal final payoff the commander can obtain by randomizing assignments -- in either setting is novel to the literature. The first statement in Theorem \ref{thm:W_theorem} asserts that the commander cannot profitably exploit informational asymmetries by randomizing assignments when there are use-it or lose-it finite resource budgets (fixed budget setting) with zero marginal costs of expenditure. A complete characterization of $W_\text{d}^*$ under this setting is available in the literature \cite{Kovenock_2012}. For brevity, we will not include it here. The second statement of Theorem \ref{thm:W_theorem} asserts that the commander can attain up to a four-fold performance improvement over deterministic assignments when there are non-zero marginal costs involved. The four-fold improvement is attainable in a regime where the commander's per-unit cost is sufficiently high, i.e. when it is relatively expensive to assign  resources to the sub-colonels.

The derivation of the second statement in the above result is detailed in Appendices \ref{sec:deterministic_proofs} and \ref{sec:randomized_proofs}. In Appendix \ref{sec:deterministic_proofs}, we use backward induction arguments to completely characterize the optimal \emph{deterministic} assignment in $\text{CAP}_\text{d}$ and the corresponding final payoff $W_\text{d}^*$. In Appendix \ref{sec:randomized_proofs}, we use backward induction arguments to completely characterize the optimal \emph{randomized} assignment in $\text{CAP}$ and the corresponding final payoff $W^*$.

\section{Comparative statics and numerical study}\label{sec:sims}

In this section, we highlight several implications of our main results using analytical and numerical methods. We revisit the Bernoulli Lotto game $\text{BL}(\mbb{P}_\mca,B)$ from Section \ref{sec:BL} and compare it to a corresponding benchmark complete information General Lotto game. In particular, we analyze player $\mca$'s equilibrium payoff in two comparative scenarios: 1) the Bernoulli Lotto game $\text{BL}(\mbb{P}_\mca,B)$ where $\mbb{P}_\mca = (A^h,A^\ell,p)$ and 2) the complete information Lotto game $\text{GL}(\bar{A},B)$ in which $\mca$ utilizes the average budget $\bar{A} = pA^h + (1-p)A^\ell$ from scenario 1. It is an immediate consequence\footnote{A third scenario in which the budget type is fully revealed to player $\mcb$ may also be considered. Compared to this scenario, it must be the case that player $\mca$ performs better in Scenario 1. We leave a full investigation of such comparisons to future work.} that player $\mca$ can do no better in scenario 1 than in scenario 2.

\begin{proposition}\label{prop:CI_better}
	For any $\bar{A} > 0$ and $B > 0$, it holds that $\pi_\mca(\Prob_\mca,B) \leq \pi_\mca^\text{CI}(\bar{A},B)$ for any $\Prob_\mca = (A^h,A^\ell,p)$ satisfying $pA^h + (1-p)A^\ell = \bar{A}$, where $\pi_\mca^\text{CI}(\bar{A},B)$ is the equilibrium payoff of the complete information General Lotto game $\text{GL}(\bar{A},B)$.
\end{proposition}
\begin{proof}
	Let us denote $F^*_\mca \in \F(\bar{A})$ as the equilibrium strategy for player $\mca$ in the complete information General Lotto game (without costs), which ensures a payoff $\pi_\mca^\text{CI}$. In the BL game, player $\mca$ selects a pair of strategies $\{F_\mca^h,F_\mca^\ell\} \in \F(A^h)\times \F(A^\ell)$. Player $\mca$ effectively ex-ante competes with a single strategy $\bar{F}_\mca = pF_\mca^h + (1-p)F_\mca^\ell$ that belongs to $\F(\bar{A})$. This follows directly from the linearity of expectations in both \eqref{eq:LC} and \eqref{eq:Lotto_payoff}. However, the set $\{\bar{F}_\mca \in \F(\bar{A}): pF_\mca^h + (1-p)F_\mca^\ell \}$ is a subset of $\F(\bar{A})$, which may not contain $F_\mca^*$. 
\end{proof}

Note that the explicit characterization of equilibrium payoffs, given in Theorem \ref{thm:BNE_regions}, was not needed to prove the above statement. In essence, the fact that player $\mca$ has two use-it or lose-it budget types \emph{constrains} its feasible strategy set in comparison to its strategy set in the corresponding complete information game.

We are thus interested in characterizing the extent to which player $\mca$'s equilibrium payoff degrades when its budget is randomized, in comparison to the benchmark complete information game. To illustrate, let us first consider an example where the average budget $\bar{A}$ and the probability $p$ are fixed. The feasible budget pairs for player $\mca$ are then parameterized by $A^h \in [\bar{A},\frac{\bar{A}}{p}]$ with $A^\ell = \frac{\bar{A} - pA^h}{1-p}$. The equilibrium payoffs to player $\mca$ under this parameterization with $\bar{A} = 2$ is shown in Figure \ref{fig:slice}. Notably, as the  separation between $A^h$ and $A^\ell$ increases, the equilibrium payoff to $\mca$ worsens.

\begin{figure}[t]
    \centering
    \includegraphics[scale=0.15]{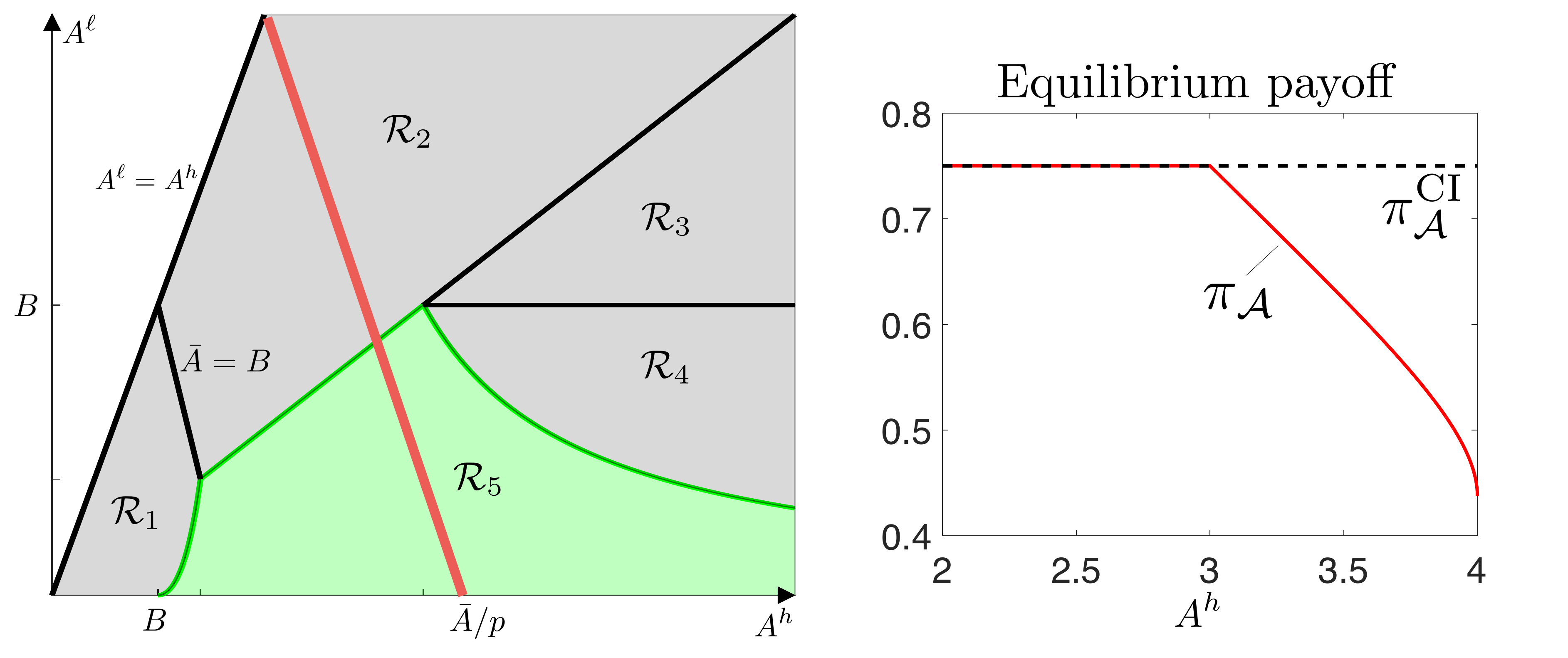}
    \caption{(Left) Parameter region diagram for a fixed $p$, where the red line highlights all possible budget pairs $(A^h,A^\ell)$ that satisfy a fixed average budget $\bar{A}$. (Right) Player $\mca$'s equilibrium payoff in the Bernoulli Lotto game (red line) for all budget pairs that satisfy a fixed average budget $\bar{A} = 2$ with fixed parameter $p = 0.5$ (indicated by the thick red line in left panel). This is parameterized by  $A^h \in [\bar{A},\frac{\bar{A}}{p}]$ with $A^\ell = \frac{\bar{A} - pA^h}{1-p}$. For instance, when $A^h = 3.5$, the distribution $\Prob_\mca = (3.5,0.5,0.5)$. Observe that randomized budgets worsen player $\mca$'s payoff over its performance in the benchmark complete information game (dashed black line). The extreme distribution $\Prob_\mca =(4,0,0.5)$ only achieves $58\%$ of the payoff in the benchmark game.}
    \label{fig:slice}
\end{figure}

In order to generalize these observations, we define the following quantity for any distribution $\Prob_\mca = (A^h,A^\ell,p)$:
\begin{equation}
    \zeta(\Prob_\mca;B) \triangleq \frac{\pi_\mca(\Prob_\mca,B)}{\pi_\mca^\text{CI}(\bar{A},B)}.
\end{equation}
This is the fraction of payoff that $\Prob_\mca$ achieves in the BL game relative to the benchmark complete information game, which is between 0 and 1 by Proposition \ref{prop:CI_better}. Now, suppose the probability $p \in [0,1]$ of a high budget is fixed. We wish to characterize the worst-case degradation over all possible budget pairs $(A^h,A^\ell)$, subject to a lower bound on the separation: $\alpha \leq \frac{A^\ell}{A^h} \leq 1$ for some $\alpha \in (0,1]$. This constraint imposes a minimum relative spending or budget requirement. For instance, defense budget spending typically is not extremely high one year and near zero the next year. The value we are thus interested in quantifying is given by the \emph{degradation ratio}:
\begin{equation}\label{eq:degradation_p}
    \zeta^*(p;\alpha,B) \triangleq  \min_{\substack{ (A^h,A^\ell) \\ \alpha \leq \frac{A^\ell}{A^h} \leq 1}} \zeta((A^h,A^\ell,p);B).
\end{equation}

Below, we explicitly derive the degradation ratio in player $\mca$'s equilibrium payoff for a fixed $p \in [0,1]$.


\begin{proposition}\label{prop:zeta_p}
    For any fixed $p \in [0,1]$ and $B > 0$,
    \begin{equation}\label{eq:zeta_p}
        \zeta^*(p;\alpha,B) = 
        \begin{cases}
            p, &\text{if } \alpha = 0 \\
            \left(1 - \frac{B}{2T^*(p+\alpha(1-p))} \right)^{-1} \cdot \left(p + (1-p)\frac{\alpha T^*}{2B} \right), &\text{if } \alpha < \frac{1-p}{2-p} \\
            1, &\text{if } \frac{1-p}{2-p} \leq \alpha \leq 1 \\
        \end{cases}
    \end{equation}
    where
    \begin{equation}\label{eq:Tstar}
        T^* \triangleq \frac{B\left(\alpha(1-p) + \sqrt{\alpha^2(1-p)^2 + 4p\alpha(1-p)(p+\alpha(1-p))} \right)}{2\alpha(1-p)(p+\alpha(1-p))}.
    \end{equation}
\end{proposition}

As the separation parameter $\alpha$ increases, meaning that the low and high budgets are closer in value, the degradation ratio increases. Above the threshold $\alpha \geq \frac{1-p}{2-p}$, the equilibrium payoff from the BL game coincides with the benchmark complete information game. We find that the budget pair satisfying $\frac{A^\ell}{A^h} = \alpha$ minimizes \eqref{eq:degradation_p}, i.e. the further apart the high and low budget are, the worse performance $\mca$ attains.

\begin{proof}
    We first establish that $\pi_\mca$ is a decreasing function of $A^h$ along any line segment of the form $A^\ell = \frac{\bar{A} - pA^h}{1-p}$, where $A^h \in [\bar{A},\frac{\bar{A}}{p}]$. This is the line segment that encompasses all pairs $(A^h,A^\ell)$ satisfying an expected budget $\bar{A} > 0$. We first note that $\pi_\mca = \pi_\mca^{\text{CI}}$ in the region $\mcal{R}_1 \cup \mcal{R}_2$. Moreover, $\pi_\mca$ is a continuous function in $A^h,A^\ell \in \mcal{R}$ and hence $\pi_\mca = \pi_\mca^{\text{CI}}$ for any $(A^h,A^\ell)$ on the border of $\mcal{R}_5$ and $\mcal{R}_1$, $\mcal{R}_2$. Let us first consider the case $\bar{A} < 1 + \frac{1}{1-p}$. The line segment begins in $\mcal{R}_1$ or $\mcal{R}_2$, where $\pi_\mca$ is constant. It then enters $\mcal{R}_5$. Here, we need to show that the last entry of \eqref{eq:piA} is decreasing in $A^h$ along the segment. The partial derivative with respect to $A^h$, keeping $\bar{A}$ constant, is $-\frac{B}{(A^h)^2}\left(\bar{A} + \sqrt{\frac{\bar{A}}{\bar{A}-pA^h}}\left(\frac{pB}{2}+\bar{A} - pA^h \right) \right) + \frac{B^2}{(A^h)^3}\left(1+\sqrt{\frac{\bar{A}}{\bar{A}-pA^h}} \right)\left(\left(\bar{A}-pA^h + \sqrt{\bar{A}(\bar{A}-pA^h)} \right) - \frac{B}{2} \right)$.   Note the first term is negative, and the second term is negative if $p^2(A^h)^2 + p(B-\bar{A})A^h + \frac{B^2}{4} > 0$. This indeed is the case -- observe this is a quadratic function that is positive for all $A^h \geq 0$. Therefore, $\pi_A$ is decreasing in $A^h$ along the line segment in region $\mcal{R}_5$.
    
    Now, consider the case $\bar{A} \geq B(1 + \frac{1}{1-p})$. The line segment begins in $\mcal{R}_2$, then enters $\mcal{R}_3$, $\mcal{R}_4$, and $\mcal{R}_5$. We can easily verify $\pi_\mca(G)$ is decreasing in regions $\mcal{R}_3$ and $\mcal{R}_4$ along the segment. For $\mcal{R}_3$, the partial derivative of \eqref{eq:piA} along the segment is $-\frac{B^2(1-p)^2}{2(\bar{A}-pA^h)^2} < 0$. For $\mcal{R}_4$, the partial derivative of \eqref{eq:piA} along the segment is $-p/2 < 0$.
    
    Consequently, the approach to calculate \eqref{eq:degradation_p} is to search $\zeta$ over pairs $(A^h,\alpha A^h)$, i.e. along the ray that meets the separation constraint. If $\alpha \geq \frac{1-p}{2-p}$, we can immediately deduce that $\zeta^*(p;\alpha,B) = 1$, since the ray is entirely contained in $\mcal{R}_1$ and $\mcal{R}_2$. So, suppose $\alpha < \frac{1-p}{2-p}$. Along the ray parameterized by $A^h > 0$, we obtain
    \begin{equation}
        \zeta(A^h) = 
        \begin{cases}
            1, &\text{if } A^h \in [0,T_1] \\
            \frac{2B}{A^h(p+\alpha(1-p))}\cdot \pi_\mca((A^h,\alpha A^h,p),B), &\text{if } A^h \in (T_1,T_2] \\
            \left(1 - \frac{B}{2A^h(p+\alpha(1-p))} \right)^{-1} \cdot \pi_\mca((A^h,\alpha A^h,p),B), &\text{if } A^h \in (T_2,T_3] \\
            \left(1 - \frac{B}{2A^h(p+\alpha(1-p))} \right)^{-1} \cdot(p+\alpha(1-p)\frac{A^h}{2B}), &\text{if } A^h \in (T_3,T_4] \\
            \left(1 - \frac{B}{2A^h(p+\alpha(1-p))} \right)^{-1}(p + (1-p)(1 - \frac{B}{2\alpha A^h}) ), &\text{if } A^h \in (T_4,\infty]
        \end{cases}
    \end{equation}
    where $T_1 \triangleq B(1 + \sqrt{\frac{\alpha(1-p)}{p + \alpha(1-p)}})$, $T_2 = \frac{B}{p+\alpha(1-p)}$, $T_3 \triangleq B(1+\sqrt{1+\frac{p}{\alpha(1-p)}})$, and $T_4 \triangleq B/\alpha$. For $A^h \in [0,T_1]$, the ray is contained in $\mcal{R}_1$. For $A^h \in (T_1,T_3]$, the ray is contained in $\mcal{R}_5$, and it can be shown that the expressions for $\zeta$ are strictly decreasing in $A^h$. For $A^h \in (T_3,T_4]$, the ray is contained in $\mcal{R}_4$, and one finds a unique minimum point in this interval at $A^h = T^*$ \eqref{eq:Tstar}. For $A^h \in (T_4,\infty)$, the ray is contained in $\mcal{R}_3$, and we can  show that $\zeta$ is strictly increasing. Therefore, $\zeta^*(p;\alpha,B) = \zeta(T^*)$.
\end{proof}

\begin{figure}[t]
    \centering
    \includegraphics[scale=0.15]{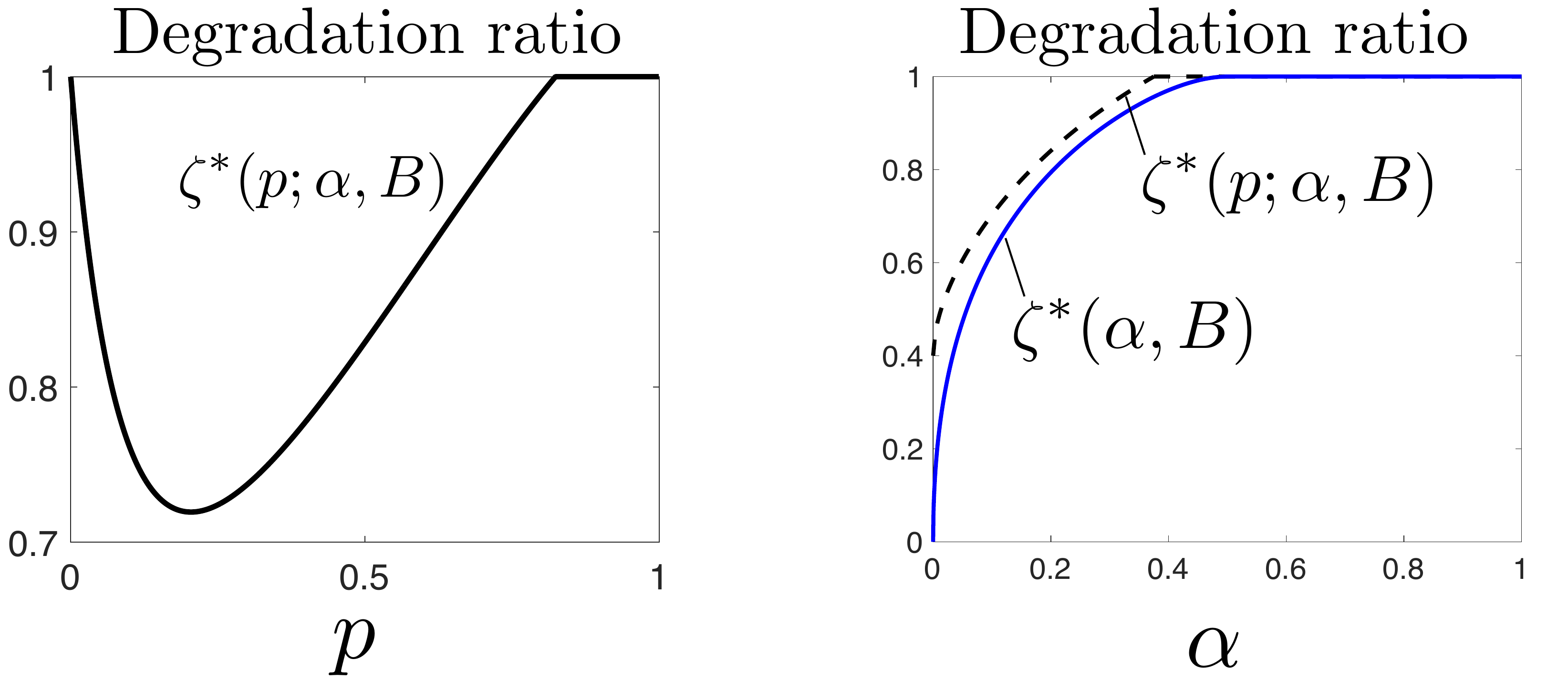}
    \caption{(Left) Plot of the  degradation ratio $\zeta^*(p;\alpha,B)$ \eqref{eq:zeta_p}, characterized explicitly in Proposition \ref{prop:zeta_p}. As a function of $p$, we observe $\zeta^*(p;\alpha,B)$ has a unique global minimizer. In this plot, $\alpha = 0.15$. (Right) The worst-case degradation ratio $\zeta^*(\alpha,B)$ \eqref{eq:degradation_Prob} is computed numerically by performing a gradient descent algorithm on $\zeta^*(p;\alpha,B)$. We note that $\zeta^*(\alpha,B) = 1$ for $\alpha \geq 1/2$. As a comparison, $\zeta^*(p;\alpha,B)$ is plotted for $p = 0.4$. In all simulations, we set $B=1$. }
    \label{fig:degradation_plots}
\end{figure}

One may also be interested in the \emph{worst-case} degradation ratio, in which the fraction $\zeta(\Prob_\mca;B)$ is minimized over all possible Bernoulli distributions $\Prob_\mca = (A^h,A^\ell,p)$ subject to the separation constraint $\alpha \leq \frac{A^\ell}{A^h} \leq 1$. Such a characterization requires solving the optimization problem
\begin{equation}\label{eq:degradation_Prob}
    \zeta^*(\alpha,B) \triangleq \min_{p \in [0,1]} \zeta^*(p;\alpha,B).
\end{equation}
where $\zeta^*(p;\alpha,B)$ is given in \eqref{eq:zeta_p}. To calculate \eqref{eq:degradation_Prob}, we leverage numerical methods (gradient descent) to accurately find its value. The simulation results are depicted in Figure \ref{fig:degradation_plots}. We observe that when there is no separation constraint $(\alpha = 0)$, the degradation is arbitrarily bad, i.e. $\zeta^*(0,B) = 0$. Here, the worst-case distribution is one that sets $A^\ell = 0$ and $p=0$. The worst-case ratio $\zeta^*(\alpha,B)$ is strictly increasing in $\alpha \in (0,\frac{1}{2})$, and there is no degradation ($\zeta^*(\alpha,B) = 1$) when the high budget is no more than twice the low budget, i.e.  $\alpha \in [\frac{1}{2},1]$.

\section{Conclusion}\label{sec:conclusion}

This paper considers a class of one-sided incomplete information General Lotto games, where one of the player's resource budget is assigned randomly according to a Bernoulli distribution, while the opponent's endowment is common knowledge. We characterize all Bayes-Nash equilibria in this class of games. Randomized budgets do not provide benefits in comparison to a corresponding benchmark complete information setting. Our equilibrium characterizations allow us to determine how a high-level commander could exploit probabilistic resource assignments to two sub-colonels that engage with two respective opponents in separate and simultaneous General Lotto games. Interestingly, the optimal randomized assignment can improve the commander's payoff four-fold in comparison to the optimal deterministic assignment, in settings with a per-unit cost for deployment.

There are several interesting directions for future research. One can consider extensions in which there are more than two budget types, or where both competitors have randomized budgets. The equilibria derived in these games may also serve as equilibria to a class of corresponding all-pay auctions with incomplete information. It would also be of interest to extend the assignment problems to account for more than two sub-colonels.

\section*{Acknowledgments}
This work is supported by UCOP Grant LFR-18-548175, ONR grant \#N00014-20-1-2359, AFOSR grants \#FA9550-20-1-0054 and \#FA9550-21-1-0203, and the Army Research Lab through the ARL DCIST CRA \#W911NF-17-2-0181.

\begin{appendices}

\section{Equilibrium characterizations of Bernoulli Lotto games}\label{sec:APA}


This section is devoted to the proof of Theorem \ref{thm:BNE_regions} -- the characterization of equilibrium payoffs in the Bernoulli General Lotto game. To do so, we derive the equilibrium strategies for player $\mca$ and $B$ in any game instance $\text{BL}(\Prob_\mca,B,0)$. First, we establish a connection between the necessary conditions for equilibrium in two-player all-pay auctions with asymmetric information and the BL game (Section \ref{sec:connection_method}). We then leverage equilibrium solutions to such all-pay auctions (Section \ref{sec:APA_IUI}), provided by the work of Siegel \cite{Siegel_2014}, to formulate a system of non-linear equations in the Lagrange multipliers $\bs{\lambda}$ associated with the players' Lotto expected budget constraints \eqref{eq:LC}. This system of equations is stated in Section \ref{sec:SB_equil}, and can take three different forms, which correspond to  disjoint regions in the multiplier space. Hence, solutions to this system are not only algebraic, but also case-dependent.   

We completely characterize solutions to these equations in Proposition \ref{prop:SB_equil}, finding they exist only for a subset of BL games we are interested in (region $\mcal{R}_5$). Indeed, Siegel's algorithm  \cite{Siegel_2014} constructs equilibrium strategies to the auctions when certain monotonicity conditions are met -- players' types are required to be somewhat correlated to their valuations. While these informational requirements hold in the BL games we are interested in, we find the structure of the auction strategies cannot accommodate all possible combinations of budget parameters $A^h,A^\ell,B$, limiting the applicability of \cite{Siegel_2014} to the sub-region $\mcal{R}_5$. Nonetheless, we prove that solutions to the system of equations, when they exist, correspond to equilibria of the BL game (Proposition \ref{prop:SB_equil}).

We then identify the remaining regions $\mcal{R}_i$, $i=1,\ldots,4$ to have distinct equilibrium structures that cannot be calculated from the aforementioned system of equations. For these equilibria, we combine features of the equilibrium strategies from complete information Lotto games with those that were computed through the system of equations. These details are given in the Section \ref{sec:other_regions}, thus completing the proof of Theorem \ref{thm:BNE_regions}. We first review all-pay auctions with asymmetric information, as studied by Siegel in \cite{Siegel_2014}.

\subsection{All-pay auctions with asymmetric information and valuations}
In an all-pay auction, two bidders ($A$ and $B$) compete over a single item. Before bids are submitted, player $\mca$ privately observes one of two possible types: $h$ (high type) with a probability $p$ and $\ell$ (low type) with $1-p$. Here, it is assumed $p$ is common knowledge and $p > 0$. player $\mcb$ always observes the same type $t_\mcb$. Hence, there are two possible type profiles\footnote{Siegel's model is general, allowing an arbitrary, finite number of types for each player. We review the model here with two types, since it pertains to our Bernoulli General Lotto games.}, corresponding to whether $\mca$ observes $h$ or $\ell$. In type $t\in\{h,\ell\}$, the players' valuations for the item are $v_{\mca,t}$ and $v_{\mcb,t}$ for player $\mca$ and $\mcb$, respectively. 

A (pure) strategy for player $\mca$ is a pair $\bs{x}_\mca = (x_h,x_\ell) \in \mathbb{R}_+^2$, where  $x_t$ is its bid for the item contingent on receiving type $t$. A strategy for player $\mcb$ is a non-negative bid $x_\mcb \geq 0$. The resulting payoffs are given by
\begin{equation}
    \begin{aligned}
        p\left( v_{\mca,h}\cdot W_A(x_h,x_\mcb) - x_h \right) + (1-p) \left( v_{\mca,\ell}\cdot W_A(x_\ell,x_\mcb) - x_\ell \right) \quad \text{(Player $\mca$)} \\
        p\left( v_{\mcb,h}\cdot W_B(x_\mcb,x_h) - x_\mcb \right) + (1-p) \left( v_{\mcb,\ell}\cdot W_B(x_\mcb,x_\ell) - x_\mcb \right) \quad \text{(Player $\mcb$)}
    \end{aligned}
\end{equation}
where $W_\ell$, $\ell \in \{A,B\}$ is defined as\footnote{The choice of tie-breaking rule can be made arbitrarily, as it will not ultimately affect equilibrium payoffs. Here, we assume ties favor player $\mcb$.}
\begin{equation}\label{eq:W}
	W_\ell(a,b) =
	\begin{cases}
		1, &\text{if } a > b \text{ or } a = b \text{ and } \ell = B\\
		0, &\text{otherwise}
	\end{cases}
\end{equation}
A mixed strategy for player $\mca$ is a pair  $\{F_\mca^{t}\}_{t=h,\ell}$, where $F_\mca^{t}$ is a univariate probability distribution on $\mathbb{R}_+$. A mixed strategy for player $\mcb$ is a single univariate distribution $F_\mcb$ on $\mathbb{R}_+$. The payoffs are calculated as the expected payoffs with respect to the distribution $p$ and the mixed strategies. We refer to this auction as $\text{APA}(v_\mca,v_\mcb,p)$.

Note that we have adapted this model to the information structure of the Bernoulli Lotto game. In general, the auction model allows for arbitrary, finite type spaces $\mcal{T}_\mca$ and $\mcal{T}_\mcb$ with a joint probability distribution $\bs{p}$ on the type profiles \cite{Siegel_2014}. Siegel shows equilibria can be computed algorithmically, provided a certain monotonicity condition (known as the KMS condition in the auction literature \cite{Rentschler_2016}) is met for some ordering of the players' type spaces. For brevity, we do not detail these conditions here. Indeed, with the information structure that we consider, the monotonicity condition always holds.

\subsection{Connection between APA and Bernoulli General Lotto games}\label{sec:connection_method}

We now present some informal intuition that suggests a connection between the equilibria of APA and $\text{BL}(\Prob_\mca,B,0)$, the Bernoulli Lotto game with zero cost for player $\mcb$. These insights are analogous to those drawn between two-player all-pay auctions with complete information and the Colonel Blotto and General Lotto games \cite{Baye_1996,Roberson_2006,Kovenock_2020}.

Consider a game instance $\text{BL}(\Prob_\mca,B,0)$, which we will often refer to as the tuple of parameters $G = (A^h,A^\ell,p,B)$. For ease of exposition, we will assume that the sum of battlefield values is normalized to one, i.e. $\|\bs{v}\| = 1$. The Lotto budget constraint \eqref{eq:LC} must hold for the strategies associated in each type. player $\mca$'s ex-interim constrained optimization, given type $t\in\{h,\ell\}$ is realized, can be written as
\begin{equation}\label{eq:lagrangian}
    \max_{ \{F_{\mca,j}^{t}\}_{j\in[n] }} 
    \sum_{j\in[n]} \int_0^\infty \left[  v_j F_{B,j}(x_{\mca,j}) - \lambda_i x_{\mca,j} \right] \,dF_{\mca,j}^{t} + \lambda^t A^t
\end{equation}
where $\lambda^t$ is the multiplier on player $\mca$'s expected budget constraint for type $t\in\{h,\ell\}$, and we denote $p^h = p$ and $p^\ell = 1-p$. Player B's constrained optimization is written as
\begin{equation}
    \max_{ \{F_{B,j}\}_{j\in[n]}} \sum_{j\in[n]}
    \sum_{t=h,\ell} p^t \int_0^\infty \left[  v_jF_{\mca,j}^{t}(x_{B,j}) - \lambda_\mcb x_{B,j} \right] \,dF_{B,j} + \lambda_\mcb B.
\end{equation}
where $\lambda_\mcb$ is the multiplier on player $\mcb$'s budget. The necessary first-order conditions for equilibrium are
\begin{equation}\label{eq:lotto_connection}
    \begin{aligned}
        &\frac{d}{dx_{B,j}}\left[ \sum_{t=h,\ell} p^t \left(v_j F_{\mca,j}^{t}(x_{B,j}) - \lambda_\mcb x_{B,j} \right) \right] = 0 \\
        &\frac{d}{dx_{\mca,j}}\left[ v_j F_{B}(x_{\mca,j}) - \lambda^t x_{\mca,j}  \right] = 0, \quad t = h,\ell
    \end{aligned}
\end{equation}
for each $j \in [n]$. Dividing by the associated (positive) multiplier in each condition, this coincides with the necessary first-order conditions for equilibrium of $n$ independent two-player all-pay auctions with incomplete information for which the item valuation in auction $j$ for player $\mca$ in type $t\in\{h,\ell\}$ is $v_{A,i} = \frac{v_j}{\lambda_i}$, and the valuation for player $\mcb$ when player $\mca$'s type is $i$ is $v_{B,i} = \frac{v_j}{\lambda_\mcb}$.

The equilibria to each of the $n$ APA games can be computed using Siegel's algorithm, as long as the KMS condition is satisfied\footnote{As discussed earlier, the algorithm of \cite{Rentschler_2016} can handle APA when KMS is not met. We do not detail this algorithm here, however, because of its complexity and because Siegel's algorithm suffices for the problems of interest in this paper.}. However, since we do not know the actual ranking of player $\mca$'s types, i.e. whether $\lambda^h \leq \lambda^\ell$ or vice versa, we must proceed by first imposing such a ranking. For the sake of demonstration, let us suppose $\lambda^h \leq \lambda^\ell$, so that type $h$ is ``higher" than type $\ell$, for instance. This allows us to proceed with Siegel's algorithm, which generates bidding distributions $\{F_{\mca,j}^{t,\bs{\lambda}}\}_{t=h,\ell}$ and $F_{B,j}^{\bs{\lambda}}$  for each $j\in[n]$. Here, the superscript $\bs{\lambda}$ indicates the expressions are in terms of the (still) unknown multipliers. These distributions must be consistent with the Lotto expected budget constraints \eqref{eq:LC}, yielding a system of three equations in $\bs{\lambda} = (\lambda^h,\lambda^\ell,\lambda_\mcb)$.

\begin{equation}\label{eq:SOE}
\begin{aligned}
    &\sum_{j\in[n]} \mathbb{E}_{x_{\mca,j}\sim F_{\mca,j}^{t,\bs{\lambda}}}[x_{\mca,j}] = A^t, \quad t=h,\ell \\
    &\sum_{j\in[n]} \mathbb{E}_{x_{B,j}\sim F_{B}^{\bs{\lambda}}}[x_{B,j}] = B \\
    &\text{such that } 0 < \lambda^h \leq \lambda^\ell \\
\end{aligned}
\end{equation}
Hence, we seek to find a solution to \eqref{eq:SOE}. In the next sections, we detail the distributions $\{F_\mca^{t,\bs{\lambda}}\}_{t=h,\ell}$ and $F_\mcb^{\bs{\lambda}}$ constructed from Siegel's algorithm and apply them to \eqref{eq:SOE} to explicitly derive the system of equations. 

\subsection{Equilibrium strategies of APA}\label{sec:APA_IUI}

In the following, we summarize the resulting equilibria from applying Siegel's algorithm to the APA setup of Section \ref{sec:APA_IUI}.

Define 
\begin{equation}\label{eq:kbar}
	\bar{k} :=
	\begin{cases}
	    1, &\text{if } p\frac{v_{\mcb,h}}{v_{\mca,h}} \geq 1 \\
	    2, &\text{if } p\frac{v_{\mcb,h}}{v_{\mca,h}} + (1-p) \frac{v_{\mcb,\ell}}{v_{\mca,\ell}} \geq 1 \text{ and } p\frac{v_{\mcb,h}}{v_{\mca,h}} < 1 \\
	    3, &\text{if } p\frac{v_{\mcb,h}}{v_{\mca,h}} + (1-p) \frac{v_{\mcb,\ell}}{v_{\mca,\ell}} < 1
	\end{cases}
\end{equation}

In brief, $\bar{k}$ is the iteration at which Siegel's algorithm terminates. Denoting $p = p$ and $p_2 = 1-p$, define
%
%
\begin{equation}\label{eq:L}
	\begin{aligned}
	    L^h = 
	    \begin{cases}
	        v_{\mca,h}, &\text{if } \kb = 1 \\
	        p v_{\mcb,h}, &\text{if } \kb \in \{2,3\} \\
	    \end{cases}, \quad 
	    L^\ell = 
	    \begin{cases}
	        0, &\text{if } \kb = 1 \\
	        v_{\mca,\ell}\left( 1 - p\frac{v_{\mcb,h}}{v_{\mca,h}} \right), &\text{if } \kb = 2 \\
	        (1-p)v_{\mcb,\ell}, &\text{if } \kb = 3 \\
	    \end{cases}
	\end{aligned}
\end{equation}
The $L_k$ are lengths of intervals for which the equilibrium marginals have support. Below, we provide expressions for the equilibrium strategies that result from applying Algorithm 1.

\begin{lemma}\label{lem:APA_IUI}
	The equilibrium mixed strategies for APA are given as follows\footnote{To simplify exposition and notation where convenient, we sometimes explicitly write CDFs as a mixture of uniform and point mass distributions. Here, 
	we denote $\text{Unif}(a,b) := \bs{1}(x\geq a)\min\{\frac{x}{b-a},1\}$ as the CDF of the uniform distribution on $(a,b)$ and $\bs{\delta}_0 := \bs{1}(x\geq 0)$ the CDF of a point mass centered at zero.}:
	\begin{equation}\label{eq:A_marginals}
		\begin{aligned}
			\text{If } \kb = 1: \quad
			&F_\mca^{h} = \left(1 - \frac{L^h}{pv_{\mcb,h}} \right)\bs{\delta}_0 + \frac{L^h}{pv_{\mcb,h}}\text{\emph{Unif}}(0,L^h), \quad F_\mca^{\ell} = \bs{\delta}_0 \\
			&F_\mcb = \text{\emph{Unif}}(0,L^h) \\
			\text{If } \kb = 2: \quad
			&F_\mca^{h} = \text{\emph{Unif}}(L^\ell,L^\ell + L^h) \\
			&F_\mca^{\ell} = \left(1 - \frac{L^\ell}{(1-p) v_{\mcb,\ell}} \right)\bs{\delta}_0 + \frac{L^\ell}{(1-p) v_{\mcb,\ell}}\text{\emph{Unif}}(0,L^\ell) \\
			&F_\mcb = \frac{L^\ell}{v_{\mca,\ell}}\text{\emph{Unif}}(0,L^\ell) + \frac{L^h}{v_{\mca,h}}\text{\emph{Unif}}(L^\ell,L^\ell + L^h) \\
			\text{If } \kb = 3: \quad
			&F_\mca^{h} = \text{\emph{Unif}}(L^\ell,L^\ell + L^h), \quad F_\mca^{\ell} = \text{\emph{Unif}}(0,L^\ell) \\
			&F_\mcb = \left(1 - \sum_{i=1}^2 \frac{L_i}{v_{A,i}} \right)\bs{\delta}_0 + \frac{L^\ell}{v_{\mca,\ell}}\text{\emph{Unif}}(0,L^\ell) + \frac{L^h}{v_{\mca,h}}\text{\emph{Unif}}(L^\ell,L^\ell + L^h)
		\end{aligned}
	\end{equation}
\end{lemma}
In summary, the marginals for player $\mca$ are uniform distributions with shifted supports, and player $\mcb$'s marginal is a piece-wise uniform distribution.

\subsection{Equilibria in the $\mcal{R}_5$ region}\label{sec:SB_equil}

We are now ready to apply the methods outlined in Sections \ref{sec:connection_method} and \ref{sec:APA_IUI} to explicitly state the system of equations \eqref{eq:SOE}. We then completely characterize the solutions to these equations in Proposition \ref{prop:SB_equil}. In doing so, we identify the subset of game instances for which solutions to \eqref{eq:SOE} exist. Recall this subset was identified as the $\mcal{R}_5$ region in Theorem \ref{thm:BNE_regions} (Figure \ref{fig:regions}). We also prove that such solutions and their associated strategies (constructed from Siegel's algorithm) constitute equilibria of the BL game. This serves as the proof of Theorem \ref{thm:BNE_regions} in the $\mcal{R}_5$ region.

The item valuations in one of the $n$ APA games are given by $v_{A,t} = v_j/\lambda_i > 0$ for player $\mca$ and $v_{B,t} = v_j/\lambda_\mcb > 0$ for player $\mcb$, in type $t\in \{h,\ell\}$. Since the high budget $A^h$ is associated with type $t_1$, we naturally impose the ranking $\lambda^h \leq \lambda^\ell$. Indeed, the distribution $F_\mca^{h}$ \eqref{eq:A_marginals} will have a higher expected budget expenditure under this ranking of types. 

The value of $\kb$ is not known a priori, as it now depends on the multipliers. The values it can take, $\kb \in \{1,2,3\}$, correspond to transformed multipliers $\bs{\sigma} = (\sigma^h,\sigma^\ell)$, with $\sigma^t := \frac{\lambda^t}{\lambda_\mcb} > 0$ for $t \in \{h,\ell\}$, lying in three disjoint regions of $\mathbb{R}_+^2$. These regions result directly from \eqref{eq:kbar}, and are given below. 

\begin{equation}\label{eq:sigma_regions}
    \begin{aligned}
    \kb = 1, \text{ if } \quad &p\sigma^h \geq 1 \\
    \kb = 2, \text{ if } \quad &p\sigma^h < 1  \text{ and } p\sigma^h + (1-p)\sigma^\ell \geq 1 \\
    \kb = 3, \text{ if } \quad &p\sigma^h + (1-p)\sigma^\ell < 1 \\
    \end{aligned}
\end{equation}
Let us denote these regions as $E_1, E_2, E_3$, whose union is $\mathbb{R}_+^2$. The constructed strategies take three different forms described in Lemma \ref{lem:APA_IUI}, contingent on the value of $\kb$. Thus, there are three cases the system of equations \eqref{eq:SOE} can take. They are given below, where we seek to find multipliers $(\sigma^h,\sigma^\ell,\lambda_\mcb)$ that satisfies one of the three cases for a given instance $(A^h,A^\ell,p,B) \in \mcal{G}$ -- note that we can uniquely recover $(\lambda^h,\lambda^\ell,\lambda_\mcb)$ from a tuple $(\sigma^h,\sigma^\ell,\lambda_\mcb)$.

\begin{equation}\label{eq:SB_SOE}
    \begin{aligned}
        &\text{\underline{Case 1}} \ & \  &\text{\underline{Case 2}} \ & \  &\text{\underline{Case 3}}  \\
        \text{(i)} \ \  &\frac{1}{2p(\sigma^h)^2} = \lambda_\mcb A^h & \ &\frac{p}{2} + \frac{1 - p\sigma^h}{\sigma^\ell} = \lambda_\mcb A^h & \ &\frac{p}{2} + 1-p = \lambda_\mcb A^h  \\
        \text{(ii)} \ \ &0 = A^\ell & \  &\frac{\left(1 - p\sigma^h\right)^2}{2(1-p)(\sigma^\ell)^2} = \lambda_\mcb A^\ell & \ &\frac{1-p}{2} = \lambda_\mcb A^\ell \\
        \text{(iii)} \ \ &p\sigma^h A^h = B & \ &p\sigma^h A^h + (1-p)\sigma^\ell A^\ell = B & \ &p\sigma^h A^h + (1-p)\sigma^\ell A^\ell = B \\
        &\text{such that } & \  &\text{such that }  & \ &\text{such that }  \\
        \text{(iv)} \ \ &\bs{\sigma} \in E_1 & \ &\bs{\sigma} \in E_2 & \ &\bs{\sigma} \in E_3 \\
        \text{(v)} \ \ &\sigma^h \leq \sigma^\ell & \ &\sigma^h \leq \sigma^\ell & \ &\sigma^h \leq \sigma^\ell
    \end{aligned}\tag{$\star$}
\end{equation}
A solution $(\sigma^h,\sigma^\ell,\lambda_\mcb)$ to \eqref{eq:SB_SOE} cannot satisfy two cases simultaneously, due to the $E_i$ being disjoint. Observe that the individual battlefield values $v_j$ (whose sum total is one) do not appear in these equations. In fact, one would arrive to the system \eqref{eq:SB_SOE} when considering Lotto games with any number $n \geq 1$ of battlefields whose total value is normalized to one -- the individual battlefield values do not play a role in the analysis. For simplified exposition, we will henceforth consider players' strategies as allocations to a single battlefield of value one ($F_\mca$ and $F_\mcb$ with no $j$ dependence), noting this is mathematically equivalent to any arbitrary set of $n$ battlefield valuations $\bs{v}$ that sum to one.

Also, note the multiplier $\sigma^\ell$ does not appear in the system of Case 1, but does appear in the condition (v). Here, $\sigma^\ell$ can be set to $\infty$ to satisfy (v), without affecting other variables. We detail the complete solutions\footnote{If one considers general settings with more than two budget types, the equations \eqref{eq:SB_SOE} generally form an under-determined system for the unknown variables. Hence, there may not exist unique solutions. These challenges are considered for future work.} to \eqref{eq:SB_SOE}, and prove their associated strategies (from \eqref{eq:A_marginals}) constitute equilibria in the result below.
\begin{proposition}\label{prop:SB_equil}
	The set of game instances $G = (A^h,A^\ell,p,B)$ for which a solution to \eqref{eq:SB_SOE} exists and their corresponding equilibrium strategies and payoffs are given as follows. Define $\gamma^t := \frac{A^t}{B}$ for $t \in \{h,\ell\}$.
	
	\noindent{\textbf{\underline{Case 1}}:} 
	Suppose $\gamma^h \leq 1 \text{ and } \gamma^\ell = 0$. The solution to \eqref{eq:SB_SOE} is given by $\lambda^h = \frac{1}{2B}$, $\lambda_\mcb = \frac{pA^h}{2B^2}$, and $\lambda^\ell \geq \frac{1}{2B}$. The equilibrium strategies are
	\begin{equation}\label{eq:case1_BNE}
    	F_\mca^{h} = (1-\gamma^h)\bs{\delta}_0 + \gamma^h \text{\emph{Unif}}(0,2B), \quad F_\mca^{\ell} = \bs{\delta}_0, \quad F_\mcb = \text{\emph{Unif}}(0,2B)
    \end{equation}
    and the (ex-ante) equilibrium payoffs are
    \begin{equation}
        \pi_A = \frac{p \gamma^h}{2}, \quad\quad \pi_B = p (1-\frac{\gamma^h}{2}) + (1-p)
    \end{equation}
    
    \noindent{\textbf{\underline{Case 2}}:} Suppose $\gamma^\ell \leq H\left(\gamma^h \right)$, where $H$ is defined in \eqref{eq:G}. The unique solution to \eqref{eq:SB_SOE} is given by $\sigma^\ell = (1-\frac{B}{A^h})\sqrt{\frac{A^h/((1-p)A^\ell)}{p + (1-p)A^\ell/A^h}}$, $\sigma^h = \frac{B - (1-p)\sigma^\ell A^\ell}{p A^h}$, and $\lambda_\mcb = \frac{\left(\sqrt{(1-p)A^\ell} + \sqrt{p A^h +  (1-p)A^\ell}\right)^2}{2(A^h)^2}$. The equilibrium strategies are
    \begin{equation}\label{eq:case2_BNE}
    	\begin{aligned}
        	F_\mca^{h} &= \text{\emph{Unif}}\left(L^\ell, L^\ell + L^h \right), \quad F_\mca^{\ell} = \left(1-\frac{1-p\sigma^h}{(1-p)\sigma^\ell}\right)\bs{\delta}_0 + \frac{1-p\sigma^h}{(1-p)\sigma^\ell}\text{\emph{Unif}}\left(0,L^\ell \right)   \\
        	F_\mcb &= (1-p\sigma^h)\text{\emph{Unif}}\left(0,L^\ell \right) + p\sigma^h \text{\emph{Unif}}\left(L^\ell, L^\ell + L^h \right) \\
    	\end{aligned}
    \end{equation}
    where $L^h = \frac{p}{\lambda_\mcb}$ and $L^\ell = \frac{1-p\sigma^h}{\lambda^\ell}$. The equilibrium payoffs are
    \begin{equation}
    	\pi_A = p (1-p\sigma^h)\left(1- \frac{\sigma^h}{\sigma^\ell}\right) + \lambda_\mcb B, \quad\quad \pi_B = \lambda_\mcb B - \frac{1 - p\sigma^h}{\sigma^\ell} + (1-p)
    \end{equation}
    
    \noindent{\textbf{\underline{Case 3}}:} Suppose $\frac{A^\ell}{A^h} = \frac{1-p}{2-p} \text{ and } 2-p < \frac{A^h}{B} < 2 + \frac{p}{1-p}$. A solution to \eqref{eq:SB_SOE} is of the form $\lambda_\mcb = \frac{2-p}{2A^h}$, $\sigma^h \in \left( \frac{ \frac{B}{A^h}\left(2 + \frac{p}{1-p}- \frac{A^h}{B}\right)}{p\left(1+\frac{p}{1-p}\right)} ,\frac{\frac{B}{A^h} \left(2 + \frac{p}{1-p}\right)}{p\left(2+\frac{p}{1-p}+\frac{1-p}{p}\right)} \right)$, and $\sigma^\ell = \frac{B-p\sigma^h A^h}{(1-p)A^\ell}$. The equilibrium strategies are
    \begin{equation}\label{eq:case3_BNE}
    	\begin{aligned}
        	F_\mca^{h} &=  \text{\emph{Unif}}\left(L^\ell,L^\ell+L^h\right), \quad F_\mca^{\ell} = \text{\emph{Unif}}\left(0,L^\ell\right) \\
        	F_\mcb(x) &= (1-p\sigma^h + (1-p)\sigma^\ell)\bs{\delta}_0 + (1-p)\sigma^\ell\text{\emph{Unif}}\left(0,L^\ell\right) + p\sigma^h\text{\emph{Unif}}\left(L^\ell,L^\ell+L^h\right) \\
    	\end{aligned}
    \end{equation}
    where $L^h = \frac{p}{\lambda_\mcb}$ and $L^\ell = \frac{1-p}{\lambda_\mcb} = 2A^\ell$. The equilibrium payoffs are given by 
    \begin{equation}
    \pi_A = 1 - \lambda_\mcb B, \quad\quad \pi_B = \lambda_\mcb B.
    \end{equation}
\end{proposition}
\begin{proof}
    We divide this proof into two parts. In the first part, we detail the steps used in each Case to calculate the algebraic solution to \eqref{eq:SB_SOE} and the set of game instances for which it is valid. In the second part, we provide a proof that the corresponding strategies recovered from \eqref{eq:A_marginals} do in fact constitute an equilibrium to the BL game. We will rely on shorthand notations $\gamma_i = A_i/B$ when convenient.
    
    \noindent{\textbf{\underline{Case 1}}:}
    The solution to \eqref{eq:SB_SOE} can directly be found to be $\lambda^h = \frac{1}{2B}$, $\lambda_\mcb = \frac{pA^h}{2B^2}$, and any $\lambda^\ell \geq \frac{1}{2B}$ (to satisfy (v)). Such a solution must also satisfy (iv), $p \sigma^h = 1/A^h \geq 1$. Combined with (ii), the set of valid game parameters is $\gamma^h \leq 1$ and $\gamma^\ell = 0$: player $\mca$'s budget in type $h$ is smaller than player $\mcb$'s budget, and has a budget of zero in type $\ell$. Since $\lambda^\ell$ does not appear in the algebraic equations of \eqref{eq:SB_SOE} (only in the constraints), this is essentially unique. Plugging these values into \eqref{eq:A_marginals}, we obtain the resulting strategies. 
    
    \noindent{\textbf{\underline{Case 2}}:}
	To solve for $\lambda_\mcb$, we have $1 - p\sigma^h = \sqrt{2(1-p)\lambda_{2}\sigma_{2}A^\ell}$ from (ii). Substituting into (i), we obtain a quadratic equation in $\sqrt{\lambda_\mcb} > 0$. Its (positive) solution yields the expression for $\lambda_\mcb$.
	
	Multiplying (ii) by $\frac{A^h}{A^\ell}$, the RHS of equations (i) and (ii) become equivalent. From (iv) of \eqref{eq:SB_SOE}, we use the substitution $1-p\sigma^h = 1-
	(\gamma^h)^{-1}+ (1-p)\sigma^\ell\frac{\gamma^\ell}{\gamma^h}$ to obtain
	$\sigma^\ell = \lvert 1-(\gamma^h)^{-1}\mid\sqrt{\frac{\gamma^h/((1-p)\gamma^\ell)}{p + (1-p)\gamma^\ell/\gamma^h}}$. The condition (iv) requires $p\sigma^h < 1$. Using the substitution $\sigma^h = \frac{1-(1-p)\sigma^\ell\gamma^\ell}{p\gamma^h}$ from (iii), we deduce that $\gamma^h > 1$:
	\begin{equation}
    	\begin{aligned}
    	p\sigma^h &= (\gamma^h)^{-1}\left(1-(1-p)\gamma^\ell\lvert 1-(\gamma^h)^{-1}\rvert\cdot\sqrt{\frac{\gamma^h/((1-p)\gamma^\ell)}{p + (1-p)\gamma^\ell/\gamma^h}} \right) \\ 
    	&= (\gamma^h)^{-1} - \lvert 1-(\gamma^h)^{-1}\rvert \cdot \sqrt{\frac{(1-p)\gamma^\ell/\gamma^h}{p + (1-p)\gamma^\ell/\gamma^h}} < 1 \\
    	&\Rightarrow 1-(\gamma^h)^{-1} > -\lvert 1-(\gamma^h)^{-1}\rvert \Rightarrow \gamma^h > 1
    	\end{aligned}
	\end{equation}
	We can also deduce from (iii) and $\gamma^h \geq \gamma^\ell$ that $\gamma^\ell \leq 1$. The condition (iv) also requires $p \sigma^h + (1-p)\sigma^\ell \geq 1$. From this, we obtain 
	\begin{equation}\label{eq:F2}
	    \gamma^\ell \leq \frac{1-p}{2-p}\gamma^h.
	\end{equation}
	Furthermore, the positivity of $\sigma^\ell$ is trivially satisfied. However,  positivity of $\sigma^h$ requires that $1 - (1-p)\sigma^\ell\gamma^\ell > 0$. Plugging in the expression for $\sigma^\ell$, we obtain $\gamma^\ell (1-p)\left( 2-\gamma^h\right) > -p$.	Hence, the positivity constraint $\sigma^h > 0$ is equivalent to
	\begin{equation}\label{eq:F3}
    	\begin{aligned}
        	\gamma^h \leq 2,  \text{ or }  \gamma^h > 2 \text{ and } \gamma^\ell < \frac{p}{(1-p)\left( \gamma^h-2\right)}  .
    	\end{aligned}
	\end{equation}
	
	Lastly, the constraint (v) requires $\sigma^h \leq \sigma^\ell$. Plugging in the expression for $\sigma^\ell$, we deduce that $\gamma^\ell \left( 1 - (\gamma^h-1)^2 \right) \leq (\gamma^h-1)^2\gamma^h\frac{p}{1-p}$. The term in parentheses on the LHS is positive when $\gamma^h < 2$, and negative otherwise. Hence, we obtain
	\begin{equation}\label{eq:F1}
    	\gamma^\ell 
    	\begin{cases}
        	\leq \frac{\frac{p}{1-p}(\gamma^h-1)^2}{2-\gamma^h}, &\text{if } 1 < \gamma^h \leq 2 \\
        	\geq 0, &\text{if } \gamma^h > 2 
    	\end{cases}
	\end{equation}
	The intersection of conditions \eqref{eq:F1},\eqref{eq:F2}, and \eqref{eq:F3} on the budget parameters $A^h$ and $A^\ell$, derived directly from (iv) and (v), yields $\gamma^\ell \leq H\left(\gamma^h\right)$, where $H$ was defined in \eqref{eq:G}. This establishes the set of games for which the system \eqref{eq:SB_SOE} has a solution in Case 2.

\noindent{\underline{\textbf{Case 3}}:}
	We can directly obtain $\lambda_\mcb =  \frac{2-p}{2 A^h}$. Note that $\lambda_\mcb =  \frac{1-p}{2A^\ell}$ as well, from which we obtain $A^\ell = \frac{1-p}{2-p}A^h$. From (iii), we have $\sigma^\ell = \frac{B-p A^h\gamma^h}{(1-p)A^\ell}$. Substituting this in the condition (iv), $p\sigma^h + (1-p)\sigma^\ell<1$, we obtain $\sigma^h > \frac{(\gamma^h)^{-1}\left(2 + \frac{p}{1-p}-\gamma^h\right)}{p\left(1+\frac{p}{1-p}\right)}$. Similarly, constraint (v), $\sigma^h \leq \sigma^\ell$, yields $\sigma^h \leq \frac{(\gamma^h)^{-1}\left(2 + \frac{p}{1-p}\right)}{p\left(2+\frac{p}{1-p}+\frac{1-p}{p}\right)}$. A feasible $\sigma^h$ exists within these constraints if and only if $\gamma^h > 1 + \frac{1 + \frac{1-p}{p}}{2 + \frac{p}{1-p} + \frac{1-p}{p}} = 2-p$ (upper bound must be larger than lower bound), and $\gamma^h < 2 + \frac{p}{1-p}$ (lower bound must be positive). Subsequently, \eqref{eq:A_marginals} recovers the strategies \eqref{eq:case3_BNE}. The union of characterized parameter sets in all three cases constitutes the $\mcal{R}_5$ region in Theorem \ref{thm:BNE_regions}.
	
	\noindent\underline{\textbf{Part 2}}: We prove the strategy profile $(F_\mca,F_\mcb)$ recovered from \eqref{eq:A_marginals} is an equilibrium\footnote{This proof can be extended to scenarios where player $\mca$ has an arbitrary number $m$ endowment types. That is, if one can derive a solution to the system of $m+1$ equations (instead of just 3 in \eqref{eq:SOE}), the profile $(F_\mca,F_\mcb)$ one constructs using Siegel's algorithm (in analogous manner to Section \ref{sec:APA_IUI}) is an equilibrium to the BL game. Characterizing solutions to a system of $m+1$ equations, however, is a non-trivial extension that we leave for future study.}. We can immediately deduce the strategies in Case 1 are equilibria to the BL game by observing that player $\mca$ has zero budget in type $\ell$, and $(F_\mca^{h},F_\mcb)$ forms the unique equilibrium to the complete information General Lotto game \cite{Hart_2008} with a single battlefield of value $p$. We will focus on the strategies produced from Case 2, as the proof for Case 3  follows analogous arguments. 
	
	We first calculate the (ex-interim) payoffs from the strategies \eqref{eq:case2_BNE}.
	\begin{equation}
	    \begin{aligned}
	        U_A(F_\mca^{h},F_\mcb) &= \int_{0}^{\infty} F_\mcb(x) \,dF_\mca^{h} = \int_{L^\ell}^{L^\ell+L^h} \left[ L^\ell \lambda^\ell + \lambda^h(x - L^\ell) \right] \frac{\lambda_\mcb}{p} dx \\
	        &= (1-p \sigma^h)\left(1- \frac{\sigma^h}{\sigma^\ell} \right) + \lambda^h A^h \\
	        U_A(F_\mca^{\ell},F_\mcb) &= \int_{0}^{\infty} F_\mcb(x) \,dF_\mca^{\ell} = \int_{0}^{L^\ell} \lambda^\ell x \cdot \frac{\lambda_\mcb}{1-p} dx = \lambda^\ell A^\ell
	    \end{aligned}
	\end{equation}
	The expected payoff (first equation of \eqref{eq:Bayesian_payoff}) to player $\mca$ is then $\pi_A = p (1-p\sigma^h)\left(1- \frac{\sigma^h}{\sigma^\ell}\right) + \lambda_\mcb B$ (using (iii)). The payoff to player $\mcb$ is $\pi_B = 1 - \pi_A$. We need to show $F_\mca$ is a best-response to $F_\mcb$, and vice versa.
	
	For any $F_\mca'^{h} \in \mcal{L}(A^h)$, the payoff in type $h$ is
	\begin{equation}
	    \begin{aligned}
	        U_A(F_\mca'^{h},F_\mcb) &= \int_0^{L^\ell} \lambda^\ell x \,dF_\mca'^{h} + \int_{L^\ell}^{L^\ell+L^h}\left[ L^\ell \lambda^\ell + \lambda^h(x - L^\ell) \right]\,dF_\mca'^{h} + \int_{L^\ell + L^h}^\infty dF_\mca'^{h} \\
	        &= (\lambda^\ell - \lambda^h)\left( \int_0^{L^\ell} x\,dF_\mca'^{h} - L^\ell \int_0^{L^\ell} dF_\mca'^{h} \right) \\ 
	        &\quad + \lambda^h \left( (L^\ell + L^h)\int_{L^\ell}^{L^\ell+L^h} dF_\mca'^{h} -  \int_{L^\ell}^{L^\ell+L^h} x\,dF_\mca'^{h} \right) + \lambda^h A^h + L^\ell(\lambda^\ell-\lambda^h) \\
	        &\leq \lambda^h A^h + L^\ell(\lambda^\ell-\lambda^h) = \lambda^h A^h + (1-\frac{\lambda^h}{\lambda^\ell})(1-p\sigma^h).
	    \end{aligned}
	\end{equation}
	In the second equality, we used the identities $\int_{L^\ell}^{L^\ell+L^h} x\,dF_\mca'^{h} = A^h - \int_0^{L^\ell} x\,dF_\mca'^{h} - \int_{L^\ell+L^h}^\infty x\,dF_\mca'^{h}$ and $\int_{L^\ell}^{L^\ell+L^h} dF_\mca'^{h} = 1 - \int_0^{L^\ell} dF_\mca'^{h} - \int_{L^\ell+L^h}^\infty dF_\mca'^{h}$. The inequality follows from two applications of Markov's inequality: $\int_0^{L^\ell} x\,dF_\mca'^{h} \leq L^\ell\int_0^{L^\ell} dF_\mca'^{h}$, and $-  \int_{L^\ell}^{L^\ell+L^h} x\,dF_\mca'^{h} \leq -(L^\ell+L^h) \int_{L^\ell}^{L^\ell+L^h} dF_\mca'^{h}$. Hence, the payoff in type $h$ is upper-bounded by $\lambda^h A^h + (1-\frac{\lambda^h}{\lambda^\ell})(1-p\sigma^h)$, which can be attained whenever $\text{supp}(F_\mca'^{h}) \subseteq [L^\ell,L^\ell+L^h]$ (for which the Markov inequalities hold with equality). We provide analogous calculations for any $F_\mca'^{\ell} \in \mcal{L}(A^\ell)$:
	\begin{equation}
	    \begin{aligned}
	        U_A(F_\mca'^{\ell},F_\mcb) &= \int_0^{L^\ell} \lambda^\ell x \,dF_\mca'^{\ell} + \int_{L^\ell}^{L^\ell+L^h}\left[ L^\ell \lambda^\ell + \lambda^h(x - L^\ell) \right]\,dF_\mca'^{\ell} + \int_{L^\ell + L^h}^\infty dF_\mca'^{\ell} \\
	        &= (\lambda^\ell - \lambda^h)\left(L^\ell \int_{L^\ell}^{L^\ell+L^h} dF_\mca'^{\ell} - \int_{L^\ell}^{L^\ell+L^h} x\,dF_\mca'^{\ell} \right) \\
	        &\quad + \left( \int_{L^\ell+L^h}^\infty dF_\mca'^{\ell} - \lambda^\ell \int_{L^\ell+L^h}^\infty dF_\mca'^{\ell} \right) + \lambda^\ell A^\ell \\
	        &\leq -p(\sigma^\ell - \sigma^h) \int_{L^\ell+L^h}^\infty dF_\mca'^{\ell} + \lambda^\ell A^\ell \\
	        &\leq \lambda^\ell A^\ell
	    \end{aligned}
	\end{equation}
	The first inequality is similarly obtained from two applications of Markov's inequality. The second inequality follows from the condition (v), $\sigma^\ell \geq \sigma^h$. Hence, the payoff in type $\ell$ is upper-bounded by $\lambda^\ell A^\ell$, which can be attained whenever $\text{supp}(F_\mca'^{\ell}) \subseteq [0,L^\ell]$. The strategy $F_\mca$ satisfies these properties, and hence is a best-response to $F_\mcb$.

	For any $F_\mcb' \in \mcal{L}(B)$, player $\mcb$'s expected payoff \eqref{eq:Bayesian_payoff} is
	\begin{equation}
	    \begin{aligned}
	        \Pi_B(F_\mcb',F_\mca) &= p\left[ \int_{L^\ell}^{L^\ell+L^h} \frac{\lambda_\mcb}{p}(x-L^\ell) \,dF_\mcb' + \int_{L^\ell+L^h}^\infty dF_\mcb' \right] \\
	        &\quad + (1-p)\left[ \int_{0}^{L^\ell} \left(1 - \frac{\lambda_\mcb L^\ell}{1-p} + \frac{\lambda_\mcb}{1-p}x \right) \,dF_\mcb' + \int_{L^\ell}^\infty dF_\mcb' \right] \\
	        &= \lambda_\mcb \left( \int_0^{L^\ell+L^h} x\,dF_\mcb' - L^\ell \int_0^{L^\ell+L^h} dF_\mcb' \right) + p\int_{L^\ell+L^h}^\infty dF_\mcb' + (1-p) \\
	        &= \lambda_\mcb B - \lambda_\mcb L^\ell + (1-p) + \lambda_\mcb\left( (L^h + L^\ell) \int_{L^\ell+L^h}^\infty dF_\mcb' - \int_{L^\ell+L^h}^\infty x \,dF_\mcb' \right) \\
	        &\leq \lambda_\mcb B - \lambda_\mcb L^\ell + (1-p) = \pi_B
	    \end{aligned}
	\end{equation}
	Player B's expected payoff is upper-bounded by $\pi_B$, which can be attained for any strategy with $\text{supp}(F_\mcb') \subseteq [0,L^\ell+L^h]$. Because $F_\mcb$ is one such strategy, it is a best-response to $F_\mca$.
\end{proof}

\subsection{Equilibria in regions $\mcal{R}_1$ - $\mcal{R}_4$}\label{sec:other_regions}

Here, we provide all derivations of equilibria corresponding to the payoffs in regions $\mcal{R}_i$, $i=1,\ldots,4$ (Theorem \ref{thm:BNE_regions}). As shown in Proposition \ref{prop:SB_equil}, such equilibria cannot be generated through the methods of Section \ref{sec:connection_method}. 

To first give some informal intuition, we provide some descriptions of the equilibria in these regions. The equilibrium strategies in $\mcal{R}_1$ and $\mcal{R}_2$ are convex combinations between an equilibrium strategy on the border\footnote{Since equilibria on the border are not necessarily unique, i.e. Case 3 parameters of Proposition \ref{prop:SB_equil}, the equilibria in the regions $\mcal{R}_1$ and $\mcal{R}_2$ are not unique. However, all equilibria in one game instance yield identical payoffs, since it is a constant sum game (in ex-ante payoffs).}  of $\mcal{R}_5$ and equilibrium in its corresponding benchmark complete information game $\text{GL}(\bar{A},B,0)$. As a result, the equilibrium payoff for any $G$ in $\mcal{R}_1$ or $\mcal{R}_2$ coincides with the equilibrium payoff of its corresponding benchmark game. In regions $\mcal{R}_3$ and $\mcal{R}_4$, the ``high" budget $A^h$ is disproportionately higher than the ``low" budget $A^\ell$. We find an equilibrium strategy for the uninformed player is to not compete against the high budget at all, thus giving up a payoff $p$ to the obfuscating player. In the forthcoming proofs, we make extensive use of Markov's inequality: $\int_a^b x \,dF \leq b \int_a^b dF$ for any distribution $F$.

\noindent\underline{\textbf{Region } $\mcal{R}_3$}: Suppose the game parameters belong to region 
\begin{equation}
	\mcal{R}_3 = \left\{ (\gamma^h, \gamma^\ell)\in\mcal{R} : \gamma^h \geq 2 + \frac{p}{1-p} \text{ and } 1 \leq \gamma^\ell \leq \frac{1-p}{2-p}\gamma^h \right\}.
\end{equation}
The following is an equilibrium.
\begin{equation}\label{eq:region3_BNE}
    F_\mca^{h} =  \text{Unif}\left(2A^\ell,2(A^h-A^\ell) \right), \quad F_\mca^{\ell} = \text{Unif}\left(0,2A^\ell\right), \quad F_\mcb = (1-(\gamma^\ell)^{-1})\bs{\delta}_0 + (\gamma^\ell)^{-1}\text{Unif}\left(0,2A^\ell\right)
\end{equation}
The equilibrium payoff is given by $\pi_A = p + (1-p)\left(1 - \frac{1}{2\gamma^\ell}\right)$.
\begin{proof}
	First, we show $F_\mca$ is a best-response to $F_\mcb$. For any $\left\{F_\mca'^{(i)} \in \mathcal{L}(A_i)\right\}_{t=h,\ell}$, player $\mca$'s expected payoff is
	\begin{equation}
    	\begin{aligned}
    	&p \left[ \int_{0}^{2A^\ell} \left( 1 - (\gamma^\ell)^{-1} + \frac{(\gamma^\ell)^{-1}}{2A^\ell}x \right) \,dF_\mca'^{h} + \int_{2A^\ell}^\infty dF_\mca'^{h} \right] \\
    	&\quad + (1-p)\left[ \int_{0}^{2A^\ell} \left( 1 - (\gamma^\ell)^{-1} + \frac{(\gamma^\ell)^{-1}}{2A^\ell}x \right) \,dF_\mca'^{\ell} + \int_{2A^\ell}^\infty dF_\mca'^{\ell} \right] \\
    	&\leq p + (1-p) \left[ \int_{0}^{2A^\ell} \left( 1 - (\gamma^\ell)^{-1} + \frac{(\gamma^\ell)^{-1}}{2A^\ell}x \right) \,dF_\mca'^{\ell} + \int_{2A^\ell}^\infty dF_\mca'^{\ell} \right] \\
    	\end{aligned}
	\end{equation}
	The inequality follows by selecting any $F_\mca'^{h}$ such that $\text{supp}(F_\mca'^{h}) \subset [2A^\ell,\infty)$, which awards player $\mca$ the payoff $p$ from state 1 outright. This is possible because $\gamma^h \geq 2 + \frac{p}{1-p} > 2$, from the assumption. The expression above can be re-written and upper-bounded as follows:
	\begin{equation}
    	\begin{aligned}
    	&p + (1-p)\left[\left( 1 - (\gamma^\ell)^{-1}\right) \int_0^{2A^\ell}dF_\mca'^{\ell} + \frac{(\gamma^\ell)^{-1}}{2} + \int_{2A^\ell}^\infty dF_\mca'^{\ell} - \frac{(\gamma^\ell)^{-1}}{2A^\ell} \int_{2A^\ell}^\infty x\,dF_\mca'^{\ell} \right] \\
    	&\quad\leq p + (1-p) \left( 1 - \frac{(\gamma^\ell)^{-1}}{2}\right)
    	\end{aligned}
	\end{equation}
	The inequality holds with equality if and only if $\text{supp}(F_\mca'^{\ell}) \subseteq [0,2A^\ell]$. We have thus established an upper bound on A's payoff to $F_\mcb$ that is achieved by $F_\mca$.
	
	Now we show $F_\mcb$ is a best-response to $F_\mca$. Let  $K := (1-p) - \frac{p \gamma^\ell}{\gamma^h-2\gamma^\ell} \geq 0$, which is non-negative due to the assumption $\gamma^\ell \leq \frac{1-p}{2-p}\gamma^h$. For any $F_\mcb' \in \mathcal{L}(B)$, player $\mcb$'s payoff is
	\begin{equation}
	\begin{aligned}
	&p \left[ \int_{2A^\ell}^{2(A^h-A^\ell)} \frac{x-2A^\ell}{2(A^h-2A^\ell)} \,dF_\mcb' + \int_{2(A^h-A^\ell)}^\infty dF_\mcb' \right] + (1-p) \left[ \int_{0}^{2A^\ell} \frac{x}{2A^\ell} \,dF_\mcb' + \int_{2A^\ell}^\infty dF_\mcb' \right] \\
	&= \frac{1-p}{2A^\ell}\int_0^{2A^\ell} x \,dF_\mcb' + \frac{p}{2(A^h-2A^\ell)}\int_{2A^\ell}^{2(A^h-A^\ell)} x\,dF_\mcb' - \left( \frac{p A^\ell}{A^h - 2A^\ell} - (1-p) \right) \int_{2A^\ell}^{2(A^h-A^\ell)} \,dF_\mcb' \\
	&\quad\quad + \int_{2(A^h-A^\ell)}^\infty dF_\mcb' \\
	\end{aligned}
	\end{equation}
	Applying the identity $\int_{0}^{2A^\ell} x\,dF_\mcb' = B - \int_{2A^\ell}^{2(A^h-A^\ell)} x\,dF_\mcb' - \int_{2(A^h-A^\ell)}^\infty x\,dF_\mcb'$, we then obtain
	\begin{equation}
	\begin{aligned}
	&= K\left( - \frac{1}{2A^\ell} \int_{2A^\ell}^{2(A^h-A^\ell)} x\,dF_\mcb' + \int_{2A^\ell}^{2(A^h-A^\ell)} \,dF_\mcb' \right) + (1-p)\frac{(\gamma^\ell)^{-1}}{2} \\
	&\quad + \int_{2(A^h-A^\ell)}^\infty dF_\mcb' - \frac{1-p}{2A^\ell}\int_{2(A^h-A^\ell)}^\infty x \,dF_\mcb' \\
	&\leq  (1-p)\frac{(\gamma^\ell)^{-1}}{2} + \left( p + (1-p)\left(2 - \frac{\gamma^h}{\gamma^\ell} \right) \right) \int_{2(A^h-A^\ell)}^\infty dF_\mcb' \\
	&\leq  (1-p)\frac{(\gamma^\ell)^{-1}}{2}
	\end{aligned}
	\end{equation}
	The first inequality results from applying Markov's inequality to $\int_{2A^\ell}^{2(A^h-A^\ell)} x\,dF_\mcb'$ and $\int_{2(A^h-A^\ell)}^\infty x\,dF_\mcb'$.  The second inequality follows from non-positivity of term in parentheses (from assumption of the Lemma). This inequality holds with equality if and only if $\text{supp}(F_\mcb') \subseteq [0,2A^\ell]$. We have thus established an upper bound on player $\mcb$'s payoff to $F_\mca$ that is achieved by $F_\mcb$.

\end{proof}

\noindent\underline{\textbf{Region } $\mcal{R}_4$}: Suppose the game parameters belong to region 
\begin{equation}
	\mcal{R}_4 := \left\{ (\gamma^h, \gamma^\ell)\in\mcal{R} : \gamma^h \geq 2 + \frac{p}{1-p} \text{ and } \frac{p}{(1-p)(\gamma^h-2)} \leq \gamma^\ell \leq 1 \right\}.
\end{equation}
The following is an equilibrium: $F_\mca^{h} = \text{Unif}\left(2B,2(A^h-B)\right)$,  $F_\mca^{\ell} = (1-\gamma^\ell)\bs{\delta}_0 + \gamma^\ell\text{Unif}\left(0,2B\right)$, $F_\mcb = \text{Unif}\left(0,2B\right)$, and the equilibrium payoff is given by $\pi_A(G) = p + (1-p)\frac{\gamma^\ell}{2}$.

\begin{proof}
	First, show $F_\mca$ is a best-response to $F_\mcb$. player $\mca$'s payoff for any $\left\{F_\mca'^{(i)} \in \mathcal{L}(A_i)\right\}_{t=h,\ell}$ is
	\begin{equation}
	\begin{aligned}
	&p \left[ \int_{0}^{2B}\frac{x}{2B} \,dF_\mca'^{h} + \int_{2B}^\infty dF_\mca'^{h} \right] + (1-p) \left[ \int_{0}^{2A^\ell}\frac{x}{2B} \,dF_\mca'^{\ell} + \int_{2B}^\infty dF_\mca'^{\ell} \right] \\
	&\leq p + (1-p) \left[ \int_{0}^{2A^\ell}\frac{x}{2B} \,dF_\mca'^{\ell} + \int_{2B}^\infty dF_\mca'^{\ell} \right] \\
	&= p + (1-p) \left[ \frac{\gamma^\ell}{2} + \int_{2B}^\infty dF_\mca'^{\ell} - \frac{1}{2B}\int_{2B}^\infty x\,dF_\mca'^{\ell} \right] \\
	&\leq p + (1-p)\frac{\gamma^\ell}{2}.
	\end{aligned}
	\end{equation}
	The first inequality follows by selecting any $F_\mca'^{h}$ such that $\text{supp}(F_\mca'^{h}) \subset [2B,\infty)$, which awards player $\mca$ the payoff $p$ outright. This is possible because $\gamma^h \geq 2$, from the assumption. The second inequality follows by applying Markov's inequality to $\int_{2B}^\infty x\,dF_\mca'^{\ell}$. This inequality holds if and only if $\text{supp}(F_\mca'^{\ell}) \subseteq [0,2B]$.
	
	Now we show $F_\mcb$ is a best-response to $F_\mca$. For any $F_\mcb \in \mathcal{L}(B)$, player $\mcb$'s payoff is
	\begin{equation}
    	\begin{aligned}
    	&\frac{(1-p)\gamma^\ell}{2B}\int_0^{2B} x\,dF_\mcb' + (1-p)\left(1 - \gamma^\ell\right)\int_0^{2B} dF_\mcb' \\
    	&\quad + \frac{p}{2(A^h-2B)}\int_{2B}^{2(A^h-B)} x\,dF_\mcb' + \left((1-p)\gamma^\ell - \frac{p}{\gamma^h-2}\right) \int_{2B}^{2(A^h-B)} dF_\mcb' + \int_{2(A^h-B)}^\infty dF_\mcb' \\
    	&= K\left( \int_{2B}^{2(A^h-B)} dF_\mcb' - \frac{1}{2B}\int_{2B}^{2(A^h-B)}  x\,dF_\mcb' \right) \\
    	&\quad + \left(p + (1-p)\gamma^\ell\right)\int_{2(A^h-B)}^\infty dF_\mcb'  - \frac{(1-p) \gamma^\ell}{2B}\int_{2(A^h-B)}^\infty x\,dF_\mcb' + (1-p)\left(1 - \frac{\gamma^\ell}{2}\right) \\
    	&\leq (1-p)\left(1 - \frac{\gamma^\ell}{2}\right)
    	\end{aligned}
	\end{equation}
	where $K := (1-p)\gamma^\ell - \frac{p}{\gamma^h-2} > 0$, from the assumption. In the equality, we substituted $\int_0^{2B} x\,dF_\mcb' = B - \int_{2B}^{2(A^h-B)} x\,dF_\mcb' - \int_{2(A^h-B)}^\infty x\,dF_\mcb'$. The inequality is due to applying Markov's inequality and noting the expression $p + (1-p)\gamma^\ell - (1-p)\gamma^\ell\left(\gamma^h - 1  \right) = p + (1-p)\gamma^\ell \left(2 - \gamma^h\right) \leq 0$ holds from the assumptions. This holds with equality if and only if $\text{supp}(F_\mcb') \subseteq [0,2B]$.
\end{proof}

\noindent\underline{\textbf{Regions } $\mcal{R}_1$ and $\mcal{R}_2$}:

Consider the set of $(\gamma^h,\gamma^\ell) \in \mcal{R}_1 \cup \mcal{R}_2$ that have a fixed average budget $\bar{\gamma}$. Define 
\begin{equation}
\bs{\gamma}^{\text{bd}} := 
\begin{cases}
(\bar{\gamma}/p,0) \in \mcal{R}, &\text{if } \bar{\gamma} \leq p \\
(2 - p/\bar{\gamma},H(2-p/\bar{\gamma})) \in \mcal{R}, &\text{if } p < \bar{\gamma} \leq 1 \\
((2-p)\bar{\gamma}, (1-p)\bar{\gamma}) \in \mcal{R}, &\text{if } 1 < \bar{\gamma}
\end{cases}
\end{equation}
where $H$ is defined in \eqref{eq:G} and $\mcal{R} = \{(\gamma^h,\gamma^\ell) : \gamma^h \geq \gamma^\ell\}$. The points $\bs{\gamma}^{\text{bd}}$ specified above for $\bar{\gamma} \leq 1$ are on the border of $\mcal{R}_5$, whose equilibria are given in Proposition \ref{prop:SB_equil}. The points for $1 < \bar{\gamma}$ are on the upper border of $\mcal{R}_3$, where a equilibrium is given in \eqref{eq:region3_BNE}. Define
\begin{equation}
F_\mca^{\text{bd}} := 
    \begin{cases}
    \text{given by } \eqref{eq:case1_BNE} \text{ at } \bs{\gamma}^{\text{bd}}, &\text{if } \bar{\gamma} \leq p \\
    \text{given by } \eqref{eq:case2_BNE} \text{ at } \bs{\gamma}^{\text{bd}}, &\text{if } p < \bar{\gamma} \leq 1 \\
    \text{given by } \eqref{eq:case3_BNE} \text{ at } \bs{\gamma}^{\text{bd}}, &\text{if } 1 < \bar{\gamma}
    \end{cases}
\end{equation}
Here, $F_\mca^{\text{bd}}$ is an equilibrium strategy for player $\mca$ at the boundary point $\bs{\gamma}^{\text{bd}}$. Let us also define $(\bar{F}_A,\bar{F}_B)$ as the equilibrium at $(\bar{\gamma},\bar{\gamma}) \in \mcal{R}$, which is simply the equilibrium in the corresponding complete information game. That is,
\begin{equation}
    \bar{F}_A = 
    \begin{cases}
    (1-\bar{\gamma})\delta_0 + \bar{\gamma}\text{Unif}([0,2B]), &\text{if } \bar{\gamma} \leq 1 \\
    \text{Unif}([0,2\bar{A}]), &\text{if } \bar{\gamma} > 1 \\
    \end{cases}, \quad 
    \bar{F}_B = 
    \begin{cases}
    \text{Unif}([0,2B]), &\text{if } \bar{\gamma} \leq 1 \\
    (1-\bar{\gamma}^{-1})\delta_0  + \bar{\gamma}^{-1}\text{Unif}([0,2\bar{A}]), &\text{if } \bar{\gamma} > 1 \\
    \end{cases}
\end{equation}

Let $\alpha \in [0,1]$ be the appropriate scaling that gives $\alpha \bs{\gamma}^{\text{bd}} + (1-\alpha)\cdot (\bar{\gamma},\bar{\gamma}) = (\gamma^h,\gamma^\ell)$. We claim the strategy profile $(\alpha F_\mca^{\text{bd}} + (1-\alpha) \bar{F}_A, \bar{F}_B)$ is an equilibrium for $(\gamma^h,\gamma^\ell) \in \mcal{R}_1 \cup \mcal{R}_2$.

\begin{proof}
	Since we know that $(\bar{F}_A,\bar{F}_B)$ is an equilibrium, it will suffice to show that $(F_\mca^{\text{bd}},\bar{F}_B)$ is also an equilibrium for all $\bar{\gamma}$. After lengthy calculations, we see that the payoffs from $(F_\mca^{\text{bd}},\bar{F}_B)$ coincide with the payoffs from $(\bar{F}_A,\bar{F}_B)$.

	For $\bar{\gamma} \leq p$, the equilibrium at $\bs{\gamma}^{\text{bd}}$ is given by Case 1 \eqref{eq:case1_BNE}, where player $\mcb$'s strategy is precisely $\bar{F}_B$. For $p < \bar{\gamma} \leq 1$, the equilibrium at $\bs{\gamma}^{\text{bd}}$ is given by Case 2 \eqref{eq:case2_BNE}, where it is also true that player $\mcb$'s strategy is $\bar{F}_B$ (on these border points, $\sigma^h = \sigma^\ell$). For $1 < \bar{\gamma} \leq \frac{1}{1-p}$, the equilibrium at $\bs{\gamma}^{\text{bd}}$ is given by Case 3 \eqref{eq:case3_BNE}. We note that although player $\mcb$'s equilibrium strategy here is not unique, $\bar{F}_B$ is one such strategy. 
	
	Lastly, for $\frac{1}{1-p} < \bar{\gamma}$, $(F_\mca^{\text{bd}},\bar{F}_B)$ is an equilibrium at $\bs{\gamma}^{\text{bd}}$, where $F_\mca^{\text{bd}}$ is player $\mca$'s equilibrium strategy at the border of $\mcal{R}_3$ \eqref{eq:region3_BNE}. This strategy is also identical to the monotonic equilibrium strategy from Case 3 \eqref{eq:case3_BNE}. Hence, the proof that $(F_\mca^{\text{bd}},\bar{F}_B)$ is an equilibrium follows from the analysis in Proposition \ref{prop:SB_equil}. Note that player $\mcb$'s $\mcal{R}_3$ equilibrium strategy written in \eqref{eq:region3_BNE} is not $\bar{F}_B$. Indeed, $\bar{F}_B$ in general is not an equilibrium strategy in the interior of $\mcal{R}_3$. At the border however, we know of at least two equilibria (giving the same payoffs), one of them being $(F_\mca^{\text{bd}},\bar{F}_B)$.
	
\end{proof}

\section{Optimal deterministic assignment -- per-unit cost setting}\label{sec:deterministic_proofs}


While the fixed budget setting of $\text{CAP}_\text{d}$ has previously been studied in the literature \cite{Kovenock_2020}, the per-unit cost setting has not. To derive the solution of $\text{CAP}_\text{d}$ under the per-unit cost setting, i.e. $A_C = B_1 = B_2 = \infty$ with $c,c_1,c_2 > 0$, we proceed with backwards induction. Since Theorem \ref{thm:BNE_regions} gives the equilibrium payoff of any BL game, we can readily evaluate the final payoffs in Stage 3. Thus, we must first solve for the optimal opponent investment decision in Stage 2. Recall that by the beginning of Stage 2, the commander has chosen a deterministic assignment $(A_1,A_2)$. The Stage 2 decision problem for opponent $\mcb_i$ is therefore
\begin{equation}\label{eq:Bi_CAPd_problem}
    \max_{B_i \geq 0} \left\{ \phi_i\cdot \pi^\text{CI}(B_i,A_i) - c_i B_i \right\}
\end{equation}
where $\pi^\text{CI}$ is defined in \eqref{eq:Lotto_CI}. Here, a complete information General Lotto game is played at Stage 3 since the opponents have observed the deterministic assignment $(A_1,A_2)$.

\begin{lemma}\label{lem:Bstar_CAPd}
    Consider the Stage 2 decision for opponent $\mcb_i$ in $\text{CAP}_\text{d}$ \eqref{eq:Bi_CAPd_problem}. The optimal investment for $\mcb_i$ is given by
    \begin{equation}\label{eq:Bstar_CAPd}
        B_i^*(A_i) = 
        \begin{cases}
            \sqrt{\frac{\phi_i A_i}{2 c_i}}, &\text{if } c_i < \frac{\phi_i}{2A_i} \\
            0, &\text{otherwise}
        \end{cases}
    \end{equation}
\end{lemma}
Recall we are using the assumption that $\mcb_i$ chooses the smallest investment among multiple maximizers of \eqref{eq:Bi_CAPd_problem}. In Lemma \ref{lem:Bstar_CAPd}, this only arises if $c_i = \frac{\phi_i}{2A_i}$.

Now that we have established the optimal decision in Stage 2, we can address the commander's assignment decision problem in Stage 1. As an intermediate step, we first consider the optimal assignment problem when the commander has a fixed budget $A$ and zero cost $c = 0$:

\begin{equation}\label{eq:zerocost_deterministic_commander}
    \max_{A_1 \in [0,A]} \left\{ \phi_1\cdot\pi^\text{CI}(A_1,B_1^*) + \phi_2\cdot\pi^\text{CI}(A-A_1,B_2^*)  \right\}
\end{equation}

Below, we give the optimal assignment of \eqref{eq:zerocost_deterministic_commander}.

\begin{lemma}\label{lem:commander_det_nocost}
	The optimal deterministic assignment to \eqref{eq:zerocost_deterministic_commander} is given by the following cases. Define $Q_i := \sqrt{\frac{c_i\phi_i({A}-\frac{\phi_{-i}}{2c_{-i}})}{2}}$ for $i=1,2$.
	
	\noindent\textbf{Case 1:} ${A} < \min_{i=1,2}\{\frac{\phi_i}{2c_i}\}$. Then ${A}_1^* = \frac{c_1\phi_1}{c_1\phi_1+c_2\phi_2}{A}$ and $W_\text{d}^* = \sqrt{\frac{{A}(c_1\phi_1+c_2\phi_2)}{2}}$.
	
	\noindent\textbf{Case 2:} $\min_{i=1,2}\{\frac{\phi_i}{2c_i}\} \leq {A} < \max_{i=1,2}\{\frac{\phi_i}{2c_i}\}$. Let $j = \argmin{i\in\{1,2\}} \{\frac{\phi_i}{2c_i}\} $. If ${A} \geq \frac{c_1\phi_1+c_2\phi_2}{2c_j^2}$, then ${A}_1^* = \frac{\phi_j}{2c_j}$ and $W_\text{d}^* = \phi_j+Q_{-j}$. If ${A} < \frac{c_1\phi_1+c_2\phi_2}{2c_j^2}$, then
	\begin{equation}
	\begin{aligned}
	{A}_j^* &= 
	\begin{cases}
	\frac{\phi_j}{2c_j}, &\text{if } \phi_j+Q_{-j} \geq \sqrt{\frac{{A}(c_1\phi_1+c_2\phi_2)}{2}}  \\
	\frac{c_j\phi_j}{c_1\phi_1+c_2\phi_2}{A}, &\text{if } \phi_j+Q_{-j} < \sqrt{\frac{{A}(c_1\phi_1+c_2\phi_2)}{2}} \\
	\end{cases} \\
	W_\text{d}({A}_1^*) &= \max\left\{\phi_j+Q_{-j}, \sqrt{\frac{{A}(c_1\phi_1+c_2\phi_2)}{2}} \right\}
	\end{aligned}
	\end{equation}
	
	\noindent\textbf{Case 3:} $\max_{i=1,2}\{\frac{\phi_i}{2c_i}\} \leq {A} < \frac{\phi_1}{2c_1} + \frac{\phi_2}{2c_2}$. Then 
	\begin{equation}
	\begin{aligned}
	{A}_j^* = 
	\begin{cases}
	{A} - \frac{\phi_{-j}}{2c_{-j}}, &\text{if } \phi_{-j}+Q_j \geq \phi_j+Q_{-j} \\
	\frac{\phi_j}{2c_j}, &\text{if } \phi_{-j}+Q_j < \phi_j+Q_{-j} \\
	\end{cases} \quad \text{and} \quad
	W_\text{d}^* = \max_{i=1,2} \left\{\phi_i + Q_{-i} \right\}
	\end{aligned}
	\end{equation}
	
	\noindent\textbf{Case 4:} $\frac{\phi_1}{2c_1} + \frac{\phi_2}{2c_2} \leq {A}$. Then ${A}_1^* \in [\frac{\phi_1}{2c_1},{A} - \frac{\phi_2}{2c_2})$ and $W_\text{d}^* = \phi_1+\phi_2$.
	
\end{lemma}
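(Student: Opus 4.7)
The plan is to reduce the deterministic assignment problem \eqref{eq:deterministic_assignment_problem} with $c=0$ and $B_i=\infty$ to a scalar optimization over $A_1 \in [0,A]$ (taking $A_2 = A-A_1$). Combining Lemma \ref{lem:GL_payoffs} with Lemma \ref{lem:Bstar}, the sub-colonel's equilibrium payoff as a function of its assignment is
\[
f_i(A_i) := \begin{cases} \sqrt{c_i\phi_i A_i/2}, & A_i < \phi_i/(2c_i), \\ \phi_i, & A_i \geq \phi_i/(2c_i), \end{cases}
\]
which is concave and strictly increasing below the threshold $\phi_i/(2c_i)$, jumps upward by $\phi_i/2$ at the threshold (the opponent abandons investment), and is flat $\phi_i$ thereafter. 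The commander's objective becomes the piecewise function $W_d(A_1)=f_1(A_1)+f_2(A-A_1)$, with at most two jump points at $A_1=\phi_1/(2c_1)$ and $A_1=A-\phi_2/(2c_2)$.

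Next I would enumerate the candidate maximizers according to the piecewise structure. On the "smooth" region where both sub-colonels lie below their thresholds, the first-order condition yields the unique interior critical point $A_1^{\sharp} = \frac{c_1\phi_1}{c_1\phi_1+c_2\phi_2}A$ with objective value $\sqrt{(c_1\phi_1+c_2\phi_2)A/2}$. On the branch where sub-colonel $i$ wins outright alone ($A_i \geq \phi_i/(2c_i)$ and $A-A_i<\phi_{-i}/(2c_{-i})$), $W_d$ reduces to $\phi_i + \sqrt{c_{-i}\phi_{-i}(A-A_i)/2}$, strictly decreasing in $A_i$, so its best feasible value is $\phi_i + Q_{-i}$ attained at $A_i = \phi_i/(2c_i)$. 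When $A \geq \phi_1/(2c_1)+\phi_2/(2c_2)$, both sub-colonels can simultaneously win outright, giving the trivial global bound $\phi_1+\phi_2$. The four cases in the statement correspond to which of these branches are feasible: in Case 1 only the smooth candidate is attainable, so $W_d^\ast=\sqrt{(c_1\phi_1+c_2\phi_2)A/2}$; in Case 4 the global bound is attained; in Case 3 both single-winner candidates are feasible; and in Case 2 only sub-colonel $j$ (the one with smaller threshold) can win outright, while the smooth candidate remains feasible iff $A_1^{\sharp}\leq \phi_j/(2c_j)$, i.e., $A \leq (c_1\phi_1+c_2\phi_2)/(2c_j^2)$, which yields the two sub-cases of Case 2 via the $\max$.

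The main obstacle will be the algebraic comparison between candidates in Case 3: one must show that $\max\{\phi_1+Q_2,\phi_2+Q_1\}$ weakly dominates the smooth interior value $\sqrt{(c_1\phi_1+c_2\phi_2)A/2}$ throughout $\max\{\phi_1/(2c_1),\phi_2/(2c_2)\} \leq A < \phi_1/(2c_1)+\phi_2/(2c_2)$. After squaring, this reduces to a family of quadratic inequalities in $A$ that follow from the Case 3 endpoint conditions and from the identity $Q_{-i}^2 = c_{-i}\phi_{-i}(A-\phi_i/(2c_i))/2$. A secondary but routine task is to confirm, via the monotonicity of each branch and the upward direction of every $f_i$-jump, that no additional left-limit maxima arise, and to verify that in Case 2 the symmetric branch (sub-colonel $-j$ winning outright) is genuinely infeasible since $A<\phi_{-j}/(2c_{-j})$.
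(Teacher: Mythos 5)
Your overall strategy—reducing to a scalar piecewise optimization over $A_1$ with $A_2 = A - A_1$, identifying the smooth interior critical point and the ``single-winner'' endpoint candidates, and sorting the four cases by which candidates are feasible—is the same decomposition the paper uses. Cases 1, 2, and 4 of your plan go through essentially as in the paper's proof.

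However, there is a genuine gap in your plan for Case 3. You propose to show that $\max\{\phi_1+Q_2,\phi_2+Q_1\}$ weakly dominates the \emph{unconstrained} smooth value $\sqrt{(c_1\phi_1+c_2\phi_2)A/2}$ throughout the Case 3 range. That inequality is false. Take $\phi_1=\phi_2=1$, $c_1=1$, $c_2=0.1$, $A=5$: this lies in Case 3 (since $\max_i\phi_i/(2c_i)=5\leq A<5.5$), yet $\max\{\phi_1+Q_2,\phi_2+Q_1\}=1+\sqrt{0.225}\approx 1.474$ while $\sqrt{(c_1\phi_1+c_2\phi_2)A/2}=\sqrt{2.75}\approx 1.658$. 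The lemma is still correct here because the unconstrained maximizer $A_1^{\sharp}=\frac{c_1\phi_1}{c_1\phi_1+c_2\phi_2}A\approx 4.545$ does not lie in the middle interval $[A-\phi_2/(2c_2),\,\phi_1/(2c_1))=[0,0.5)$, so the objective on that interval never attains $\sqrt{(c_1\phi_1+c_2\phi_2)A/2}$; you are comparing the endpoint candidates against an overestimate of the middle branch. The repair is the bound the paper uses: on the middle interval one has $A_1<\phi_1/(2c_1)$ and $A-A_1\leq\phi_2/(2c_2)$, hence $\sqrt{c_1\phi_1A_1/2}+\sqrt{c_2\phi_2(A-A_1)/2}\leq \phi_1/2+\phi_2/2$, and since $Q_i\geq 0$ at least one of $\phi_i+Q_{-i}$ exceeds $(\phi_1+\phi_2)/2$. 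With that substitution for your ``family of quadratic inequalities,'' Case 3 closes and the rest of your argument matches the paper's.
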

\begin{proof}
	Let us denote the objective function of \eqref{eq:zerocost_deterministic_commander} as $W_\text{d}({A}_1)$ for  ${A}_1 \in [0,{A}]$.
	
	\noindent\textbf{Case 1:} The objective is written as $W_\text{d}({A}_1) = \sqrt{\frac{c_1\phi_1 {A}_1}{2}} + \sqrt{\frac{c_2\phi_2({A}-{A}_1)}{2}}$    for all ${A}_1 \in [0,{A}]$. This function is concave and the derivative is zero at ${A}_1^* = \frac{c_1\phi_1}{c_1\phi_1+c_2\phi_2}{A} \in [0,{A}]$.
	
	\noindent\textbf{Case 2:} Suppose $\frac{\phi_1}{2c_1} \leq \frac{\phi_2}{2c_2}$ (analogous arguments when the opposite holds). The objective is written as
	\begin{equation}
	W_\text{d}({A}_1) = 
	\begin{cases}
	\sqrt{\frac{c_1\phi_1 {A}_1}{2}} + \sqrt{\frac{c_2\phi_2({A}-{A}_1)}{2}}, &\text{if } {A}_1 \in [0,\frac{\phi_1}{2c_1}) \\
	\phi_1 + \sqrt{\frac{c_2\phi_2({A}-{A}_1)}{2}}, &\text{if } {A}_1 \in [\frac{\phi_1}{2c_1},{A}]
	\end{cases}
	\end{equation}
	It is strictly decreasing on the second interval and takes its highest value $\phi_1 + Q_2$ at ${A}_1 = \frac{\phi_1}{2c_1}$. It is strictly increasing on the first interval if $c_1 < \frac{1}{2}\left[\frac{\phi_1}{2{A}} + \sqrt{\left(\frac{\phi_1}{2{A}} \right)^2 + \frac{2c_{2}\phi_{2}}{{A}}} \right]$ (equivalently ${A} \geq \frac{c_1\phi_1+c_2\phi_2}{2c_1^2}$), taking its highest value $\frac{\phi_1}{2} + Q_2 < \phi_1 + Q_2$. When ${A} < \frac{c_1\phi_1+c_2\phi_2}{2c_1^2}$, it takes its highest value in the first interval, $\sqrt{\frac{{A}(c_1\phi_1+c_2\phi_2)}{2}}$, at ${A}_1 = \frac{c_1\phi_1}{c_1\phi_1+c_2\phi_2} {A}$.

	
	\noindent\textbf{Case 3:} The objective is written as
	\begin{equation}
	W_\text{d}({A}_1) = 
	\begin{cases}
	\sqrt{\frac{c_1\phi_1 {A}_1}{2}} + \phi_2, &\text{if } {A}_1 \in [0,{A} - \frac{\phi_2}{2c_2}) \\
	\sqrt{\frac{c_1\phi_1 {A}_1}{2}} + \sqrt{\frac{c_2\phi_2({A}-{A}_1)}{2}}, &\text{if } {A}_1 \in [{A} - \frac{\phi_2}{2c_2},\frac{\phi_1}{2c_1}) \\
	\phi_1 + \sqrt{\frac{c_2\phi_2({A}-{A}_1)}{2}}, &\text{if } {A}_1 \in [\frac{\phi_1}{2c_1},{A}]
	\end{cases}
	\end{equation}
	It is strictly increasing (decreasing) on the first (third) interval, taking the maximum value of $\phi_2 + Q_1$ ($\phi_1 + Q_2$). The value in the middle interval is upper bounded by $\sqrt{\frac{{A}(c_1\phi_1+c_2\phi_2)}{2}} \leq \frac{\phi_1+\phi_2}{2}$. The objective will never take the maximal value in the middle interval because $\phi_i + Q_{-i} > \frac{\phi_1+\phi_2}{2}$ for some $i=1,2$.

	\noindent\textbf{Case 4:} The objective is written as
	\begin{equation}
	W_\text{d}({A}_1) = 
	\begin{cases}
	\sqrt{\frac{c_1\phi_1 {A}_1}{2}} + \phi_2, &\text{if } {A}_1 \in [0,\frac{\phi_1}{2c_1}) \\
	\phi_1+\phi_2, &\text{if } {A}_1 \in [\frac{\phi_1}{2c_1},{A} - \frac{\phi_2}{2c_2}) \\
	\phi_1 + \sqrt{\frac{c_2\phi_2({A}-{A}_1)}{2}}, &\text{if } {A}_1 \in [{A} - \frac{\phi_2}{2c_2},{A}]
	\end{cases}
	\end{equation}
	which is maximized for any ${A}_1 \in [\frac{\phi_1}{2c_1},{A} - \frac{\phi_2}{2c_2})$.
\end{proof}

The result below completely characterizes the commander's optimal Stage 1 assignment in $\text{CAP}_\text{d}$ under the per-unit cost setting. 

\begin{proposition}\label{prop:commander_det}
	Consider the deterministic commander assignment problem $\text{CAP}_\text{d}$. Denote $j=\argmin{i\in\{1,2\}} \frac{\phi_i}{2c_i}$ and $k = \argmax{i\in{1,2}} c_i$. We define the intervals $I_\ell := c_{-k}\cdot[1-\frac{\sqrt{3}}{2}, 1+\frac{\sqrt{3}}{2}  ]$ and $I_r := c_k \cdot[1-\frac{\sqrt{3}}{2}, 1+\frac{\sqrt{3}}{2}]$. Enumerate the following four statements.
	\begin{enumerate}[label=(\roman*)]
		\item For $c \notin I_\ell \cup I_r$, we have $W_d^* = \frac{c_1\phi_1+c_2\phi_2}{8c}$, $A_j^* = \frac{c_j\phi_j}{8c^2}$, and the expenditure is $A^* = \frac{c_1\phi_1+c_2\phi_2}{8c^2}$.
		\item For $c \in I_r \backslash I_\ell$ and $j= k$, or $c \in I_\ell \backslash I_r$ and $j\neq  k$, we have $W_d^* = \frac{c_{-j}\phi_{-j}}{8c} + (1-\frac{c}{2c_j})\phi_j$, $A_j^* = \frac{\phi_j}{2c_j}$, and $A^* = \frac{c_{-j}\phi_{-j}}{8c^2} + \frac{\phi_j}{2c_j}$.
		\item For $c \in I_\ell \backslash I_r$ and $j= k$, or $c \in I_r \backslash I_\ell$ and $j\neq k$, we have $W_d^* = \frac{c_j\phi_j}{8c} + (1-\frac{c}{2c_{-j}})\phi_{-j}$, $A_j^* = \frac{c_j\phi_j}{8c^2}$, and $A^* = \frac{c_j\phi_j}{8c^2} + \frac{\phi_{-j}}{2c_{-j}}$. 
		\item For $c \in I_\ell \cap I_r$, we have $W_d^* = (1-\frac{c}{2c_1})\phi_1 + (1-\frac{c}{2c_2})\phi_2$, $A_j^* = \frac{\phi_j}{2c_j}$, and $A^* = \frac{\phi_1}{2c_1}+\frac{\phi_2}{2c_2}$.
	\end{enumerate}
	Denote $s_1 = \frac{1}{2}\sqrt{\frac{c_j}{\phi_j}(c_1\phi_1+c_2\phi_2) }$ and $s_2 = \frac{c_{-j}}{2}\sqrt{\frac{c_j\phi_{-j}}{c_j\phi_{-j} - c_{-j}\phi_j}}$. Then the optimal commander assignment is given as follows. 
	\begin{itemize}[leftmargin=*]
		\item Suppose $c < \frac{c_j}{2}$. Then $W_d^* = (1-\frac{c}{2c_1})\phi_1 + (1-\frac{c}{2c_2})\phi_2$, $A_j^* = \frac{\phi_j}{2c_j}$, and $A^* = \frac{\phi_1}{2c_1}+\frac{\phi_2}{2c_2}$.
		\item Suppose $\frac{c_j}{2} \leq c < \min\{s_1,s_2\}$. Then the solution follows from whether the condition of (iii) or (iv) holds.
		\item Suppose $\min\{s_1,s_2\} \leq c < \max\{s_1,s_2\}$. If $s_1 \leq s_2$, then the solution follows from whether the condition of (i), (iii), or (iv) holds. If $s_1 > s_2$, then the solution follows from whether the condition of (i), (ii), or (iv) holds.
		\item Suppose $\max\{s_1,s_2\} \leq c$. Then the solution follows from whether the condition of (i), (ii), (iii), or (iv) holds.
	\end{itemize}
\end{proposition}

\begin{proof}[Proof of Proposition \ref{prop:commander_det}]
	The commander's optimal assignment in $\text{CAP}_\text{d}$ under the per-unit cost setting follows from solving the optimization problem
	\begin{equation}
	\max_{{A} \geq 0} \{ W_\text{d}^*({A}) - c{A} \}
	\end{equation}
	where we denote $W_\text{d}^*({A})$ as the commander's optimal payoff given a fixed use-it-or-lose-it budget $A$, characterized from Lemma \ref{lem:commander_det_nocost}. In general, the objective above is not concave and there are at most three points of discontinuity. There may exist up to four critical points in ${A} \in [0,\infty)$, depending on the value of $c$. These points are indicated by ${A}^*$ in the enumerated list (i) - (iv) from the Proposition statement.  A critical point exists on the interval ${A} \in (0, \min_{i=1,2}\{\frac{\phi_i}{2c_i}\})$ if $c>\frac{1}{2}\sqrt{\frac{c_j}{\phi_j}(c_1\phi_1 + c_2\phi_2)}$, on the interval ${A} \in [\frac{\phi_j}{2c_j}, \frac{\phi_{-j}}{2c_{-j}})$ if $c>\frac{c_{-j}}{2}\sqrt{\frac{c_j\phi_{-j}}{c_j\phi_{-j} - c_{-j}\phi_j }}$, and on the interval ${A} \in [\frac{\phi_{-j}}{2c_{-j}}, \frac{\phi_1}{2c_1} + \frac{\phi_2}{2c_2})$ if $c > \frac{c_j}{2}$. A critical point always exists at ${A} = \frac{\phi_1}{2c_1} + \frac{\phi_2}{2c_2}$, as 0 is always contained in the sub-differential. The largest critical point is determined by the conditions listed as (i) to (iv) in the statement of Proposition \ref{prop:commander_det}. 
\end{proof}

We remark in this result that the Stage 1 objective function for the commander is generally non-concave, discontinuous, and can have up to four critical points depending on the values of $c$, $c_i$, and $\phi_i$. We note that the most amount of resources the commander will invest is $\frac{\phi_1}{2c_1}+\frac{\phi_2}{2c_2}$, which occurs for low costs $c < \frac{c_j}{2}$. Indeed, the quantity $\frac{\phi_i}{2c_i}$ is the amount of resources needed to win competition $i$ outright.

\section{Optimal randomized assignments -- per-unit cost setting}\label{sec:randomized_proofs}

Analogously to Appendix \ref{sec:deterministic_proofs}, we proceed to derive the solution of $\text{CAP}$ under the per-unit cost setting by first solving the optimal opponent investment decision in Stage 2. Recall that by the beginning of Stage 2, the commander has chosen a distribution $\Prob \in \mcal{P}$ on assignments. The marginal distribution $\Prob_i$ is thus represented by a Bernoulli distribution $(A_i^h,A_i^\ell,p_i)$. The Stage 2 decision problem for opponent $\mcb_i$ is therefore
\begin{equation}\label{eq:Bi_CAP_problem}
    \max_{B_i \geq 0} \left\{\phi_i\cdot\pi_\mcb(\Prob_i,B_i) - c_i B_i \right\}
\end{equation}
where $\pi_\mcb$ is given in Theorem \ref{thm:BNE_regions}.

\begin{lemma}\label{lem:Bstar_CAP}
    Consider the Stage 2 decision problem for opponent $\mcb_i$ in $\text{CAP}$ \eqref{eq:Bi_CAP_problem}.
     If $\frac{A_i^\ell}{A_i^h} \geq \frac{1-p_i}{2-p_i}$, then the optimal investment for $\mcb_i$ is given by \eqref{eq:Bstar_CAPd}, i.e.
    \begin{equation}
        B_i^*(\Prob_i) = 
        \begin{cases}
            \sqrt{\frac{\phi_i \bar{A}_i}{2 c_i}}, &\text{if } c_i < \frac{\phi_i}{2\bar{A}_i} \\
            0, &\text{otherwise}
        \end{cases}
    \end{equation}
    where $\bar{A}_i := p_iA_i^h + (1-p_i)A_i^\ell$. If $\frac{A_i^\ell}{A_i^h} < \frac{1-p_i}{2-p_i}$, then
    \begin{equation}\label{eq:Bstar_CAP}
		B_i^*(\Prob_i) = 
		\begin{cases}
			\sqrt{\frac{\bar{A}_i \phi_i}{2c_i}}, &\text{if } c_i \in [0, \lambda_i) \\
			\sqrt{\frac{(1-p_i)A_i^\ell\phi_i}{2c_i}}, &\text{if } c_i \in [\lambda_i,\frac{(1-p_i)\phi_i}{2A_i^\ell}) \\
			0, &\text{if } c_i \geq \frac{(1-p_i)\phi_i}{2A_i^\ell}
		\end{cases}
	\end{equation}
	where we have defined $\lambda_i \triangleq \frac{\phi_i}{2(A_i^h)^2}(\sqrt{(1-p_i)A_i^\ell} + \sqrt{\bar{A}_i})^2$. 
\end{lemma}
\begin{proof}

	The optimal investment is the maximizer of the optimization problem \eqref{eq:Bi_CAP_problem}. For simpler exposition, we will drop the subscript $i$ on all variables in this proof since identical analysis applies to both opponents. 
	
	First, suppose  $\frac{A^\ell}{A^h} \geq \frac{1-p}{2-p}$. From \eqref{eq:piA}, we have $\pi_\mcb(\Prob_\mca,B) = \pi_\mcb^\text{CI}(\bar{A},B)$ for all $B' \geq 0$, and therefore $B^*$ coincides with \eqref{eq:Bstar_CAPd}.
	
	Now, suppose $\frac{A^\ell}{A^h} < \frac{1-p}{2-p}$. From Theorem \ref{thm:BNE_regions}, we can write $\pi_\mcb(\Prob_\mca,B)$ as
	\begin{equation}\label{eq:piB_XB}
		\phi \cdot
		\begin{cases}
			(1-p)\frac{B'}{2 A^\ell}, &\text{if } B' \leq A^\ell \\
			(1-p)\left(1 - \frac{A^\ell}{2 B} \right), &\text{if } A^\ell < B \leq Y^\ell \\
			(1-p)(1-\frac{A^\ell}{A^h}) - \frac{\sqrt{\bar{A}(\bar{A} - pA^h)}}{A^h} + \lambda B, &\text{if }  Y^\ell < B \leq Y^h \\
			p\left(1 - \frac{A^h}{2B}\right) + (1-p)\left(1 - \frac{A^\ell}{2B'}\right), &\text{if } Y^h < B
		\end{cases}
	\end{equation}
	where we have written $Y^\ell \triangleq \frac{A^h \sqrt{(1-p)A^\ell}}{\sqrt{(1-p)A^\ell} + \sqrt{\bar{A}}}$, and $Y^h \triangleq \frac{A^h \sqrt{\bar{A}}}{\sqrt{(1-p)A^\ell} + \sqrt{\bar{A}}}$. The entries in the expression above correspond to payoffs in regions $\mcal{R}_3$, $\mcal{R}_4$, $\mcal{R}_5$, and $\mcal{R}_1$, respectively (Theorem \ref{thm:BNE_regions}). The values for $B^*$ result from solving $\frac{\partial}{\partial B}\left(\pi_\mcb(\Prob_\mca,B)-cB\right) = 0$. Note that the first and third entries provide linear returns on investment $B$. Therefore, there are multiple maximizers when the per-unit cost coincides with these slopes. In particular, when $c = \lambda$ or $c=\frac{1-p}{2A^\ell}$, the payoff $\pi_\mcb(\Prob_\mca,B) - c\cdot B$ is constant for all $B \in [Y^\ell,Y^h]$ or $B\in [0,A^\ell]$, respectively. Since player $\mcb$ prefers the lowest investment level, we thus obtain \eqref{eq:Bstar_CAP}.
\end{proof}

Now that we have established the optimal decision in Stage 2, we can now consider the optimal commander's assignment decision problem in Stage 1, which can be stated as:

\begin{equation}\label{eq:CAP_stage1_problem}
    W^* \triangleq \max_{\Prob \in \mcal{P}} \left\{\phi_1\cdot\pi_\mca(\Prob_1,B_1^*(\Prob_1)) + \phi_2\cdot\pi_\mca(\Prob_2,B_2^*(\Prob_2)) - c\cdot\E_\Prob[A_1 + A_2]\right\}
\end{equation}

Before addressing \eqref{eq:CAP_stage1_problem} fully, we first consider an intermediate step where the commander has a fixed budget $A>0$ and zero cost $c = 0$. The intermediate decision problem is:
\begin{equation}\label{eq:zerocost_randomized_commander}
    W_A^* \triangleq \max_{\Prob \in \mcal{P}(A)} \left\{\phi_1\cdot\pi_\mca(\Prob_1,B_1^*(\Prob_1)) + \phi_2\cdot\pi_\mca(\Prob_2,B_2^*(\Prob_2)) \right\}
\end{equation}
where $B_i^*$ is given in Lemma \ref{lem:Bstar_CAP}. Note that any feasible $\Prob \in \mcal{P}({A})$ induces expected endowments $\bar{A}_{i} = p_i A_i^h + (1-p_i)A_i^\ell$ for each sub-colonel, where $\bar{A}_1 + \bar{A}_2 = A$ is satisfied. The optimization \eqref{eq:zerocost_randomized_commander} can thus be re-written as:
\begin{equation}\label{eq:alt_CAP_problem}
    \max_{\substack{ (\bar{A}_1,\bar{A}_2) : \\ \bar{A}_1+\bar{A}_2 = A}} \left\{ \sum_{i=1,2} \max_{\substack{\Prob_i = (A_i^h,A_i^\ell,p_i) : \\ p_iA_i^h + (1-p_i)A_i^\ell = \bar{A}_i } }\phi_i\cdot \pi_\mca(\Prob_i,B_i^*(\Prob_i))  \right\}
\end{equation}

The following Lemma provides the optimal Bernoulli distribution for sub-colonel $i$ given a fixed expected endowment $\bar{A}_i$, i.e. the solution of each inner maximization above.

\begin{lemma}\label{lem:opt_budget_randomization}
    Given a fixed expected endowment $\bar{A}_i > 0$ for sub-colonel $i$, 
    \begin{equation}\label{eq:opt_obfuscation_problem}
        \begin{aligned}
            \Pi_i^*(\bar{A}_i;\phi_i) &\triangleq \max_{\substack{\Prob_i = (A_i^h,A_i^\ell,p_i) : \\ p_iA_i^h + (1-p_i)A_i^\ell = \bar{A}_i } }\phi_i\cdot\pi_\mca(\Prob_i,B_i^*(\Prob_i)) \\
            &= 
            \begin{cases}
    			\sqrt{2c_i\phi_i\bar{A}_i} &\text{if } c_i \in (0,\frac{\phi_i}{2\bar{A}_i}) \\
    			\phi_i, &\text{if } c_i \geq \frac{\phi_i}{2\bar{A}_i}
    		\end{cases}
        \end{aligned}
    \end{equation}
    The optimal Bernoulli distribution  that achieves the above payoff is 
    \begin{equation}\label{eq:opt_obfuscation}
		(A_i^{h*},A_i^{\ell*},p_i^*) = 
		\begin{cases}
			\left(\sqrt{\frac{\bar{A}_i\phi_i}{2c_i}},0,\sqrt{\frac{2c_i\bar{A}_i}{\phi_i}}\right) &\text{if } c \in (0,\frac{\phi_i}{2\bar{A}_i}) \\
			\left(\bar{A}_i,\bar{A}_i,\times \right), &\text{if } c_i \geq \frac{\phi_i}{2\bar{A}_i}
		\end{cases}
	\end{equation}
\end{lemma}
The optimal Bernoulli assignment can double sub-colonel $i$'s payoff compared to the deterministic assignment $\bar{A}_i$. 

\begin{proof}
    We first characterize the solution of finding the optimal Bernoulli distribution given a fixed $p_i \in [0,1]$ and expected endowment $\bar{A}_i$:
    \begin{equation}\label{eq:opt_obfuscation_problem_nop}
        \Pi_i^*(\bar{A}_i,p_i;\phi_i) \triangleq \max_{\substack{(A_i^h,A_i^\ell) : \\ p_iA_i^h + (1-p_i)A_i^\ell = \bar{A}_i } }\phi_i\cdot\pi_\mca(\Prob_i,B_i^*(\Prob_i))
    \end{equation}
    and then optimize $\Pi_i^*(\bar{A}_i,p_i;\phi_i)$ over $p_i$ to obtain the solution of \eqref{eq:opt_obfuscation_problem}.
    
    Let us define $\Pi_i(\Prob_i)\triangleq\phi\cdot\pi_A(\Prob_i,B_i^*(\Prob_i))$ and $f = \frac{A^\ell}{A^h} \in [0,1]$ as a change of variable. Using Lemma \ref{lem:Bstar_CAP}, if $f < \frac{1-p}{2-p}$, we can write
    \begin{equation}\label{eq:piA_c}
    	\Pi_i(\Prob_i) =  
    	\begin{cases}
        	\sqrt{\frac{c\phi\bar{A}}{2}}, &\text{if } c \in [0,\lambda) \\
        	p\phi + \sqrt{\frac{c\phi(1-p)A^\ell}{2}}, &\text{if } c \in [\lambda, \frac{(1-p)\phi}{2A_2}) \\
        	\phi, &\text{if } c \in [\frac{(1-p)\phi}{2A_2},\infty)
    	\end{cases}
	\end{equation}
    where $\lambda = \phi\frac{p + (1-p)f}{2\bar{A}}(\sqrt{(1-p)f} + \sqrt{p + (1-p)f})^2$. If $f \geq \frac{1-p}{2-p}$, then
    \begin{equation}\label{eq:piA_c_CI}
    	\Pi_i(\Prob_i) =  
    	\begin{cases}
        	\sqrt{\frac{c\phi\bar{A}}{2}}, &\text{if } c < \frac{\phi}{2\bar{A}} \\
        	\phi, &\text{else}
    	\end{cases}
	\end{equation}

    First, we observe that $\lambda(f)$ is strictly increasing in $f$, taking the value $\frac{p^2\phi}{2\bar{A}}$ for $f= 0$. From \eqref{eq:piA_c}, we then have $\Pi_i(\Prob_i)=\sqrt{\frac{c\phi\bar{A}}{2}}$ for all $c <  \frac{p^2\phi}{2\bar{A}}$ regardless of $f$. Therefore, $\Pi_i^*(\bar{A},p;\phi) = \sqrt{\frac{c\phi\bar{A}}{2}}$ for $c <  \frac{p^2\phi}{2\bar{A}}$. When $c \geq \frac{\phi}{2\bar{A}}$, we observe $f$ can be set to 1, making \eqref{eq:piA_c_CI} active, and ensuring the maximum payoff of $\phi$. 
	
	For $c \in [\frac{p^2\phi}{2\bar{A}},\frac{\phi}{2\bar{A}})$, we claim that the $f^*$ that satisfies $\lambda(f^*) = c$ characterizes the solution $A^{h*},A^{\ell*}$ to \eqref{eq:opt_obfuscation_problem_nop}. One can solve this equation for $f^*$ as follows: make the substitution $y = p + (1-p)f$ to obtain $y(\sqrt{y-p} + \sqrt{y})^2 = 2c\bar{A}/\phi$, which has the solution $y = \frac{2c\bar{A}/\phi}{2\sqrt{2c\bar{A}/\phi} - p}$. We then get $f^* = \frac{2c\bar{A}/\phi}{(1-p)(2\sqrt{2c\bar{A}/\phi} - p)} - \frac{p}{1-p}$ and subsequent endowments $A^{h*} = \phi\frac{2\sqrt{2c\bar{A}/\phi} - p}{2c}$ and $A^{\ell*} =  \frac{1}{1-p}\left( \bar{A} - \frac{p\phi(2\sqrt{2c\bar{A}/\phi}-p) }{2c}\right)$. Denoting $\Prob_i^* = (A^{h*},A^{\ell*},p)$, from \eqref{eq:piA_c} we have $\Pi_i(\Prob_i^*) = p\phi + \sqrt{\frac{c\phi(1-p)A^{\ell*}}{2}} = \frac{p\phi}{2} + \sqrt{\frac{c\phi\bar{A}}{2}}$. The second equality follows due to $c \geq \frac{p^2\phi}{2\bar{A}}$.
	
	We verify that $\Pi_i(\Prob_i^*) > \Pi_i(\Prob_i)$ for any other $\Prob_i = (A^h,A^\ell,p)$ satisfying the expected endowment constraint. Let $f = A^\ell/A^h$. For any $f > f^*$, $\lambda(f) > \lambda(f^*)$.  Since $c$ is fixed, we thus obtain
	\begin{equation}
	    \Pi_i(\Prob_i) = \sqrt{\frac{c\phi\bar{A}}{2}}  < \Pi_i(\Prob_i^*)
	\end{equation}
	For any $f < f^*$, we must have $A^\ell < A^{\ell*}$ (and $A^h > A^{h*}$). We then have $\Pi_i(\Prob_i) = p\phi + \sqrt{\frac{c\phi(1-p)A^\ell}{2}} < p\phi + \sqrt{\frac{c\phi(1-p)A^{\ell*}}{2}} = \Pi_i(\Prob_i^*)$.
	
	We thus obtain 
	\begin{equation}\label{eq:payoff_nop}
    	\Pi_i^*(\bar{A},p;\phi) =
    	\begin{cases}
    		\sqrt{\frac{c\phi\bar{A}}{2}}, &\text{if } c < \frac{p^2\phi}{2\bar{A}} \\
    		\frac{p\phi}{2} + \sqrt{\frac{c\phi\bar{A}}{2}}, &\text{if }  \frac{p^2\phi}{2\bar{A}} \leq c < \frac{\phi}{2\bar{A}} \\
    		\phi, &\text{if } \frac{\phi}{2\bar{A}} \leq c
    	\end{cases} 
	\end{equation}
	and the optimal randomization is
	\begin{equation}
		(A^{h*},A^{\ell*}) = \left(\phi\frac{2\sqrt{2c\bar{A}/\phi} - p}{2c},\frac{1}{1-p}\left( \bar{A} - \frac{p\phi(2\sqrt{2c\bar{A}/\phi}-p) }{2c}\right) \right)
	\end{equation}
	We can now readily obtain the optimal value of \eqref{eq:opt_obfuscation_problem}. The optimization \eqref{eq:opt_obfuscation_problem} is equivalent to maximizing $\Pi_i^*$ \eqref{eq:payoff_nop} over $p \in [0,1]$. We observe sub-colonel $i$ obtains the maximum payoff $\phi$ for high costs $c \geq \frac{\phi}{2\bar{A}}$, irrespective of $p$. So, suppose $c < \frac{\phi}{2\bar{A}}$ is fixed.  The expression $\frac{p\phi}{2} + \sqrt{\frac{c\phi\bar{A}}{2}}$ is strictly increasing on $p \in [0,\sqrt{2c\bar{A}/\phi}]$.	This payoff is attainable only for $p \in [0,\sqrt{2c\bar{A}/\phi}]$, and hence $\Pi_i^*$ is maximized at $p^* = \sqrt{2c\bar{A}/\phi}$. This gives the payoff $\sqrt{2c\phi\bar{A}} = 2\cdot\pi_\mca^\text{CI}$.
\end{proof}

With the optimal marginal distributions established with the above Lemma, we are now ready to derive the solution of \eqref{eq:zerocost_randomized_commander}: the commander's optimal assignment given a fixed budget $A > 0$ and zero cost $c= 0$. Below, we provide the optimal assignment $\Prob \in \mcal{P}(A)$ and the resulting payoff $W_A^*$.

\begin{lemma}\label{lem:commander_rand_nocost}
	Suppose $c = 0$, ${A} < \infty$. The optimal assignment $\Prob^*$ that solves \eqref{eq:zerocost_randomized_commander} is given in the following cases below.
	
	\noindent\textbf{Case 1:} Suppose ${A} < \min_{i=1,2} \frac{c_1\phi_1+c_2\phi_2}{2c_i^2}$. Let ${A}_1^* = \frac{c_1\phi_1 {A}}{c_1\phi_1 + c_2\phi_2}$, ${A}_2^* = \frac{c_2\phi_2 {A}}{c_1\phi_1 + c_2\phi_2}$, $p_1^* = \sqrt{2c_1{A}_1^*/\phi_1}$, and $p_2^* = \sqrt{2c_2{A}_2^*/\phi_2}$. Then
	\begin{equation}
    	\begin{aligned}
    	&\Prob^*(0,0) = (1-p_1^*)(1-p_2^*), \quad &\Prob^*\left(0,\sqrt{\frac{{A}_2^*\phi_2}{2c_2}} \right) = (1-p_1^*) p_2^* \\
    	&\Prob^*\left(\sqrt{\frac{{A}_1^*\phi_1}{2c_1}},0\right) = p_1^*(1-p_2^*), \quad &\Prob^*\left(\sqrt{\frac{{A}_1^*\phi_1}{2c_1}},\sqrt{\frac{{A}_2^*\phi_2}{2c_2}} \right) = p_1^*p_2^*
    	\end{aligned}
	\end{equation}
	The resulting performance is $W_A^*=\sqrt{2{A}(c_1\phi_1 + c_2\phi_2)}$.
	
	\noindent\textbf{Case 2:} Suppose $\min_{i=1,2} \frac{c_1\phi_1+c_2\phi_2}{2c_i^2} \leq {A} < \frac{\phi_1}{2c_1}+\frac{\phi_2}{2c_2}$. Let $k = \argmax{i=1,2} c_i$, ${A}_k^* = \frac{\phi_k}{2c_k}$, ${A}_{-k}^* = {A} - \frac{\phi_k}{2c_k}$, and $p_{-k}^* = \sqrt{2c_{-k}{A}_{-k}^*/\phi_{-k}}$. Then
	\begin{equation}
    	\begin{aligned}
    	\Prob_k^*({A}_k^*) = 1 \quad \text{and} \quad
    	\Prob_{-k}^* = \left(\sqrt{\frac{{A}_{-k}^*\phi_{-k}}{2c_{-k}}},0,p_{-k}^* \right).
    	\end{aligned}
	\end{equation}
	The resulting performance is $W_A^* = \phi_k + \sqrt{2c_{-k}\phi_{-k}\left({A} - \frac{\phi_k}{2c_k}\right)}$.

	\noindent\textbf{Case 3:} Suppose ${A} \geq \frac{\phi_1}{2c_1} + \frac{\phi_2}{2c_2}$. Then any $\Prob^*$ that satisfies $\Prob^*({A}_1^*,{A}_2^*) = 1$ for some ${A}_1^* \in [\frac{\phi_1}{2c_1},{A}-\frac{\phi_2}{2c_2}]$ and ${A}_2^* = {A} - {A}_1^*$ is an optimal assignment. The resulting performance is $W_A^*=\phi_1 + \phi_2$.
\end{lemma}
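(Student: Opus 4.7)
The plan is to exploit the fact that each sub-colonel's equilibrium payoff depends on the commander's joint assignment $\P$ only through the marginal $\P_i$, while the expected budget constraint $\mathbb{E}_\P[A_1+A_2] \leq A$ depends only on the sum of the marginal means $\bar{A}_1 + \bar{A}_2$. This decouples the problem into (i) choosing each marginal optimally given its mean---exactly the content of Lemma \ref{lem:opt_budget_randomization}---and (ii) allocating the expected budget between the two sub-colonels. The joint $\P^*$ stated in each case is then simply the product $\P_1^* \otimes \P_2^*$ of the two optimal marginals.

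Applying Lemma \ref{lem:opt_budget_randomization} with $B = \infty$, the reduced optimization becomes
\begin{equation*}
    W^* \;=\; \max_{\bar{A}_1 + \bar{A}_2 \leq A}\; \bigl[f_1(\bar{A}_1) + f_2(\bar{A}_2)\bigr], \qquad f_i(\bar{A}_i) := \min\!\left\{\sqrt{2 c_i \phi_i \bar{A}_i},\; \phi_i \right\},
\end{equation*}
where each $f_i$ is concave, strictly increasing up to the saturation point $\bar{A}_i = \phi_i/(2 c_i)$, and constant thereafter. I would split into three cases according to whether each $\bar{A}_i^*$ lies strictly below or at its saturation threshold.

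In the sub-saturation regime (Case 1), the KKT conditions for $\max \sqrt{2c_1 \phi_1 \bar{A}_1} + \sqrt{2 c_2 \phi_2 \bar{A}_2}$ subject to $\bar{A}_1 + \bar{A}_2 = A$ yield $\bar{A}_i^* = c_i \phi_i A/(c_1 \phi_1 + c_2 \phi_2)$ and $W^* = \sqrt{2A(c_1 \phi_1 + c_2 \phi_2)}$; back-substituting into $\bar{A}_i^* \leq \phi_i/(2c_i)$ for both $i$ recovers the hypothesis $A < \min_i (c_1 \phi_1 + c_2 \phi_2)/(2 c_i^2)$. For Case 2, since the threshold $(c_1 \phi_1 + c_2 \phi_2)/(2 c_i^2)$ is decreasing in $c_i$, the index $k = \argmax{i=1,2} c_i$ is the one whose proportional allocation saturates first; I would then fix $\bar{A}_k^* = \phi_k/(2 c_k)$ and assign the remaining budget $A - \phi_k/(2c_k)$ to sub-colonel $-k$, verifying via concavity of $f_{-k}$ and the upper bound $A < \phi_1/(2c_1) + \phi_2/(2c_2)$ that this corner is indeed optimal. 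Case 3 is immediate: $\bar{A}_i^* = \phi_i/(2c_i)$ for each $i$ already saturates both $f_i$, yielding $W^* = \phi_1 + \phi_2$ with any deterministic split of the surplus. The main obstacle, beyond setting up the reduction correctly, is confirming that $k = \argmax{i=1,2} c_i$ is the index saturating first in Case 2; this drops out cleanly from the monotonicity of the threshold function $(c_1 \phi_1 + c_2 \phi_2)/(2 c_i^2)$ in $c_i$.
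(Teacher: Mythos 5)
Your proposal is correct and follows essentially the same route as the paper: reduce to choosing expected endowments $\bar{A}_1+\bar{A}_2 = A$ by invoking Lemma \ref{lem:opt_budget_randomization} for each marginal (giving $\Pi_i^* = \min\{\sqrt{2c_i\phi_i\bar{A}_i},\phi_i\}$), then maximize the resulting concave piecewise objective over the split. The paper organizes the case analysis by comparing each $c_i$ to $\phi_i/(2A)$ rather than by which proportional allocation saturates first, but this is only a cosmetic difference in bookkeeping.
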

\begin{proof}
	From \eqref{eq:alt_CAP_problem} and Lemma \ref{lem:opt_budget_randomization},
	the optimization \eqref{eq:zerocost_randomized_commander} is equivalent to solving
	\begin{equation}
    	\max_{\bar{A}_1 \in [0,{A}]} \{  \Pi_1^*(\bar{A}_1 ; \phi_1) + \Pi_2^*({A} - \bar{A}_1 ; \phi_2) \}
	\end{equation}
	where $\Pi_i^*$ is given in \eqref{eq:opt_obfuscation_problem}. Let us denote the objective function in the brackets above as $W_A(\bar{A}_1)$. It takes distinct forms depending on the values $\{c_i,\phi_i\}_{i=1,2}$:
	
	\noindent\textbf{(a)} $c_1 < \frac{\phi_1}{2{A}}$ and $c_2 < \frac{\phi_2}{2{A}}$. The objective is written as $\sqrt{2c_1\phi_1{A}_1} + \sqrt{2c_2\phi_2({A} - {A}_1)}, \ \forall {A}_1 \in [0,{A}]$.    This function is concave. Setting the derivative (w.r.t. ${A}_1$) to zero, we obtain ${A}_1^* = \frac{c_1\phi_1}{c_1\phi_1 + c_2\phi_2}{A}$ and the optimal payoff is $\sqrt{2{A}(c_1\phi_1 + c_2\phi_2)}$. Here, the distribution $P^*$ corresponding to \textbf{Case 1} in the Theorem statement induces the randomizations given by \eqref{eq:opt_obfuscation} for each sub-colonel. 
	
	\noindent\textbf{(b)} ${A} < \frac{\phi_1}{2c_1}$ and ${A} \geq \frac{\phi_2}{2c_2}$. The objective is written as
	\begin{equation}\label{eq:case_b}
	W_A({A}_1)=
	\begin{cases}
	\sqrt{2c_1\phi_1{A}_1} + \phi_2, &\text{if } {A}_1 \in [0,{A} - \frac{\phi_2}{2c_2}) \\
	\sqrt{2c_1\phi_1{A}_1} + \sqrt{2c_2\phi_2({A} - {A}_1)}, &\text{if } {A}_1 \in [{A} - \frac{\phi_2}{2c_2},{A}] \\
	\end{cases}
	\end{equation}
	This function is concave and continuous on $[0,{A}]$. It is strictly increasing on the first interval. It is strictly decreasing for ${A}_1 > \frac{c_1\phi_1 {A}}{c_1\phi_1 + c_2\phi_2}$. The maximizer is given by
	\begin{equation}
	{A}_1^* =
	\begin{cases}
	{A} - \frac{\phi_2}{2c_2}, &\text{if } c_2 > r_2(c_1) \\
	\frac{c_1\phi_1}{c_1\phi_1 + c_2\phi_2}{A}, &\text{if } c_2 \leq r_2(c_1) \\
	\end{cases}
	\end{equation}
	where we define $r_i(c_{-i}) = \frac{1}{2}\left[\frac{\phi_i}{2{A}} + \sqrt{\left(\frac{\phi_i}{2{A}} \right)^2 + \frac{2c_{-i}\phi_{-i}}{{A}}} \right]$ for $i=1,2$. The condition $c_2 \leq r_2(c_1)$ above arises when the derivative of the second entry in \eqref{eq:case_b} is positive at ${A}_1 = {A} - \frac{\phi_2}{2c_2}$. Hence, when $c_2 > r_2(c_1)$, ${A}_1^* = {A} - \frac{\phi_2}{2c_2}$. Therefore, if $c_2 \leq r_2(c_1)$, the solution $\Prob^*$ corresponds to \textbf{Case 1}. If $c_2 > r_2(c_1)$, the solution $\Prob^*$ corresponds to \textbf{Case 2} ($k=2$).

	\noindent\textbf{(c)} $c_1 \geq \frac{\phi_1}{2{A}}$ and $c_2 < \frac{\phi_2}{2{A}}$. The objective is written as
	\begin{equation}
	W_A({A}_1)=
	\begin{cases}
	\sqrt{2c_1\phi_1{A}_1} + \sqrt{2c_2\phi_2({A} - {A}_1)}, &\text{if } {A}_1 \in [0,\frac{\phi_1}{2c_1}] \\
	\phi_1 + \sqrt{2c_2\phi_2({A}-{A}_1)}, &\text{if } {A}_1 \in (\frac{\phi_1}{2c_1},{A}] \\
	\end{cases}
	\end{equation}
	Analogous arguments from \textbf{(b)} yield 
	\begin{equation}
	{A}_1^* =
	\begin{cases}
	\frac{\phi_1}{2c_1}, &\text{if } c_1 > r_1(c_2) \\
	\frac{c_1\phi_1}{c_1\phi_1 + c_2\phi_2}{A}, &\text{if } c_1 \leq r_1(c_2)  \\
	\end{cases}
	\end{equation}
	Here, the optimal distribution $\Prob^*$ corresponds to \textbf{Case 1} if $c_1 \leq r_1(c_2)$ and \textbf{Case 2} ($k=1$) if $c_1 > r_1(c_2)$.
	
	\noindent\textbf{(d)} $c_1 \geq \frac{\phi_1}{2{A}}$ and $c_2 \geq \frac{\phi_2}{2{A}}$. First, we consider the sub-case ${A} - \frac{\phi_2}{2c_2} < \frac{\phi_1}{2c_1}$. The objective is written as
	\begin{equation}
	W_A({A}_1)=
	\begin{cases}
	\sqrt{2c_1\phi_1{A}_1} + \phi_2, &\text{if } {A}_1 \in [0,{A} - \frac{\phi_2}{2c_2}) \\
	\sqrt{2c_1\phi_1{A}_1} + \sqrt{2c_2\phi_2({A} - {A}_1)}, &\text{if } {A}_1 \in [{A} - \frac{\phi_2}{2c_2},\frac{\phi_1}{2c_1}] \\
	\phi_1 + \sqrt{2c_2\phi_2({A}-{A}_1)}, &\text{if } {A}_1 \in (\frac{\phi_1}{2c_1},{A}] \\
	\end{cases}
	\end{equation}
	This function is concave, strictly increasing on $[0,{A} - \frac{\phi_2}{2c_2})$, and strictly decreasing on $(\frac{\phi_1}{2c_1},{A}]$. The maximizer is given by
	\begin{equation}
	{A}_1^* =
	\begin{cases}
	{A} - \frac{\phi_2}{2c_2}, &\text{if } c_2 > r_2(c_1) \\
	\frac{c_1\phi_1}{c_1\phi_1 + c_2\phi_2}{A}, &\text{if } c_2 \leq r_2(c_1) \text{ and } c_1 \leq r_1(c_2) \\
	\frac{\phi_1}{2c_1}, &\text{if } c_1 > r_1(c_2)
	\end{cases}
	\end{equation}
	Thus, the optimal distribution $P^*$ corresponds to \textbf{Case 2} ($k=2$) if $c_2 > r_2(c_1)$, to \textbf{Case 1} if $c_2 \leq r_2(c_1)$ and $c_1 \leq r_1(c_2)$, and to \textbf{Case 2} ($k=1$) if $c_1 > r_1(c_2)$.
	
	Lastly, we consider the sub-case ${A} - \frac{\phi_2}{2c_2} \geq \frac{\phi_1}{2c_1}$. The objective can be written as
	\begin{equation}
	W({A}_1)=
	\begin{cases}
	\sqrt{2c_1\phi_1{A}_1} + \phi_2, &\text{if } {A}_1 \in [0,\frac{\phi_1}{2c_1}) \\
	\phi_1 + \phi_2, &\text{if } {A}_1 \in [\frac{\phi_1}{2c_1},{A} - \frac{\phi_2}{2c_2}] \\
	\phi_1 + \sqrt{2c_2\phi_2({A}-{A}_1)}, &\text{if } {A}_1 \in ({A} - \frac{\phi_2}{2c_2},{A}] \\
	\end{cases}
	\end{equation}
	The optimal distribution $\Prob^*$ here corresponds to \textbf{Case 3}.
\end{proof}

The result below gives the solution to the optimal Stage 1 assignment problem \eqref{eq:CAP_stage1_problem} and corresponding final payoff $W^*$ for the commander in the per-unit costs setting. 

\begin{proposition}\label{prop:commander_costs}
    The commander's optimal Stage 1 assignment $\Prob^*$ and corresponding final payoff $W^*$ in the extensive-form game $\text{CAP}$ under the per-unit cost setting is given as follows. Let $k = \argmax{i=1,2} c_i$.
	
	
	\noindent$\bullet$ If $c \leq c_{-k}$, then $W^* = (1-\frac{c}{2c_1})\phi_1 + (1-\frac{c}{2c_2})\phi_2$ and $\Prob^*(\frac{\phi_1}{2c_1},\frac{\phi_2}{2c_2}) = 1$.
	The resource expenditure is ${A}^* = \frac{\phi_1}{2c_1} + \frac{\phi_2}{2c_2}$.
	
	\noindent$\bullet$ If $c_{-k} < c < c_k$, then $W^*=\phi_k + \frac{c_{-k}}{c}\phi_{-k}$ and
	\begin{equation}
	\begin{aligned}
	\Prob_k^*(\frac{\phi_k}{2c_k}) = 1 \quad \text{and} \quad
	\Prob_{-k}^* = \left(\frac{\phi_{-k}}{2c}, 0, \frac{c_{-k}}{c} \right).
	\end{aligned}
	\end{equation}
	The expected resource expenditure is ${A}^* = \frac{c_{-k}\phi_{-k}}{2c^2} + \frac{\phi_k}{2c_k}$.
	
	\noindent$\bullet$ If $c \geq c_k$, then $W^*=\frac{c_1\phi_1+c_2\phi_2}{2c}$ and \begin{equation}\label{eq:maximal_randomization}
	\begin{aligned}
    	&\Prob^*(0,0) = (1-c_1/c)(1-c_2/c), \quad &\Prob^*\left(0,\frac{\phi_2}{2c} \right) = (1-c_1/c)\cdot (c_2/c) \\
    	&\Prob^*\left(\frac{\phi_1}{2c},0\right) = (c_1/c)\cdot (1-c_2/c),  &\Prob^*\left(\frac{\phi_1}{2c},\frac{\phi_2}{2c} \right) = \frac{c_1c_2}{c^2}
	\end{aligned}
	\end{equation}
	The expected resource expenditure is ${A}^* = \frac{c_1\phi_1+c_2\phi_2}{2c^2}$.
\end{proposition}

\begin{proof}[Proof of Proposition \ref{prop:commander_costs}]
	The solution of \eqref{eq:CAP_stage1_problem} under per-unit costs follows from solving the optimization problem
	\begin{equation}
	\max_{{A} \geq 0} \{ W_A^* - c\cdot{A} \}
	\end{equation}
	where we denote $W_A^*$ as the performance from Lemma \ref{lem:commander_rand_nocost}. It is a concave objective, and the critical point ${A}^*$ lies in $(0,\min_{i=1,2} \frac{c_1\phi_1+c_2\phi_2}{2c_i^2})$, $[\min_{i=1,2} \frac{c_1\phi_1+c_2\phi_2}{2c_i^2}, \frac{\phi_1}{2c_1} + \frac{\phi_2}{2c_2})$, or at ${A}^* = \frac{\phi_1}{2c_1} + \frac{\phi_2}{2c_2}$ depending on whether $c \geq \max_i c_i$, $c \in (\min_i c_i, \max_i c_i)$, or $c \leq \min_i c_i$, respectively.
\end{proof}
We note in the result above that the optimal assignment is completely deterministic when the commander's cost is low ($c \leq c_{-k}$), and is randomized on both marginals if the cost is sufficiently high ($c\geq c_k$).

\begin{proof}[Proof of Theorem \ref{thm:W_theorem}, second part]
	By comparing the characterizations from Propositions \ref{prop:commander_det} and \ref{prop:commander_costs}, we identify the four-fold improvement in the regime specified in the statement (first case in Prop. \ref{prop:commander_det}, last case in Prop. \ref{prop:commander_costs}). Outside of this regime, the improvement factor is less than four.
\end{proof}

\end{appendices}

\bibliography{sources}

\begin{thebibliography}{10}

\bibitem{Adamo_2009}
T.~Adamo and A.~Matros.
\newblock A {Blotto} game with incomplete information.
\newblock {\em Economics Letters}, 105(1):100--102, 2009.

\bibitem{amann1996asymmetric}
Erwin Amann and Wolfgang Leininger.
\newblock Asymmetric all-pay auctions with incomplete information: the
  two-player case.
\newblock {\em Games and economic behavior}, 14(1):1--18, 1996.

\bibitem{Baye_1996}
M.~R. Baye, D.~Kovenock, and C.~G. de~Vries.
\newblock The all-pay auction with complete information.
\newblock {\em Economic Theory}, 8(2):291--305, 1996.

\bibitem{Bell_1980}
R.~M. Bell and T.~M. Cover.
\newblock Competitive optimality of logarithmic investment.
\newblock {\em Mathematics of Operations Research}, pages 161--166, 1980.

\bibitem{Borel}
E~Borel.
\newblock La th\'eorie du jeu les \'equations int\'egrales \`a noyau
  sym\'etrique.
\newblock {\em Comptes Rendus de l'Acad\'emie}, 173, 1921.

\bibitem{Chandan_Concession}
R.~Chandan, K.~Paarporn, D.~Kovenock, M.~Alizadeh, and J.~R. Marden.
\newblock The art of concession in general lotto games.
\newblock {\em ESI Working Paper}, 2021.

\bibitem{Chandan_2020}
R.~{Chandan}, K.~{Paarporn}, and J.~R. {Marden}.
\newblock When showing your hand pays off: Announcing strategic intentions in
  colonel blotto games.
\newblock In {\em 2020 American Control Conference (ACC)}, pages 4632--4637,
  2020.

\bibitem{cortes2018generalising}
Sebasti{\'a}n Cortes-Corrales and Paul~M Gorny.
\newblock Generalising conflict networks.
\newblock 2018.

\bibitem{Einy_2017}
Ezra Einy, Mridu~Prabal Goswami, Ori Haimanko, Ram Orzach, and Aner Sela.
\newblock Common-value all-pay auctions with asymmetric information.
\newblock {\em International Journal of Game Theory}, 46(1):79--102, 2017.

\bibitem{Ewerhart_2021}
C.~Ewerhart and D.~Kovenock.
\newblock A class of n-player colonel blotto games with multidimensional
  private information.
\newblock {\em Operations Research Letters}, 49(3):418--425, 2021.

\bibitem{Ferdowsi_2020}
A.~Ferdowsi, W.~Saad, and N.~B. Mandayam.
\newblock Colonel blotto game for sensor protection in interdependent critical
  infrastructure.
\newblock {\em IEEE Internet of Things Journal}, 8(4):2857--2874, 2021.

\bibitem{Friedman_1958}
L.~Friedman.
\newblock Game-theory models in the allocation of advertising expenditures.
\newblock {\em Operations research}, 6(5):699--709, 1958.

\bibitem{Fuchs_2012}
Z.~E. Fuchs and P.~P. Khargonekar.
\newblock A sequential colonel blotto game with a sensor network.
\newblock In {\em 2012 American Control Conference (ACC)}, pages 1851--1857.
  IEEE, 2012.

\bibitem{Golman_2009}
R.~Golman and S.~E. Page.
\newblock General {Blotto}: games of allocative strategic mismatch.
\newblock {\em Public Choice}, 138(3-4):279--299, 2009.

\bibitem{Gross_1950}
O.~Gross and R.~Wagner.
\newblock {A continuous {Colonel} {Blotto} game}.
\newblock Technical report, RAND Project, Air Force, Santa Monica, 1950.

\bibitem{Guan_2019}
S.~{Guan}, J.~{Wang}, H.~{Yao}, C.~{Jiang}, Z.~{Han}, and Y.~{Ren}.
\newblock {Colonel} {Blotto} games in network systems: Models, strategies, and
  applications.
\newblock {\em IEEE Transactions on Network Science and Engineering},
  7(2):637--649, 2020.

\bibitem{Gupta_2014a}
A.~{Gupta}, G.~{Schwartz}, C.~{Langbort}, S.~S. {Sastry}, and T.~{Ba{\c s}ar}.
\newblock A three-stage colonel blotto game with applications to cyberphysical
  security.
\newblock In {\em 2014 American Control Conference}, pages 3820--3825, 2014.

\bibitem{Gupta_2014b}
Abhishek Gupta, Tamer Ba{\c{s}}ar, and Galina~A Schwartz.
\newblock A three-stage colonel blotto game: when to provide more information
  to an adversary.
\newblock In {\em International Conference on Decision and Game Theory for
  Security}, pages 216--233. Springer, 2014.

\bibitem{Hart_2008}
S.~Hart.
\newblock Discrete {{Colonel} {Blotto} and General Lotto} games.
\newblock {\em International Journal of Game Theory}, 36(3-4):441--460, 2008.

\bibitem{Kim_2017}
J.~Kim and B.~Kim.
\newblock An asymmetric lottery blotto game with a possible budget surplus and
  incomplete information.
\newblock {\em Economics Letters}, 152:31 -- 35, 2017.

\bibitem{Kovenock_2019_Lottery}
D.~Kovenock and D.~R. Arjona.
\newblock A full characterization of best-response functions in the lottery
  colonel blotto game.
\newblock {\em Economics Letters}, 182:33--36, 2019.

\bibitem{Kovenock_2011}
D.~Kovenock and B.~Roberson.
\newblock A {Blotto} game with multi-dimensional incomplete information.
\newblock {\em Economics Letters}, 113(3):273--275, 2011.

\bibitem{Kovenock_2012}
D.~Kovenock and B.~Roberson.
\newblock Coalitional colonel blotto games with application to the economics of
  alliances.
\newblock {\em Journal of Public Economic Theory}, 14(4):653--676, 2012.

\bibitem{Kovenock_handbook_2012}
D.~Kovenock and B.~Roberson.
\newblock Conflicts with multiple battlefields.
\newblock In M.R. Garfinkel and S.~Skaperdas, editors, {\em The Oxford Handbook
  of the Economics of Peace and Conflict}. Oxford University Press, Oxford,
  2012.

\bibitem{Kovenock_2018}
D.~Kovenock and B.~Roberson.
\newblock The optimal defense of networks of targets.
\newblock {\em Economic Inquiry}, 56(4):2195--2211, 2018.

\bibitem{Kovenock_2020}
D.~Kovenock and B.~Roberson.
\newblock Generalizations of the general {Lotto} and {Colonel} {Blotto} games.
\newblock {\em Economic Theory}, pages 1--36, 2020.

\bibitem{Kvasov_2007}
D.~Kvasov.
\newblock Contests with limited resources.
\newblock {\em Journal of Economic Theory}, 136(1):738--748, 2007.

\bibitem{Macdonell_2015}
S.~T. Macdonell and N.~Mastronardi.
\newblock Waging simple wars: a complete characterization of two-battlefield
  {Blotto} equilibria.
\newblock {\em Economic Theory}, 58(1):183--216, 2015.

\bibitem{Myerson_1993}
R.~B. Myerson.
\newblock Incentives to cultivate favored minorities under alternative
  electoral systems.
\newblock {\em American Political Science Review}, 87(4):856--869, 1993.

\bibitem{Paarporn_2019}
K.~{Paarporn}, R.~{Chandan}, M.~{Alizadeh}, and J.~R. {Marden}.
\newblock Characterizing the interplay between information and strength in
  blotto games.
\newblock In {\em 2019 IEEE 58th Conference on Decision and Control (CDC)},
  pages 5977--5982, 2019.

\bibitem{Paarporn_2022_LCSS}
Keith Paarporn, Rahul Chandan, Mahnoosh Alizadeh, and Jason~R. Marden.
\newblock Asymmetric battlefield uncertainty in general lotto games.
\newblock {\em IEEE Control Systems Letters}, 6:2822--2827, 2022.

\bibitem{Rentschler_2016}
L.~Rentschler and T.~L. Turocy.
\newblock Two-bidder all-pay auctions with interdependent valuations, including
  the highly competitive case.
\newblock {\em Journal of Economic Theory}, 163:435 -- 466, 2016.

\bibitem{Roberson_2006}
B.~Roberson.
\newblock The {Colonel} {Blotto} game.
\newblock {\em Economic Theory}, 29(1):1--24, 2006.

\bibitem{Roberson_2012}
B.~Roberson and D.~Kvasov.
\newblock The non-constant-sum colonel blotto game.
\newblock {\em Economic Theory}, 51(2):397--433, 2012.

\bibitem{Robson_2005}
A.R.W. Robson.
\newblock Multi-item contests.
\newblock {\em Working paper No. 446, The Australian National University},
  2005.

\bibitem{Schwartz_2014}
G.~{Schwartz}, P.~{Loiseau}, and S.~S. {Sastry}.
\newblock The heterogeneous {Colonel} {Blotto} game.
\newblock In {\em Int.\ Conf.\ on NETwork Games, COntrol and OPtimization},
  pages 232--238, Oct 2014.

\bibitem{Shahrivar_2014}
E.~M. {Shahrivar} and S.~{Sundaram}.
\newblock Multi-layer network formation via a {{Colonel} {Blotto}} game.
\newblock In {\em 2014 IEEE Global Conference on Signal and Information
  Processing (GlobalSIP)}, pages 838--841, Dec 2014.

\bibitem{Shishika_2021}
D.~Shishika, Y.~Guan, M.~Dorothy, and V.~Kumar.
\newblock Dynamic defender-attacker blotto game.
\newblock {\em arXiv preprint arXiv:2112.09890}, 2021.

\bibitem{Siegel_2014}
R.~Siegel.
\newblock Asymmetric all-pay auctions with interdependent valuations.
\newblock {\em Journal of Economic Theory}, 153:684 -- 702, 2014.

\bibitem{skaperdas1996contest}
Stergios Skaperdas.
\newblock Contest success functions.
\newblock {\em Economic theory}, 7(2):283--290, 1996.

\bibitem{Szech_2011}
N.~Szech.
\newblock Asymmetric all-pay auctions with two types.
\newblock {\em University of Bonn, Discussion paper, January}, 2011.

\bibitem{Thomas_2018}
C.~Thomas.
\newblock N-dimensional {Blotto} game with heterogeneous battlefield values.
\newblock {\em Economic Theory}, 65(3):509--544, 2018.

\bibitem{Vu_2020thesis}
D.~Q. Vu.
\newblock {\em Models and Solutions of Strategic Resource Allocation Problems:
  Approximate Equilibrium and Online Learning in Blotto Games}.
\newblock PhD thesis, Sorbonne Universites, UPMC University of Paris 6, 2020.

\bibitem{Vu_EC2021}
D.~Q. Vu and P.~Loiseau.
\newblock Colonel blotto games with favoritism: Competitions with
  pre-allocations and asymmetric effectiveness.
\newblock In {\em Proceedings of the 22nd ACM Conference on Economics and
  Computation}, page 862–863, 2021.

\end{thebibliography}

\end{document}